\DeclarePairedDelimiter{\ceil}{\lceil}{\rceil}
\newcommand{\bbN}{\mathbb{N}}
\newcommand{\bbR}{\mathbb{R}}
\newcommand{\bbC}{\mathbb{C}}
\newcommand{\ket}[1]{|#1\rangle}
\newcommand{\bra}[1]{\langle#1|}
\newcommand{\ketbra}[2]{\ket{#1}\!\bra{#2}}
\newtheorem{theorem}{Theorem}[section]
\newtheorem{proposition}[theorem]{Proposition}
\newtheorem{lemma}[theorem]{Lemma}
\newtheorem{corollary}[theorem]{Corollary}
\theoremstyle{definition}
\newtheorem{definition}[theorem]{Definition}
\newcommand{\UW}{{Department of Physics, University of Washington, Seattle, WA 98195, USA}}
\newcommand{\IQUS}{{InQubator for Quantum Simulation (IQuS), Department of Physics, University of Washington, Seattle, WA 98195, USA}}
\newcommand{\UNITN}{{Dipartimento di Fisica, University of Trento, via Sommarive 14, I–38123, Povo, Trento, Italy}}
\newcommand{\UT}{{Department of Computer Science, University of Toronto, Toronto, ON M5S 2E4, Canada}}
\newcommand{\PNNL}{{Pacific Northwest National Laboratory, Richland, WA 99354, USA}}
\title{\protect\parbox{\textwidth}{\protect\centering Hybridized Methods for Quantum Simulation in the Interaction Picture}}
\author{Abhishek Rajput}
\affiliation{\UW}
\author{Alessandro Roggero}
\affiliation{\IQUS}
\affiliation{\UNITN}
\author{Nathan Wiebe}
\affiliation{\UW}
\affiliation{\UT}
\affiliation{\PNNL}
\date{}
\begin{document}

\maketitle

\begin{abstract}
 
     Conventional methods of quantum simulation involve trade-offs that limit their applicability to specific contexts where their use is optimal.  In particular, the interaction picture simulation has been found to provide substantial asymptotic advantages for some Hamiltonians but incurs prohibitive constant factors and is incompatible with methods like qubitization.
     We provide a framework that allows different simulation methods to be hybridized and thereby improve performance for interaction picture simulations over known algorithms. These approaches show asymptotic improvements over the individual methods that comprise them and further make interaction picture simulation methods practical in the near term. Physical applications of these hybridized methods yield a gate complexity scaling as $\log^2 \Lambda$ in the electric cutoff $\Lambda$ for the Schwinger Model and independent of the electron density for collective neutrino oscillations, outperforming the scaling for all current algorithms with these parameters. For the general problem of Hamiltonian simulation subject to dynamical constraints, these methods yield a query complexity independent of the penalty parameter $\lambda$ used to impose an energy cost on time-evolution into an unphysical subspace.
\end{abstract}

\section{Introduction}

Since Feynman's seminal work on the simulation of quantum dynamics with quantum computers~\cite{feynmansimulating}, considerable research has been undertaken on the problem of quantum simulation as it is a major area where quantum computers are expected to outperform classical supercomputers~\cite{lloyd1996universal,aspuru2005simulated,reiher2017elucidating,jordan2012quantum,PhysRevD.101.074038}. The problem of simulation is in effect a compilation problem.  The task in simulation is to generate, for a given Hermitian matrix $H$, evolution time $t$, and error tolerance $\epsilon$ a sequence of quantum gates $U(t)$ such that $\|U(t) - e^{-iHt}\|\le \epsilon$, for an appropriate norm $\|\cdot \|$, and the cost of the sequence of gate operations that comprise $U(t)$ is minimal.  This problem is distinct from ordinary unitary synthesis problems because here we do not explicitly know the matrix elements of $e^{-i Ht}$ and need to construct this unitary only using information about the Hamiltonian $H$.

A variety of simulation methods have been developed to approximate the ideal time-evolution channel. The first, and most space efficient, algorithms are the Trotter-Suzuki formulas and their time-ordered generalizations~\cite{lloyd1996universal,berry2007efficient,wiebe2010higher,Poulin_2011,childs2021theory}, but recent years have seen several additions to the repertoire of quantum simulation techniques. The method of qubitization~\cite{Low_2017,Low_2019,gilyen2019quantum,berry2018improved,poulin2018quantum,martyn2021grand,dong2021} involves the implementation of a walk operator whose eigenvalues are an efficiently computable function of those of $H$ and achieves linear scaling in the simulation time $t$, logarithmic scaling in the inverse error tolerance, and scaling independent of the number of terms in the Hamiltonian.
A major drawback of qubitization is that the method does not apply to time-dependent Hamiltonians. 
Linear combinations of unitaries provides simulation methods~\cite{childs2012hamiltonian,berry2014exponential,kieferova2019simulating,low2018hamiltonian} that address this short coming and allow simulations within the interaction picture at costs that can be exponentially lower than all other known methods~\cite{low2018hamiltonian}; however, these approaches require complicated quantum control logic which can lead to undesirable constant factors~\cite{su2021fault}. 

The quantum stochastic drift protocol~\cite{Campbell_2019}, or qDRIFT, is spiritually related to linear combination of unitaries but uses classically controlled evolutions rather than quantum controlled ones. This approach drifts towards the correct unitary time-evolution with high precision and with a gate complexity independent on the number of terms in the Hamiltonian. qDRIFT was later generalized to the continuous qDRIFT protocol for time-dependent Hamiltonians with an $L^1$-norm scaling in the gate complexity~\cite{berry2020time}. The principal disadvantages of this approach are that it has a larger scaling in the simulation time $t$ compared to other algorithms and does not exploit any commutator structure between the terms of a Hamiltonian.

We develop hybrid algorithms in this paper that combine the various conventional approaches for quantum simulation after moving into the interaction picture (I.P.). This is significant because while the interaction picture simulation method provides the best asymptotic scaling known for many problems, the constant factors involved can make it impractical for many applications~\cite{su2021fault}. We address this by combining algorithms such as qDRIFT and qubitization at different stages of the overall simulation procedure within the interaction picture. Since the interaction picture transformation involves conjugation of Hamiltonian summands $\sum_{k \neq j} H_k$ via $e^{itH_j}$, the unitary invariance of the $L^1$-norm scaling from qDRIFT essentially eliminates the contribution of $H_j$ to the query complexity of the hybrid protocols. A direct application of these methods to physical systems such as the Schwinger Model and collective neutrino oscillations yield improved scaling over current algorithms with respect to certain parameters of interest. The general problem of Hamiltonian simulation constrained to a physical subspace can likewise be efficiently simulated using these algorithms, with a scaling independent of the penalty parameter used to impose an energy cost on projections onto the unphysical subspace. 

{We summarize the scaling of the newly introduced hybrid schemes and compare them to standard approaches in \autoref{tab:scalingA}. These are expressed in terms of the oracle complexity for approximating the time-evolution under a Hamiltonian $H =\sum_{i=1}^L H_i$. For the Trotter/qDRIFT based I.P. methods, we show the asymptotic scaling in terms of queries to oracles $\{W_k\}_{k=1}^L$ implementing $W_k(t)=e^{-iH_kt}$ for any choice of summand $H_k$. For the hybrid qubitization I.P. based methods, the queries are instead to the SELECT/PREPARE oracles (see \Cref{section:qubitization} for details) and the oracle $W_l(t)=e^{-itH_l}$. The latter is specifically used to implement the time evolution of the term $H_l$ to enter the interaction picture, while the oracles $\{W_k\}_{k=1}^L$ above are used to implement all the time-evolutions. The constant $\lambda$ and $\lambda_\alpha$ are obtained by first writing $H$ as a linear combination of unitaries $H=\sum_k \omega_kU_k$ with real $\omega_l>0$. Then we have $\lambda=\sum_k\omega_k$ and $\lambda_\alpha=\sum_{k\neq l}\omega_k=\lambda-\omega_l$. As anticipated above, the hybrid I.P. schemes introduced here can become advantageous when $\lambda_\alpha\ll\lambda$ or $\|H-H_l\|_\infty\ll\|H\|_\infty$, that is, when the Hamiltonian term $H_l$ has a large norm (here and in the rest of the paper, $\|H\|_p$ denotes the Schatten $p$-norm of a matrix. See \Cref{section:notation} for further details).}

\begin{table}[h]
\begin{center}
\begin{tabular}{l|l}
Algorithm & Number of oracle calls to $W_k$ \\&or PREPARE/SELECT and $W_l$ \\ \hline
Trotter~\cite{childs2021theory} & $O \left(\frac{\tilde{\alpha}^{1/p} t^{1 + 1/p}}{\epsilon^{1/p}}\right)$ \\
qDRIFT~\cite{Campbell_2019} & $O\left(\frac{t^2}{\epsilon}\left[\sum_{k=1}^L\|H_k\|_\infty\right]^2\right)$ \\ 
Qubitization~\cite{Low_2017,Low_2019} & $O\left(\lambda t +\frac{\log(1/\epsilon)}{\log(\log(1/\epsilon))}\right)$ \\\\
\hline
Trotter + qDRIFT + I.P. [Cor.~\ref{cor:hybridTrotter}]& $ O\left(\frac{t^2}{\epsilon}\sum_{k\neq l}^L\left[\|H_k\|_\infty^2+\left\|\left[H_k,\sum_{q>k,q\neq l}^L H_q\right]\right\|_\infty\right]\right)$\\
qDRIFT + Qubitization + I.P. [Th.~\ref{thm:IPqDqubitsim}]& $ O\left(\lambda_\alpha t + \left(\frac{\|H- H_l\|_\infty^2 t^2}{\epsilon}\right)\frac{\log(\|H-H_l\|_\infty t/\epsilon)}{\log \log(\|H-H_l\|_\infty t/\epsilon)}\right)$\\ 
\end{tabular}
\caption{Query complexities for standard qDRIFT, Trotter, qubitization, and the hybrid schemes from Corollary~\ref{cor:hybridTrotter} and~\autoref{thm:IPqDqubitsim} where $H = \sum_{i=1}^L H_i$ with $L$ the number of summands in the Hamiltonian $H$, $t$ the simulation time, and $\epsilon$ the simulation error. In the Trotter formula, $p$ is the order of the Trotter formula and $\tilde{\alpha}$ involves sums of commutators nested $p$ times. The query complexity for qDRIFT and Trotter-based algorithms are given in terms of upper bounds for queries to each of the $W_k$ oracles that implement time-evolution under a summand $H_k$. The query complexity for the qubitization methods are given in queries to the oracles $W_l$ implementing time-evolution for the interaction picture transformation, SELECT, and PREPARE. For the latter methods, $H$ is decomposed as a linear combination of unitaries $H=\sum_k \omega_kU_k$ with real $\omega_l>0$. Then $\lambda=\sum_k\omega_k$ and $\lambda_\alpha =\lambda-\omega_l$. The hybrid I.P. schemes can become advantageous when $\lambda_\alpha\ll\lambda$ or $\|H-H_l\|_\infty\ll\|H\|_\infty$.} 
\label{tab:scalingA} 
\end{center}
\end{table}
%

\Cref{section:qsimreview} contains a review of some standard methods of quantum simulation and of the interaction picture. More specifically, in \Cref{section:contqD} we summarize the continuous qDRIFT protocol and the relevant theorems on its query complexity. \Cref{section:qubitization} delves into qubitization and singular value transformations. \Cref{section:trott} contains an overview of Trotterization and a generalization of the first order Trotter-Suzuki formula to time-dependent Hamiltonians. \Cref{section:interactionpic} reviews the interaction picture, the key component of our hybrid protocols. \Cref{section:hybridTrotqD} and \Cref{section:qDRIFTqubithyb} contain the main results on our hybrid protocols with \Cref{section:SchwingerModel}, \Cref{section:neutrinoosc}, and \Cref{section:constraineddynamics} presenting applications of them to the Schwinger Model, collective neutrino oscillations, and constrained Hamiltonian dynamics respectively. The reader can find additional background on the diamond norm in \Cref{section:diamond} and on some of the norm notation used throughout the paper in \Cref{section:notation}.




\section{Standard Methods of Quantum Simulation}
\label{section:qsimreview}

This section contains brief overviews of interaction picture of quantum mechanics and relevant results from standard methods of quantum simulation such as continuous qDRIFT, qubitization, and Trotterization. Those readers already familiar with these topics can skip to \Cref{section:hybridTrotqD}. 

\subsection{Continuous qDRIFT}
\label{section:contqD}

In this subsection, we outline the continuous qDRIFT protocol used to simulate time-dependent Hamiltonians with a scaling depending only on the $L^1$-norm of the Hamiltonian. At its heart is a classical sampling protocol which randomly samples a simulation time $\tau \in [0,t]$ according to a probability distribution and evolves a given state under the time-independent Hamiltonian $H(\tau)$. The probability distribution is chosen such that it is biased towards $\tau$ with large $||H(\tau)||_{\infty}$. The result is a simulation protocol that stochastically drifts towards the ideal unitary time evolution with small error in the diamond norm. 

We present relevant results from \cite{berry2020time} used throughout this paper without proof. Let $H(\tau)$ be a time dependent Hamiltonian defined for $0 \leq \tau \leq t$. Unless otherwise specified, we make the following assumptions of $H(\tau)$:

\begin{enumerate}
\item It is non-zero and continuously differentiable on $[0,t]$ 
\item It is finite dimensional, i.e. $H: [0,t] \rightarrow \bbC^{M\times M}$
\item There exists an oracle $W \colon \bbR^2 \mapsto \bbC^{M\times M}$ such that for any $\tau \in [0,t]$ and $\Delta \in \bbR$, $W(\tau,\Delta) = e^{-iH(\tau)\Delta}$
\end{enumerate} 

The specific implementation of $W$ depends on the simulation protocol in question. For instance, a concrete realization involves ``qubitization oracles" to be discussed later in the paper. For our present purposes, it suffices to assume the existence of such an oracle and analyze the query complexity of algorithms invoking it as a black box.  

The ideal evolution of $H(\tau)$ for time $t$ is given by $E(t,0) = \exp_{\mathcal{T}}(-i \int_0^t d\tau H(\tau))$ and the quantum channel corresponding to this is
\begin{equation}
\mathcal{E}(t,0) = E(t,0) \rho E^{\dag}(t,0) = \exp_{\mathcal{T}} \bigg(-i \int_0^t d\tau H(\tau) \bigg) \rho \exp_{\mathcal{T}}^{\dag}\bigg(-i \int_0^t d\tau H(\tau) \bigg)\;, \label{eq:idealChannel} 
\end{equation}
where the subscript $\mathcal{T}$ in $\exp_{\mathcal{T}}$ denotes the time-ordered exponential. Generalizations of these channels to non-zero initial times can be accomplished simply by changing the limits of integration.

Since it is difficult in practice to implement the ideal channel due to the presence of time-ordered exponentials, we can instead approximate it by a mixed unitary channel defined by
\begin{equation}
\mathcal{U}(t,0)(\rho) = \int_0^t d\tau \ p(\tau) e^{-i\frac{H(\tau)}{p(\tau)}} \rho e^{i\frac{H(\tau)}{p(\tau)}}\;, \label{eq:qdChannel}
\end{equation}
where $$p(\tau) \coloneqq \frac{||H(\tau)||_{\infty}}{||H||_{\infty, 1}}$$ is a probability density function defined for $0 \leq \tau \leq t$ and $$||H||_{\infty, 1} \coloneqq \int_0^t d \tau \|H\|_{\infty}\;.$$ 

(i.e. the outermost subscript indicates an $L^1$ norm while the innermost subscript indicates a Schatten infinity norm). This channel can be implemented via a classical sampling protocol and has the following features:

\begin{enumerate}[(a)]
    \item $p(\tau)$ is biased towards those $\tau \in [0,t]$ with large $||H(\tau)||_{\infty}$
    \item $p(\tau)$ decreases with the evolution time $t$ since $||H(\tau)||_{\infty,1}$ involves an integral over $[0,t]$ 
    \item With a time $\tau_i \in [0,t]$ obtained from sampling $p(\tau)$, we can query the oracle $W$ cited above by inputting $W(\tau_i,p(\tau_i)^{-1})$ to obtain an implementation of the unitary time-evolution operator $e^{-iH(\tau_i)/p(\tau_i)}$  
\end{enumerate}

This classical sampling protocol and the unitary channel~\eqref{eq:qdChannel} implemented by it is denoted by ``continuous qDRIFT". We assume the spectral norm $||H||_{\infty}$ or an upper bound is already known and that we can efficiently sample from $p(\tau)$. We then have the following theorem when the simulation time $t$ is assumed to be sufficiently small:

\begin{theorem}[$L^1$-norm error bound for continuous qDRIFT, short-time version]
\label{thm:shortsimqD}
Let $H(\tau)$ be a time-dependent Hamiltonian defined for $0 \leq \tau \leq t$ and satisfying conditions $1$ and $2$ above. Define $\mathcal{E}(t,0)$ and $\mathcal{U}(t,0)(\rho)$ as in equations $(1)$ and $(2)$ respectively. Then 
\begin{equation}
    ||\mathcal{E}(t,0) - \mathcal{U}(t,0)||_{\diamond} \leq 4||H||^2_{\infty,1}\;. \label{eq:L1normshort}
\end{equation}
\end{theorem}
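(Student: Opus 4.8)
The plan is to view both channels as perturbations of the identity and compare them order by order in the small parameter $\|H\|_{\infty,1}$. Let $\mathcal{L}_\tau$ denote the Liouvillian $\mathcal{L}_\tau(\rho) = -i[H(\tau),\rho]$ and set $\bar{\mathcal{L}} \coloneqq \int_0^t \mathcal{L}_\tau\, d\tau$. I would first record two elementary superoperator estimates. From H\"older's inequality for Schatten norms, $\|[A,X]\|_1 \le 2\|A\|_\infty \|X\|_1$, and the fact that tensoring with the identity on a reference system does not change $\|H(\tau)\|_\infty$, one gets $\|\mathcal{L}_\tau\|_\diamond \le 2\|H(\tau)\|_\infty$. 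Second, for Hermitian $G$ one has the second-order Taylor bound $\|e^{-iG}\rho e^{iG} - \rho + i[G,\rho]\|_1 \le 2\|G\|_\infty^2\,\|\rho\|_1$, obtained by twice differentiating $s \mapsto e^{-isG}\rho e^{isG}$, using $\|f''(s)\|_1 \le 4\|G\|_\infty^2 \|\rho\|_1$, and inserting this into the integral remainder $f(1)-f(0)-f'(0)=\int_0^1 (1-s) f''(s)\,ds$. The crucial observation is that the zeroth- and first-order terms of $\mathcal{E}(t,0)$ and $\mathcal{U}(t,0)$ coincide: $\int_0^t p(\tau)\,d\tau = 1$ removes the zeroth order, and $\int_0^t p(\tau)\cdot\big({-}i\,H(\tau)/p(\tau)\big)\,d\tau = -i\int_0^t H(\tau)\,d\tau$ reproduces exactly the first-order Dyson term, i.e. $\bar{\mathcal{L}}$ acting on $\rho$. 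Hence $\mathcal{E}(t,0) - \mathcal{U}(t,0) = \big(\mathcal{E}(t,0) - \mathcal{I} - \bar{\mathcal{L}}\big) - \big(\mathcal{U}(t,0) - \mathcal{I} - \bar{\mathcal{L}}\big)$, and it suffices to bound each bracketed remainder in diamond norm by $2\|H\|_{\infty,1}^2$ and use the triangle inequality.

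For the qDRIFT channel I would write $\mathcal{U}(t,0) - \mathcal{I} - \bar{\mathcal{L}} = \int_0^t p(\tau)\,\mathcal{R}_\tau\,d\tau$ with $\mathcal{R}_\tau(\rho) = e^{-iH(\tau)/p(\tau)}\rho\, e^{iH(\tau)/p(\tau)} - \rho + \tfrac{i}{p(\tau)}[H(\tau),\rho]$, apply the Taylor bound above with $G = H(\tau)/p(\tau)$ (lifted to the reference system) to get $\|\mathcal{R}_\tau\|_\diamond \le 2\|H(\tau)\|_\infty^2/p(\tau)^2$, and then exploit the cancellation $p(\tau)\cdot\|H(\tau)\|_\infty^2/p(\tau)^2 = \|H(\tau)\|_\infty^2/p(\tau) = \|H(\tau)\|_\infty\,\|H\|_{\infty,1}$, which integrates over $[0,t]$ to $\|H\|_{\infty,1}^2$; so $\|\mathcal{U}(t,0) - \mathcal{I} - \bar{\mathcal{L}}\|_\diamond \le 2\|H\|_{\infty,1}^2$. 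For the ideal channel I would differentiate in $t$, using $\tfrac{d}{dt}(E(t,0)\rho E^\dagger(t,0)) = -i[H(t), E(t,0)\rho E^\dagger(t,0)]$, to get the superoperator Duhamel identity $\tfrac{d}{dt}\mathcal{E}(t,0) = \mathcal{L}_t\circ\mathcal{E}(t,0)$ with $\mathcal{E}(0,0)=\mathcal{I}$; iterating it once yields $\mathcal{E}(t,0) - \mathcal{I} - \bar{\mathcal{L}} = \int_0^t\!\!\int_0^{s}\mathcal{L}_{s}\,\mathcal{L}_{s'}\,\mathcal{E}(s',0)\,ds'\,ds$. Bounding this with submultiplicativity of $\|\cdot\|_\diamond$, the identity $\|\mathcal{E}(s',0)\|_\diamond = 1$ (it is CPTP), and $\|\mathcal{L}_s\|_\diamond \le 2\|H(s)\|_\infty$ gives $\int_0^t\!\!\int_0^{s} 4\|H(s)\|_\infty\|H(s')\|_\infty\,ds'\,ds = 2\|H\|_{\infty,1}^2$. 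Adding the two contributions gives the claimed $4\|H\|^2_{\infty,1}$.

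The bookkeeping above is routine (H\"older plus Fubini); the two places that need care are: (i) carrying out every $\|\cdot\|_\diamond$ estimate for the channels acting jointly with a reference system — this is exactly why $\|\mathcal{L}_\tau\|_\diamond \le 2\|H(\tau)\|_\infty$ rather than $\|H(\tau)\|_\infty$, and why the Taylor estimate must be phrased as a $\|\cdot\|_1$ bound on the enlarged space; and (ii) the cancellation in the qDRIFT step. I expect (ii) to be the main conceptual content: the whole reason for dividing each sampled $H(\tau)$ by $p(\tau)$ and re-weighting the mixture by $p(\tau)$ is that it makes the zeroth- and first-order terms match $\mathcal{E}(t,0)$ exactly and converts what would be an uncontrolled $1/p(\tau)^2$ in the second-order remainder into a single factor $1/p(\tau) = \|H\|_{\infty,1}/\|H(\tau)\|_\infty$, so that the integrand collapses to $\|H(\tau)\|_\infty\,\|H\|_{\infty,1}$. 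Everything downstream of that is the two-line integral computation. In this approach no genuine smallness of $t$ enters the inequality itself; the hypothesis that $t$ is ``sufficiently small'' merely reflects that $4\|H\|^2_{\infty,1}$ is informative only when $\|H\|_{\infty,1}$ is below roughly $1/\sqrt{2}$, past which it exceeds the trivial diamond-distance bound of $2$.
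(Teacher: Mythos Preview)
Your argument is correct. Note, however, that the paper does not actually supply a proof of this theorem: it is quoted verbatim from \cite{berry2020time} with the explicit remark that the relevant results are presented ``without proof.'' What you have written is essentially the proof given in that reference --- match $\mathcal{E}$ and $\mathcal{U}$ through first order in the Dyson/Taylor expansions, then bound each second-order remainder by $2\|H\|_{\infty,1}^2$ using $\|\mathcal{L}_\tau\|_\diamond \le 2\|H(\tau)\|_\infty$ and the cancellation $p(\tau)\cdot \|H(\tau)\|_\infty^2/p(\tau)^2 = \|H(\tau)\|_\infty\,\|H\|_{\infty,1}$ --- so there is nothing to contrast.
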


(See \Cref{section:diamond} for information about the diamond norm for quantum channels). When the simulation time $t$ is large, we will need to divide the simulation interval $[0,t]$ into sub-intervals $[t_j,t_{j+1}]$ where $0 = t_0 < t_1 < \cdots < t_r = t$ and apply the continuous qDRIFT protocol within each to control the simulation error. In these cases, we have a ``long-time" version of \autoref{thm:shortsimqD}:

\begin{theorem} ($L^1$-norm error bound for continuous qDRIFT for long simulation time)
\label{thm:longsimqD}
Let $H(\tau)$ be a time-dependent Hamiltonian defined for $0 \leq \tau \leq t$ and satisfying conditions $1$ and $2$ above. Define $\mathcal{E}(t,0)$ and $\mathcal{U}(t,0)(\rho)$ as in $(1)$ and $(2)$ respectively. For any positive integer $r$, there exists a division $0 = t_0 < t_1 < \cdots < t_r = t$
\begin{equation}
    \left \| \mathcal{E}(t,0) - \prod_{j=0}^{r-1} \mathcal{U}(t_{j+1},t_j) \right \|_{\diamond} \leq 4\frac{\|H\|^2_{\infty,1}}{r} \;.\label{eq:L1normlong}
\end{equation}
To ensure the simulation error is at most $\epsilon$, it suffices to choose $$r \geq 4 \ceil*{\frac{\|H\|^2_{\infty,1}}{\epsilon}}\;.$$ 
\end{theorem}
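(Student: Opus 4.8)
The plan is to bootstrap the short-time estimate of \autoref{thm:shortsimqD} by choosing the partition $0 = t_0 < t_1 < \cdots < t_r = t$ so that every subinterval carries exactly a $1/r$ fraction of the total $L^1$-norm weight, and then to glue the resulting per-subinterval errors together using subadditivity of the diamond norm under composition of channels.

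First I would build the partition. Set $F(s) \coloneqq \int_0^s \|H(\tau)\|_\infty\,d\tau$. Since $H$ is continuous on $[0,t]$ (condition~1), $F$ is continuous and non-decreasing with $F(0)=0$ and $F(t)=\|H\|_{\infty,1}$. By the intermediate value theorem, for each $j\in\{0,\dots,r\}$ there is a point $t_j\in[0,t]$ with $F(t_j)=\tfrac{j}{r}\|H\|_{\infty,1}$; taking, say, the least such preimage for each $j$ and discarding any degenerate repeats handles the case where $\|H(\tau)\|_\infty$ vanishes on a subinterval and yields a genuine increasing partition. By construction the local $L^1$-norm on $[t_j,t_{j+1}]$ equals $F(t_{j+1})-F(t_j)=\|H\|_{\infty,1}/r$.

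Next, \autoref{thm:shortsimqD} applies verbatim on each $[t_j,t_{j+1}]\subseteq[0,t]$ (conditions~1 and~2 are inherited), giving
\[
\bigl\|\mathcal{E}(t_{j+1},t_j)-\mathcal{U}(t_{j+1},t_j)\bigr\|_\diamond \;\le\; 4\Bigl(\tfrac{\|H\|_{\infty,1}}{r}\Bigr)^{2} \;=\; \tfrac{4\|H\|_{\infty,1}^{2}}{r^{2}}.
\]
Using the composition (cocycle) property of the time-ordered exponential, $\mathcal{E}(t,0) = \mathcal{E}(t_r,t_{r-1})\circ\cdots\circ\mathcal{E}(t_1,t_0)$, so both the exact channel $\mathcal{E}(t,0)$ and the approximation $\prod_{j=0}^{r-1}\mathcal{U}(t_{j+1},t_j)$ are compositions of $r$ CPTP maps. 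A standard hybrid-argument telescoping --- insert the $\mathcal{U}$ blocks one at a time and use that the diamond norm is submultiplicative under composition and equals $1$ on CPTP maps --- then gives
\[
\Bigl\|\mathcal{E}(t,0)-\prod_{j=0}^{r-1}\mathcal{U}(t_{j+1},t_j)\Bigr\|_\diamond \;\le\; \sum_{j=0}^{r-1}\bigl\|\mathcal{E}(t_{j+1},t_j)-\mathcal{U}(t_{j+1},t_j)\bigr\|_\diamond \;\le\; r\cdot\tfrac{4\|H\|_{\infty,1}^{2}}{r^{2}} \;=\; \tfrac{4\|H\|_{\infty,1}^{2}}{r},
\]
which is the claimed bound \eqref{eq:L1normlong}. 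For the last statement, requiring $4\|H\|_{\infty,1}^{2}/r\le\epsilon$ amounts to $r\ge 4\|H\|_{\infty,1}^{2}/\epsilon$, and since $4\ceil*{\|H\|_{\infty,1}^{2}/\epsilon}\ge 4\|H\|_{\infty,1}^{2}/\epsilon$ the stated choice of $r$ suffices.

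The only genuinely delicate point is the construction of the partition when $\|H(\tau)\|_\infty$ is not strictly positive everywhere: the intermediate value theorem still supplies the cut points, but one must select them consistently so that the partition is strictly increasing and each block carries weight exactly $\|H\|_{\infty,1}/r$. Everything else is routine --- the per-block estimate is a direct invocation of \autoref{thm:shortsimqD}, and the gluing is the usual telescoping bound for the diamond distance between compositions of quantum channels.
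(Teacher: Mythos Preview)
Your proof is correct. The paper itself does not supply a proof of this theorem; it is quoted without proof as a result from~\cite{berry2020time}, and your argument --- equal-$L^1$-weight partition via the intermediate value theorem, per-block application of \autoref{thm:shortsimqD}, and a diamond-norm telescoping over CPTP maps --- is precisely the standard proof given in that reference.
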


The value of $r$ above can also be interpreted as the query complexity of the continuous qDRIFT protocol, i.e. the number of queries to the oracle $W$ needed to implemented channel~\eqref{eq:qdChannel} and satisfy~\eqref{eq:L1normlong} with error less than $\epsilon$.

For additional information on the diamond norm and notation used in these results, the reader may consult \autoref{section:diamond} and \autoref{section:notation}.


\subsection{Qubitization and Singular Value Transformations}
\label{section:qubitization}

Having considered continuous qDRIFT, we now briefly review the basics of the qubitization simulation protocol which we seek to combine with the former. We will also frame qubitization as an example of the general notion of the block-encoding of non-unitary matrices within larger unitary ones. 

Qubitization is a method of Hamiltonian simulation involving the synthesis of the time-evolution operator $e^{iHt}$, where $H$ is a time-independent Hamiltonian, via the implementation of a walk operator $\mathcal{W}(H)$ whose eigenvalues are an efficiently computable function of those of $H$. Assuming that we have decomposed $H$ as a linear combination of unitary matrices, the desired walk operator can implemented with the so-called SELECT and PREPARE qubitization oracles. The spectrum can then be transformed efficiently using techniques involving singular value transformations which transform the singular values of an operator by a polynomial function~\cite{Low_2017,Low_2019}. 

Block-encoding refers to the embedding of a non-unitary matrix $H$ into a larger unitary $U$, typically as the upper-left block of $U$. Once a block-encoding is achieved, a quantum circuit can be expressed in terms of $U$. This greatly broadens the applicability of quantum computers, particularly in the domain of the simulation of unitary quantum dynamics. We largely follow the treatments in \cite{dong2021,Babbush_2018}.

Let $H \in \text{End}(\bbC^N)$, where $N = 2^n$, be a Hermitian operator. Suppose there exists an $(m+n)$-qubit unitary matrix $U_H \in \text{End}(\bbC^{MN})$, where $M = 2^m$, such that $$U_H = \begin{pmatrix}
H/\alpha & \cdot \\
\cdot & \cdot
\end{pmatrix}\;,$$
where $\alpha>0$ is a known normalization constant.
We may then get access to $H/\alpha$ by $$H = (\bra{0}^m \otimes I_n) U_H (\ket{0}^m \otimes I_n)\;.$$ 

To quantify how ``close" the block encoded matrix is to the original one, we introduce the following general definition. This definition can also be extended to the case of block-encodings within superoperators, which we will need to consider for the proofs of some later theorems: 

\begin{definition}[Block Encoding]
\label{def:blockencode}
    Suppose that $A$ is an $n$-qubit operator, $\alpha, \varepsilon \in \bbR_{+}$, and $m \in \bbN$. We then say that the $(m+n)$-qubit unitary $U_H$ is a $(\alpha, m, \varepsilon)$-block-encoding of $A$ if $$\|A - \alpha(\bra{S} \otimes I_n)U_H(\ket{S} \otimes I_n) \|_{\infty} \leq \varepsilon\;.$$ where $\ket{S}$ is an $m$-qubit state.
    
    Similarly, we say that a quantum channel $\Lambda$ is a $(\alpha, m, \varepsilon)$-block-encoding of $A$ if \[\max_{\rho} \|A \rho A^{\dag} - \alpha(\bra{T} \otimes I_n)\Lambda(\ket{T}\bra{T} \otimes \rho)(\ket{T} \otimes I_n)\|_{\infty} \leq \epsilon,\] where the maximization is over density matrices $\rho$ and $\ket{T}$ is an $m$-qubit state. 
\end{definition}

Here $\ket{S}$ or $\ket{T}$ are referred to as the ``signal state". The previous example involving $H$ is a $(1,m,0)$-encoding where $\ket{S} = \ket{0}^m$. 

Now suppose we are given a time-independent $H$. $H$ can be decomposed into a linear combination of unitary operators 
\begin{equation}
   H = \sum_{l=0}^{L-1} w_l H_l, \ \ \ w_l \in \bbR^+_0, \ \ \ H_l^2 = I \;. \label{eq:hamiltLCU}
\end{equation}  

Here we assume that any complex phases are absorbed into $U_l$. The two oracles used are a preparation oracle whose action on $\ket{0}^{\log L}$ is defined as follows:
\begin{equation}
    \text{PREPARE}\;\ket{0}^{\log L} = \sum_{l=0}^{L-1} \sqrt{\frac{w_l}{\lambda}}\ket{l} = \ket{\mathcal{L}}\;,\label{eq:prepDef}
\end{equation} 
where $$\lambda = \sum_l w_l\;,$$ and a selection oracle whose action on an ancilla register $\ket{l}$ and system register $\ket{\Psi}$ is as follows:
\begin{align} 
    \text{SELECT} = \sum_{l=0}^{L-1} \ket{l}\bra{l} \otimes H_l\;, \label{eq:selDef}\\ 
    \text{SELECT}\;\ket{l}\ket{\Psi} \mapsto \ket{l}H_l \ket{\Psi}\;.\label{eq:selAct}
\end{align}

In other words, the SELECT oracle ``selects" a unitary $H_l$ conditioned on the state of the ancilla register $\ket{l}$. Using~\eqref{eq:selDef} and~\eqref{eq:selAct}, it can be shown that $\text{SELECT}$ squares to the identity operator and can therefore be considered as a ``reflection" operator. Note that we also have the following result for the action of SELECT on $\ket{\mathcal{L}}$:
\begin{equation}
    (\bra{\mathcal{L}} \otimes I)(\text{SELECT})(\ket{\mathcal{L}} \otimes I) = \frac{1}{\lambda} \sum_l w_l H_l = \frac{H}{\lambda}\;. \label{eq:selElem}
\end{equation}

The previous equation is a condition for qubitization and oracles that satisfy this condition are referred to as ``qubitization oracles" \cite{Low_2019}. If we define $$U_H = (\text{PREPARE}^{\dag} \otimes I)(\text{SELECT})(\text{PREPARE} \otimes I)\;,$$ it follows from~\eqref{eq:selElem} that $U_H$ is a $(\|w\|_1, \log L, 0)$-block encoding of $H$, where $\|w\|_1 = \sum_l |w_l|$. 

The desired walk operator, also known as the ``iterate", can now be defined as follows:
\begin{equation}
    \mathcal{W} = \mathcal{R}_L \cdot \text{SELECT}, \ \ \ \ \mathcal{R}_L = (2\ket{\mathcal{L}}\bra{\mathcal{L}} \otimes I - I)\;.
    \label{eq:walk_op}
\end{equation}

$\mathcal{W}$ is of the form of a Szegedy walk operator since it is the composition of two reflections. From a lemma by C. Jordan on the common invariant subspaces of two reflections \cite{Jordan1875}, it follows that the Hilbert space of the system decomposes under the action of $\mathcal{W}$ into a direct sum of 1 and 2-dimensional irreducible subspaces, where the latter is spanned by $\ket{\mathcal{L}}\ket{k}$ and an orthogonal state $\ket{\phi_k}$. Here, $\ket{k}$ is an eigenstate of $H$ with eigenvalue $E_k$ and $\ket{\phi_k}$ is the component of $\mathcal{W}\ket{\mathcal{L}}\ket{k}$ orthogonal to $\ket{\mathcal{L}}\ket{k}$. Using~\eqref{eq:selElem}, this can be expressed as 
\begin{equation}
    \ket{\phi_k} = \frac{(I - \ket{\mathcal{L}}\bra{\mathcal{L}} \otimes \ket{k}\bra{k}) \cdot \text{SELECT}\ket{\mathcal{L}}\ket{k}}{||(I - \ket{\mathcal{L}}\bra{\mathcal{L}} \otimes \ket{k}\bra{k}) \cdot \text{SELECT}\ket{\mathcal{L}}\ket{k}||} 
    = \frac{(\text{SELECT} - \frac{E_k}{\lambda}I)\ket{\mathcal{L}}{\ket{k}}}{\sqrt{1-(\frac{E_k}{\lambda})^2}}\;. \label{eq:seltopright}
\end{equation}

In the 2-dimensional subspaces, $\mathcal{W}$ acts as a rotation whereas on the 1-dimensional subspaces, it has $\pm 1$ eigenvalues. The matrix elements of $\mathcal{W}$ within a two-dimensional subspace can be computed using the above relations. Using~\eqref{eq:selElem}, the top-left entry is $$\bra{k}\bra{\mathcal{L}}\mathcal{W}\ket{\mathcal{L}}\ket{k} = \frac{E_k}{\lambda}\;,$$ and the upper-right entry using~\eqref{eq:seltopright} is $$\bra{k}\bra{\mathcal{L}}\mathcal{W}\ket{\phi_k} = \sqrt{1 - \bigg(\frac{E_k}{\lambda}\bigg)^2}\;.$$ 

The other elements can be computed in an analogous way and we obtain for the form of the 2-dimensional blocks of $\mathcal{W}$

\begin{equation}
 \begin{bmatrix}
    \frac{E_k}{\lambda} & \sqrt{1-\bigg(\frac{E_k}{\lambda}\bigg)^2} \\
    -\sqrt{1-\bigg(\frac{E_k}{\lambda}\bigg)^2} & \frac{E_k}{\lambda}
  \end{bmatrix}
  = e^{i \arccos(E_k/\lambda)Y}\;. \label{eq:walkOPmatrix}
\end{equation}

The controlled walk operator can be implemented using the circuit in \autoref{fig:walkCirc} \cite{Babbush_2018}. It is clear from this that $\mathcal{W}$ requires one query to SELECT and at most two queries to PREPARE to implement. The controlled-SELECT operation can be approximated as requiring the same gate complexity to implement as the SELECT operation.

\begin{figure}[htp]
    \[
            \Qcircuit @C=1em @R=.5em {
            & \qw & \ctrl{1} & \qw &&&&& \qw  & \ctrl{1} & \ctrl{1} & \qw \\
            \lstick{\ket{\alpha}} & {/} \qw & \multigate{1}{\mathcal{W}} & \qw & = &&& \lstick{\ket{\alpha}} & {/} \qw & \multigate{1}{\text{SELECT}}  & \multigate{1}{\mathcal{R_L}} & \qw & = \\
            \lstick{\ket{\psi}} & {/} \qw & \ghost{\mathbb{W}} & \qw &&&& \lstick{\ket{\psi}} & {/} \qw  & \ghost{\text{SELECT}} & \ghost{\mathcal{R_L}} & \qw 
             }
    \]
    \[
            \Qcircuit @C=1em @R=.5em {
            & \qw & \ctrl{1} & \qw & \gate{Z} & \qw  & \qw \\
            \lstick{\ket{\alpha}} & {/} \qw & \multigate{1}{\text{SELECT}} & \gate{\text{PREPARE}^{\dag}} & \ctrlo{-1}  & \gate{\text{PREPARE}} & \qw \\
            \lstick{\ket{\psi}} & {/} \qw & \ghost{\text{SELECT}} & \qw & \qw  & \qw & \qw 
            }
    \]
\caption{Controlled-walk operator in terms of the SELECT and PREPARE oracles} 
\label{fig:walkCirc} 
\end{figure}
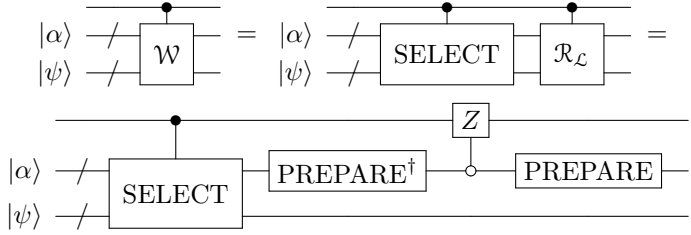

Note that if the condition that $H_l^2 = I$ in~\eqref{eq:hamiltLCU} does not hold, we no longer have the interpretation of SELECT acting like a reflection operator. It then follows that $\mathcal{W}$ cannot be interpreted as a Szegedy walk operator and we can no longer apply Jordan's lemma to it. However, we can still define $\mathcal{W}$ as in~\eqref{eq:walk_op} and the subsequent computations involving the calculation of matrix elements of $\mathcal{W}$ when restricted to the two-dimensional subspace spanned by the orthogonal states $\ket{\mathcal{L}}\ket{k}$ and $\ket{\phi_k}$ remain unaffected. It can still be shown that the Hilbert space decomposes as a direct sum of such 2 dimensional irreducible subspaces as $\mathcal{W}$ does not take vectors within the subspace outside of it.  

The $\arccos$ in~\eqref{eq:walkOPmatrix} can be efficiently inverted to recover the original spectrum of $H$ via techniques involving singular value transformations and quantum signal processing. The impetus for the development of the general formalism of singular value transformations was the Quantum Signal Processing techniques introduced by Low et al. \cite{Low_2016}. They considered the following problem: if one applies a gate sequence of the form $$e^{i \phi_0 \sigma_z} e^{i \theta \sigma_x} e^{i \phi_1 \sigma_z} e^{i \theta \sigma_x} \cdots e^{i \theta \sigma_x} e^{i \phi_k \sigma_z}\;,$$ for unknown $\theta$, where $e^{i \theta \sigma_x}$ is the ``signal unitary" and where we have control over the angles $\phi_0, \cdots,\phi_k$, what unitary operators can be constructed in this manner? This problem lies at the heart of ``Quantum Signal Processing". 
 
The answer to this problem is given in Theorem 3 of \cite{gilyen2019quantum} and involves polynomial transformations of the entries of the signal unitary. This idea behind Quantum Signal Processing can be generalized to situations where we apply an arbitrary unitary $U$ between phase operators. It can be shown that this induces polynomial transformations to the singular values of a particular block of the unitary $U$. In the application to qubitization we are concerned with, Quantum Signal Processing can be applied to the two-dimensional invariant subspaces of the walk operator $\mathcal{W}$. 

As we saw in \Cref{section:qubitization}, qubitization exploits a lemma by C. Jordan's on the invariant subspaces of two reflection operations and the decomposition of the entire vector space into a direct sum of those subspaces. One of the reflections in the lemma can be replaced by a phase gate in the context of quantum search algorithms \cite{Low_2016}. In \cite{gilyen2019quantum}, the other reflection is replaced by an arbitrary unitary $U$ and the invariant subspaces in question are those arising from the singular value decomposition of a block of the unitary matrix. For our purposes, we only need the following results. 

\begin{definition}[Theorem 17 of~\cite{gilyen2019quantum}]
    Let $\mathcal{H}_U$ be a finite-dimensional Hilbert space and $U, \Pi, \tilde{\Pi} \in \text{End}(\mathcal{H}_U)$ be linear operators on $\mathcal{H}_U$ such that $U$ is unitary and $\Pi, \tilde{\Pi}$ are orthogonal projectors. Let $\Phi \in \bbR^n$. Then we define the phased alternating sequence $U_{\Phi}$ as follows 

\[ U_{\Phi} \coloneqq \begin{cases} 
      e^{i \phi_1(2\Pi - I)}U \prod_{j=1}^{(n-1)/2}(e^{i \phi_{2j}(2\Pi - I)} U^{\dag} e^{i \phi_{2j+1}(2\tilde{\Pi}-I)}U) & \text{if n is odd} \\
      \prod_{j=1}^{n/2} (e^{i \phi_{2j-1}(2\Pi - I)}U^{\dag} e^{i \phi_{2j}(2\tilde{\Pi} - I)}U) & \text{if n is even}
   \end{cases}\;.
\]
\end{definition}
\Cref{fig:phasemodcirc} shows the circuit implementation of the alternating phase modulation sequence for even $n$.
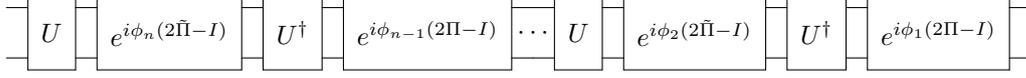
\begin{figure}[ht]
\centering 
    \[
           \Qcircuit @C=.8em @R=.5em {
            & \multigate{2}{U} & \multigate{2}{e^{i\phi_n(2\tilde{\Pi} - I)}} & \multigate{2}{U^{\dag}} & \multigate{2}{e^{i\phi_{n-1}(2\Pi - I)}} & \qw &  \multigate{2}{U} & \multigate{2}{e^{i\phi_2(2\tilde{\Pi} -I)}} & \multigate{2}{U^{\dag}} & \multigate{2}{e^{i\phi_1(2\Pi -I)}} & \qw \\
            & & & & & \cdots & & & & & & & \\
            & \ghost{U} & \ghost{e^{i\phi_n(2\Pi - I)}} & \ghost{U^{\dag}} & \ghost{e^{i\phi_{n-1}(2\Pi - I)}} & \qw  & \ghost{U} & \ghost{e^{i\phi_2(2\Pi -I)}} & \ghost{U^{\dag}} & \ghost{e^{i\phi_1(2\Pi -I)}} & \qw \\
            }
    \]
\caption{Circuit for $U_{\Phi}$ when $n$ is even}
\label{fig:phasemodcirc} 
\end{figure}

\begin{theorem}[Quantum Singular Value Transformation: Theorem 17 of~\cite{gilyen2019quantum}]

Let $\mathcal{H}_U$ be a finite-dimensional Hilbert space and let $U, \Pi, \tilde{\Pi} \in \mathrm{End}(\mathcal{H}_U)$ be linear operators on $\mathcal{H}_U$ such that $U$ is unitary, and $\Pi, \tilde{\Pi}$ are orthogonal projectors. Let $P \in \bbC[x]$ and $\Phi \in \bbR^n$. Then
    
\[ P^{(SV)}(\tilde{\Pi}U\Pi) = \begin{cases} 
      \tilde{\Pi}U_{\Phi}\Pi & \text{if n is odd} \\
      \Pi U_{\Phi} \Pi & \text{if n is even}
   \end{cases}\;,
\]
    where $P^{(SV)}$ is a polynomial of degree at most $n$ that performs a singular value transformation on the operator to which it is applied.
\end{theorem}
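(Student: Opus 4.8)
The plan is to reduce the statement to single‑qubit quantum signal processing by exploiting a Jordan‑type decomposition of $\mathcal{H}_U$ induced by the two projectors $\Pi,\tilde{\Pi}$, entirely parallel to the walk‑operator analysis in \Cref{section:qubitization}. First I would take the singular value decomposition of the ``off‑diagonal block'' $\tilde{\Pi}U\Pi = \sum_i \sigma_i\,\ket{\tilde{\psi}_i}\!\bra{\psi_i}$, with $\ket{\psi_i}\in\mathrm{range}(\Pi)$, $\ket{\tilde{\psi}_i}\in\mathrm{range}(\tilde{\Pi})$ and $\sigma_i\in[0,1]$, and show that the spaces $\mathrm{span}\{\ket{\psi_i},\,U^{\dagger}\ket{\tilde{\psi}_i}\}$ and $\mathrm{span}\{\ket{\tilde{\psi}_i},\,U\ket{\psi_i}\}$ are two‑dimensional $U$‑covariant blocks (for $0<\sigma_i<1$), and that together with the leftover one‑dimensional pieces — vectors in $\mathrm{range}(\Pi)$ or $\mathrm{range}(\tilde{\Pi})$ annihilated by the projected operator, and vectors orthogonal to both ranges — they give an orthogonal decomposition of $\mathcal{H}_U$ into subspaces preserved by $U$, $U^{\dagger}$, $\Pi$ and $\tilde{\Pi}$ simultaneously. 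This is the genuinely structural input; everything afterwards is block‑diagonal.

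Second, I would compute the restriction of all the relevant operators to one such two‑dimensional block. In the natural bases one finds that $U$ acts as a rotation of the form $\begin{pmatrix}\sigma_i & \sqrt{1-\sigma_i^2}\\ -\sqrt{1-\sigma_i^2} & \sigma_i\end{pmatrix}$, exactly as in \eqref{eq:walkOPmatrix}, i.e.\ $e^{i\arccos(\sigma_i)Y}$ up to inessential phases, while each reflection $e^{i\phi(2\Pi-I)}$ restricts on the ``$\Pi$‑side'' to $e^{i\phi Z}$ and $e^{i\phi(2\tilde{\Pi}-I)}$ restricts on the ``$\tilde{\Pi}$‑side'' to $e^{i\phi Z}$. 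Therefore the restriction of $U_{\Phi}$ to a block is precisely a product of $Z$‑rotations with fixed phases $\Phi$ interleaved with the $\sigma_i$‑independent ``signal'' rotation $e^{i\arccos(\sigma_i)Y}$ — the canonical quantum‑signal‑processing ansatz in a single qubit.

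Third, I would invoke the single‑qubit quantum signal processing characterization (the precursor result in the same reference describing which polynomials are achievable by such alternating phase sequences): for a suitably chosen $\Phi\in\bbR^n$ the relevant matrix element of the single‑qubit sequence equals $P(\sigma_i)$, with $P\in\bbC[x]$ of degree at most $n$ and of parity matching $n$. Reading this back through the decomposition: for odd $n$ the sandwiched word $\tilde{\Pi}U_{\Phi}\Pi$ acts on block $i$ as $P(\sigma_i)\ket{\tilde{\psi}_i}\!\bra{\psi_i}$, which is by definition $P^{(SV)}(\tilde{\Pi}U\Pi)$; for even $n$ the word $\Pi U_{\Phi}\Pi$ acts as $P(\sigma_i)\ket{\psi_i}\!\bra{\psi_i}$, the even singular value transform. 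One then checks the one‑dimensional blocks (corresponding to $\sigma_i\in\{0,1\}$ together with the orthogonal‑complement pieces) contribute consistently, using that $P$ has the correct parity so that $P(1)$ and $P(0)$ match the action forced there.

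The main obstacle I expect is precisely the bookkeeping of this decomposition when $\Pi$ and $\tilde{\Pi}$ have unequal ranks and when the extreme singular values $0$ and $1$ occur: one must partition $\mathcal{H}_U$ cleanly into the paired two‑dimensional blocks and the several species of one‑dimensional blocks, verify $U$‑ and $U^{\dagger}$‑covariance of each, and confirm both the phase operators and the target polynomial identity behave correctly on the degenerate blocks. Once the blocks are correctly set up, the single‑qubit QSP characterization can be used essentially as a black box; the decomposition and its edge cases are where the real care is required.
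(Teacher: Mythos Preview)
Your proof sketch is a faithful outline of the argument in the original reference~\cite{gilyen2019quantum}: the Jordan/CS decomposition into two-dimensional blocks indexed by singular values, reduction of $U_\Phi$ on each block to a single-qubit QSP sequence, and invocation of the single-qubit polynomial characterization. Note, however, that the present paper does not actually prove this theorem; it is quoted from~\cite{gilyen2019quantum} as background in the review section and used as a black box, so there is no ``paper's own proof'' to compare against here.
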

The polynomials in the above theorem are required to satisfy the conditions listed in Corollary 8 of~\cite{gilyen2019quantum}:
\begin{enumerate}[(a)]
    \item $P$ has parity $n \text{ mod } 2$
    \item $\forall x \in [-1,1] \colon |P(x)| \leq 1$
    \item $\forall x \in (-\infty,-1] \cup [1,\infty) \colon |P(x)| \geq 1$
    \item If $n$ is even, then $\forall x \in \bbR \colon P(ix)P^*(ix) \geq 1$
\end{enumerate}

Qubitization works by inverting the arccosine.  While this boils down to the problem of applying a cosine transformation to the input in principle, in practice a Fourier-Chebyshev expansion is used via the Jacobi-Anger expansion that requires both the odd and even terms to closely approximate the desired function and guarantee that the function is within $[-1,1]$ for the entire domain to use the bounds provided in the work.  This process is described in detail in~\cite{Low_2016} as well as in Section 5 of~\cite{gilyen2019quantum}. 

Applying the preceding theorems to $U = \text{CTRL}(\mathcal{W})$ and $\Pi = \tilde{\Pi} = \ket{0}^{L}\bra{0}^{L} \otimes I$, where $L$ is the number of qubits in the register $\ket{{\alpha}}$ in \autoref{fig:walkCirc}, will enable us to invert the $\arccos$ in the spectrum of the walk-operator. A circuit for the unitary operator $e^{i2\phi_j(2\Pi - I)}$ is given in~\Cref{fig:expreflcirc}.

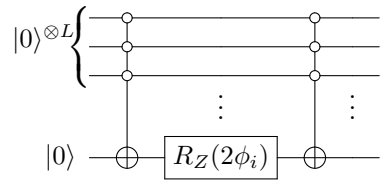
\begin{figure}
\centering 
    \[
           \Qcircuit @C=1em @R=.7em {
            \lstick{} & \ctrlo{1} & \qw & \ctrlo{1} & \qw & \qw \\
            \lstick{} & \ctrlo{1} & \qw & \ctrlo{1} & \qw & \qw \\
            \lstick{} & \ctrlo{3} & \qw & \ctrlo{3} & \qw & \qw
            \inputgroupv{1}{3}{.8em}{.8em}{\ket{0}^{\otimes L}} \\
            & & \vdots & & \vdots & \\
            & \\
            \lstick{\ket{0}} & \targ & \gate{R_Z(2\phi_i)} & \targ & \qw & \qw 
            }
    \]
\caption{Circuit for fractional reflection gadget, $e^{i2\phi_j(2\Pi - I)}$, used in quantum singular value transformations.} 
\label{fig:expreflcirc} 
\end{figure}

Note that we are merely concerned with the existence of a transformation of the spectrum of the walk-operator by a polynomial via the preceding theorem rather than the finding of the phase factors needed to effect a given polynomial transformation. Constructive algorithms for finding these phase factors are outlined in \cite{martyn2021grand,dong2021,chao2020finding,haah2019product}. 

The overall query complexity for qubitization and the singular value transformation is given by the following result as expressed in the language of block encodings. 

\begin{theorem}[Corollary 60 of~\cite{gilyen2019quantum}] 
\label{thm:qubitizquerycost}
    Let $\epsilon \in (0,\frac{1}{2})$, $t \in \bbR$ and $\alpha \in \bbR^{+}$. Let $U$ be an $(\alpha,a,0)$-block encoding of the unknown Hamiltonian $H$. In order to implement an $\epsilon$-precise Hamiltonian simulation unitary $V$ which is an $(1,a+2,\epsilon)$-block encoding of $e^{itH}$, it is necessary and sufficient to use $U$ a total number of times
    
\begin{equation}
    \Theta\bigg(\alpha|t| + \frac{\log(1/\varepsilon)}{\log(e+ \log(1/\varepsilon)/(\alpha|t|))}\bigg)\;. \label{eq:qubitcostgeneral}
\end{equation}

\end{theorem}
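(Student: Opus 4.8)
The plan is to prove the two directions separately: \emph{sufficiency} by reducing Hamiltonian simulation to a polynomial approximation problem and invoking the quantum singular value transformation theorem stated above, and \emph{necessity} by combining a no-fast-forwarding argument with an optimality result for polynomial approximations of the complex exponential. Since the statement is phrased purely in the language of block encodings, the walk-operator $\mathcal{W}$ from \Cref{section:qubitization} enters only as the special case $\alpha=\lambda$, $a=\log L$.

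\textbf{Sufficiency (upper bound).} Because $U$ is an $(\alpha,a,0)$-block encoding of $H$, taking $\Pi=\tilde{\Pi}=\ket{0}^a\bra{0}^a\otimes I$ gives $\tilde{\Pi}U\Pi = H/\alpha$, so by the QSVT theorem a phased alternating sequence $U_\Phi$ of degree $n$ realizes $P^{(SV)}(H/\alpha)$ for any $P$ obeying conditions (a)--(d). It therefore suffices to exhibit a polynomial $P$ of the claimed degree with $|P(x)-e^{i\alpha t x}|\le\epsilon/4$ for all $x\in[-1,1]$. Write $e^{i\alpha t x}=\cos(\alpha t x)+i\sin(\alpha t x)$ and approximate each trigonometric factor by its truncated Jacobi--Anger (Bessel) expansion: an even polynomial for the cosine and an odd polynomial for the sine. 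The truncation error is governed by tails of the Bessel functions $J_k(\alpha|t|)$, which decay doubly exponentially once $k\gtrsim e\alpha|t|/2$; solving $J_r(\alpha|t|)\lesssim\epsilon$ yields exactly $r=\Theta\!\big(\alpha|t|+\tfrac{\log(1/\epsilon)}{\log(e+\log(1/\epsilon)/(\alpha|t|))}\big)$. To meet the QSVT normalization requirements one rescales each polynomial by a factor $1-\Theta(\epsilon)$ so that $|P(x)|\le 1$ on $[-1,1]$ (folding the extra error into the budget) and forms $\cos+i\sin$ via a one-ancilla linear combination of unitaries; this ancilla, together with the one needed to realize a general complex polynomial through QSVT, accounts for the $a+2$ ancillas in the output block encoding. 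Each application of $U_\Phi$ queries $U$ (and $U^\dagger$) a number of times equal to the polynomial degree, giving the stated upper bound.

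\textbf{Necessity (lower bound).} The $\Omega(\alpha|t|)$ term is a no-fast-forwarding bound: choose $H$ (on one qubit, suitably padded) whose block encoding has normalization exactly $\alpha$, so $e^{itH}$ winds a phase $\alpha t$; a circuit making $d$ queries to $U$ implements, on the relevant two-dimensional invariant subspace, a map whose entries are degree-$d$ polynomials in the entries of $U$, and such polynomials cannot track a phase winding faster than their degree, forcing $d=\Omega(\alpha|t|)$ (alternatively, reduce from computing parity). The $\Omega\!\big(\tfrac{\log(1/\epsilon)}{\log\log(1/\epsilon)}\big)$ term is the matching converse to the Bessel-tail estimate: any degree-$d$ polynomial approximating $e^{ix}$ on a fixed interval incurs error at least $\sim(c/d)^{d}$, so error $\epsilon$ forces $d\log(d/c)\gtrsim\log(1/\epsilon)$, i.e.\ $d=\Omega\!\big(\tfrac{\log(1/\epsilon)}{\log\log(1/\epsilon)}\big)$, and again a $d$-query algorithm can only effect degree-$d$ polynomial transformations of the block, so this lifts to a query lower bound.

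\textbf{Main obstacle.} The Jacobi--Anger estimate and the rescaling bookkeeping are routine; the two delicate points are (i) arranging the truncated expansions to have the correct parity, be bounded by $1$ on $[-1,1]$, and satisfy condition (d) simultaneously without inflating the degree or the ancilla count — this is where the precise constant $a+2$ lives — and (ii) the sharp lower bound on polynomial approximation of $e^{ix}$. I expect (ii) to be the real crux: it needs a quantitative converse (e.g.\ via Chebyshev-coefficient asymptotics for entire functions of order one) strong enough to produce the $\log\log$ in the denominator rather than a weaker bare $\log(1/\epsilon)$, and a soft argument will not suffice.
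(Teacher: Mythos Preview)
The paper does not prove this statement at all: Theorem~\ref{thm:qubitizquerycost} is quoted verbatim as Corollary~60 of~\cite{gilyen2019quantum} and used as a black box, with the only follow-up being the simplification of~\eqref{eq:qubitcostgeneral} to~\eqref{eq:qubitcost} under the assumption that $\varepsilon$ is small. There is consequently nothing in the paper to compare your proposal against.

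For what it is worth, your sketch is an accurate high-level outline of how the result is actually established in~\cite{gilyen2019quantum}: sufficiency via the Jacobi--Anger truncation together with QSVT (their Lemma~59 and Theorems~56/58), and necessity via the no-fast-forwarding argument for the linear-in-$\alpha|t|$ term combined with a polynomial-approximation lower bound for the additive $\log(1/\epsilon)$ term. The two ``delicate points'' you flag are indeed the places where the work lies in the original reference, but since the present paper simply imports the finished statement, none of that analysis is reproduced or needed here.
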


Letting $\alpha = \lambda$ in our notation and assuming that $\varepsilon$ is small, we can simplify~\eqref{eq:qubitcostgeneral} as 

\begin{equation}
\Theta\bigg(\lambda t + \frac{\log(1/\varepsilon)}{\log \log(1/\varepsilon)}\bigg)\;. \label{eq:qubitcost}
\end{equation}

The linear term comes from the qubitization portion of the procedure while the logarithmic term stems from the transformation of the singular values via the procedure outlined above. This result can be equivalently interpreted as the query complexity for qubitization in terms of the number of queries (modulo irrelevant constants) needed to the PREPARE and SELECT oracles, since $U = \text{CTRL}(\mathcal{W})$ is related to PREPARE and SELECT via \autoref{fig:walkCirc}.


\subsection{Trotterization}

\label{section:trott}

We briefly outline the basics of Trotterization, the oldest method of quantum simulation based on product formulas, and present the relevant results on Trotterization errors used in this paper. Our ultimate goal is to synthesize this method of simulation with the interaction picture and continuous qDRIFT, and compare it with a hybrid continuous qDRIFT and qubitization protocol. This will be followed by an application of both methods to several physical models. 

Let $H = \sum_{i=1}^{\Gamma} H_i$ be a time-\textit{independent} Hamiltonian expressed as a sum of $\Gamma$ terms. The unitary time-evolution operator generated by $H$ is then $e^{it \sum_{i=1}^{\Gamma} H_i}$. There are a variety of product formulas that can be used to decompose the time-evolution operator into a product of exponentials involving the individual terms $H_i$. The most basic is the first-order Lie-Trotter formula $$\mathscr{S}_1(t) \coloneqq e^{itH_{\Gamma}}\cdots e^{it H_1}\;.$$

Higher order generalizations are the Suzuki formulas defined recursively as $$\mathscr{S}_2(t) \coloneqq e^{i\frac{t}{2} H_1}\cdots e^{i\frac{t}{2}H_{\Gamma-1}} e^{itH_{\Gamma}} e^{i\frac{t}{2}H_{\Gamma-1}} \cdots e^{i\frac{t}{2}H_1}\;,$$
$$\mathscr{S}_{2k}(t) \coloneqq \mathscr{S}_{2k-2}(u_k t)^2 \mathscr{S}_{2k-2}((1-4u_k)t)\mathscr{S}_{2k-2}(u_k t)^2\;,$$
where $u_k = (4 - 4^{-(2k-1)})^{-1}$. There is an extensive literature devoted to investigating the utility and performance of various product-formulas for a variety of physical systems and applications~\cite{reiher2017elucidating,jordan2012quantum,childs2021theory}. While there are multiple strategies for addressing the time-ordering of the operators for the time-ordered operator exponentials that emerge when simulating time-dependent Hamiltonians, we broadly follow the analysis outlined in \cite{Poulin_2011}.

Let $H(t) = \sum_S H_S(t)$ be a time-\textit{dependent} Hamiltonian acting on $N$ particles, where $S \subset \{1,\ldots,N\}$, and each term has bounded norm and acts on at most $k$ particles with $k$ a constant independent of $N$. The time-evolution operator $E(t,0)$ governing the evolution of the system from time $0$ to $t$ is determined by the Schrodinger equation $$\frac{d}{dt} E(t,0) = -iH(t) E(t,0)\;,$$ which admits a solution in terms of a time-ordered exponential $$E(t,0) = \exp_{\mathcal{T}} \bigg \{-i \int_0^t H(s) ds \bigg \}\;.$$

It turns out that the Trotter-Suzuki formulas given above can be generalized to time-dependent scenarios, even in situations where the Hamiltonian experiences fluctuations on time-scales shorter than the time step $\Delta t$ \cite{Poulin_2011}. Suppose we wish to simulate the time-evolution of our system up to time $t_r + \Delta t = T$ from $t_0 = 0$. The exact time-evolution operator can be broken up into shorter segments of the form $$E(T,0) = \prod_{i=0}^r E(t_i + \Delta t, t_i)\;,$$ where
\begin{equation}
    E(t_i + \Delta t,t_i) = \exp_{\mathcal{T}} \bigg(-i \int_{t_j}^{t_j + \Delta t} ds \sum_S H_S(s) \bigg)\;.
\end{equation}

In the case where the sum over $S$ involves only two terms, $H_1$ and $H_2$, the generalized Trotter-Suzuki expansion is of the form 
\begin{align}
    E^{\text{TS}}(t_j + \Delta t,t_j) &= \exp_{\mathcal{T}} \bigg(-i \int_{t_j}^{t_j + \Delta t} ds H_1(s) \bigg) \exp_{\mathcal{T}} \bigg(-i \int_{t_j}^{t_j + \Delta t} ds H_2 (s) \bigg) \nonumber \\
    &= E_1^{\text{TS}}(t_j + \Delta t,t_j)E_2^{\text{TS}}(t_j + \Delta t,t_j)\;, \label{eq:gentrotsuz} 
\end{align}
and gives rise to a simulation error of 
\begin{equation}
    \|E(t_j + \Delta t,t_j) - E^{\text{TS}}(t_j + \Delta t,t_j)\|_{\infty} \leq c_{12}(\Delta t^2)\;, \label{eq:gentroterr}
\end{equation} 
where $c_{12}$  is given by
\begin{equation}
    c_{12} = \frac{1}{(\Delta t)^2} \int_{t_j}^{t_j + \Delta t} dv \int_{t_j}^{v} du \| [H_1(u), H_2(v)]\|_{\infty} \leq \frac{1}{2}\max_{u,v}\left(\| [H_1(u), H_2(v)]\|_{\infty} \right)\;. \label{eq:c12def} 
\end{equation}


\subsection{The Interaction Picture}
\label{section:interactionpic}

The interaction picture or Dirac picture of quantum mechanics is one of the three representations of operators and states in quantum mechanics \cite{sakurai_napolitano_2017}. It is intermediate to the Schrodinger and Heisenberg pictures of quantum mechanics where the former is characterized by state vectors that evolve in time but with operators constant in time, and vice versa for the latter. Within the interaction picture however, both operators and states have time dependence but the latter evolves according to the interaction Hamiltonian consisting of the left-over terms in the original Hamiltonian. This picture is particularly useful with dealing with terms in a Hamiltonian that can be treated as small perturbations to a main term such as in time-dependent perturbation theory, where it is used in deriving transition rates via Fermi's golden rule and the Dyson series perturbative expansion of the time-evolution operator. It also finds widespread application in interacting quantum field theories. \cite{weinberg_1995}. 

We follow the derivation in \cite{sakurai_napolitano_2017}. Consider a time-independent Hamiltonian $H = \sum_i H_i$. Suppose the energy eigenvalues and eigenstates of $H_j$ for some $j$ are known. 

At $t=t_0$, let the state of the physical system be given by $\ket{\alpha}$. At a later time $t$, we denote the state in the Schrodinger picture by $\ket{\alpha, t_0; t}_S$. Now define 
\begin{equation}
\ket{\alpha, t_0; t}_I \coloneqq e^{iH_j t}\ket{\alpha, t_0; t}_S\;, \label{eq:intket}
\end{equation}
where we have implicitly set $\hbar = 1$ and where the subscript $I$ indicates the same situation as represented in the so-called ``interaction picture" (I.P.). 

We also define observables in the interaction picture as 
\begin{equation}
A_I(t) \coloneqq e^{i H_j t} A_S e^{-iH_j t}\;. \label{eq:intobs}
\end{equation}

The physical implication of this definition is that we pick any term in the Hamiltonian and move into its ``interaction frame" via conjugation by $e^{iH_j t}$. The major difference between this definition and the analogous one in the Heisenberg picture is the appearance of $H_j$ in the former as opposed to the full $H$ in the latter.

We now take the time derivative of equation~\eqref{eq:intket}: 
\begin{align}
    i \frac{\partial}{\partial t} \ket{\alpha, t_0; t}_I &= i \frac{\partial}{\partial t}\left(e^{iH_j t}\ket{\alpha, t_0; t}_S\right) \nonumber \\
    &= -H_j e^{i H_j t}\ket{\alpha, t_0; t}_S + e^{i H_j t}(H_j + \sum_{i \neq j}H_i)\ket{\alpha, t_0; t}_S \nonumber \\
    &= e^{iH_j t} \sum_{i \neq j} H_i e^{-iH_j t} e^{iH_jt} \ket{\alpha, t_0; t}_S = H_I(t) \ket{\alpha, t_0; t}_I\;,
\end{align}

where we used the Schrodinger equation in the second equality. Thus we have
\begin{equation}
i \hbar \frac{\partial}{\partial t} \ket{\alpha, t_0; t}_I = H_I(t) \ket{\alpha, t_0; t}_I\;, \label{eq:intSchro}
\end{equation}
with
\begin{equation}
H_I(t) = e^{iH_j t} \bigg ( \sum_{i\neq j} H_i \bigg ) e^{-iH_j t} = \sum_{i \neq j} H^I_i, \label{eq:intham}
\end{equation}
where $$H^I_i = e^{i H_j t} H_i e^{-i H_j t}\;.$$ This is a Schrodinger-like equation for the time-evolution of the interaction picture state but with the Hamiltonian $H$ replaced by $H_I$. 

It is important to note the distinction between how observables in the interaction picture are represented in~\eqref{eq:intobs} versus the interaction Hamiltonian above. Naively, we would expect the full Hamiltonian to be what is conjugated within the big parentheses in~\eqref{eq:intham} by analogy with~\eqref{eq:intobs}. This ``discrepancy" merely arises from the fact that we needed to define $H_I$ as above to obtain the Schrodinger-like equation~\eqref{eq:intSchro}. 

We can apply the interaction picture to the continuous qDRIFT protocol outlined before and obtain the following simple lemma. 

\begin{lemma} ($L^1$-norm error bound for IP continuous qDRIFT for long simulation time)
\label{lemma:IPlongcontqD}
Let $H_I(\tau)$ be an interaction picture Hamiltonian as in~\eqref{eq:intham}. Suppose it is defined for $0 \leq \tau \leq t$ and satisfies conditions $1$ and $2$ in \Cref{section:contqD}. Define $\mathcal{E}(t,0)$ and $\mathcal{U}(t,0)$ as in~\eqref{eq:idealChannel} and~\eqref{eq:qdChannel}  respectively but with $H_I(\tau)$. Then for any positive integer $r$, there exists a division $0 = t_0 < t_1 < \cdots < t_r = t$ such that 
\begin{equation}
    \left \| \mathcal{E}(t,0) - \prod_{j=0}^{r-1} \mathcal{U}(t_{j+1},t_j) \right \|_{\diamond} \leq 4t^2 \frac{\|\sum_{i \neq j} H_i \|^2_{\infty}}{r}\;. \label{eq:L1normlongIP}
\end{equation}
To ensure the simulation error is at most $\epsilon$, it therefore suffices to choose $$r \geq 4 \ceil*{\frac{t^2 \|\sum_{i \neq j} H_i \|^2_{\infty}}{\epsilon}}\;.$$ 
\end{lemma}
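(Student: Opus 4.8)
The plan is to reduce \autoref{lemma:IPlongcontqD} directly to the ``long-time'' continuous qDRIFT bound of \autoref{thm:longsimqD} by applying it to the interaction-picture Hamiltonian $H_I(\tau)$ rather than to a generic time-dependent Hamiltonian. First I would verify that $H_I(\tau)$ as defined in~\eqref{eq:intham} satisfies the hypotheses required by \autoref{thm:longsimqD}: it is finite-dimensional (condition~2) because it is a unitary conjugate of the finite-dimensional operator $\sum_{i\neq j}H_i$, and it is continuously differentiable in $\tau$ (part of condition~1) since $\tau\mapsto e^{iH_j\tau}$ is smooth and the $H_i$ are constant. This is precisely what the statement stipulates, so this step is bookkeeping.

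The key observation is then the unitary invariance of the Schatten $\infty$-norm: for every $\tau$,
\[
\|H_I(\tau)\|_\infty \;=\; \left\| e^{iH_j\tau}\Big(\sum_{i\neq j}H_i\Big)e^{-iH_j\tau}\right\|_\infty \;=\; \left\|\sum_{i\neq j}H_i\right\|_\infty,
\]
which is a constant independent of $\tau$. Consequently the mixed $(\infty,1)$-norm that appears in \autoref{thm:longsimqD} collapses to
\[
\|H_I\|_{\infty,1} \;=\; \int_0^t \|H_I(\tau)\|_\infty\, d\tau \;=\; t\,\Big\|\sum_{i\neq j}H_i\Big\|_\infty.
\]
Substituting this into the bound~\eqref{eq:L1normlong} of \autoref{thm:longsimqD} gives exactly
\[
\left\|\mathcal{E}(t,0)-\prod_{j=0}^{r-1}\mathcal{U}(t_{j+1},t_j)\right\|_\diamond \;\le\; \frac{4\|H_I\|_{\infty,1}^2}{r} \;=\; \frac{4t^2\big\|\sum_{i\neq j}H_i\big\|_\infty^2}{r},
\]
which is~\eqref{eq:L1normlongIP}. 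The existence of a suitable division $0=t_0<\cdots<t_r=t$ is inherited verbatim from \autoref{thm:longsimqD} (one can in fact take the division that equipartitions $\|H_I\|_{\infty,1}$, which here is just the uniform partition $t_k=kt/r$ since the norm is constant). The final claim about $r$ follows by setting the right-hand side $\le\epsilon$ and solving, using the ceiling to absorb integrality, exactly as in the cited theorem.

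The only thing requiring a moment's care — and the closest thing to an ``obstacle'' — is confirming that the channels $\mathcal{E}(t,0)$ and $\mathcal{U}(t,0)$ built from $H_I(\tau)$ are legitimately instances of~\eqref{eq:idealChannel} and~\eqref{eq:qdChannel}, i.e. that the oracle assumption (condition~3 of \Cref{section:contqD}) is met: one needs access to $e^{-iH_I(\tau)\Delta}=e^{iH_j\tau}e^{-i(\sum_{i\neq j}H_i)\Delta}e^{-iH_j\tau}$, which the later hybrid constructions supply by combining a forward/backward evolution under $H_j$ with an evolution under the remaining terms. For the purposes of this lemma, however, we may treat that oracle as a black box exactly as \autoref{thm:longsimqD} does, so the proof is genuinely just the norm computation above applied to the cited theorem.
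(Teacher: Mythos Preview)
Your proposal is correct and follows essentially the same route as the paper: apply \autoref{thm:longsimqD} to $H_I(\tau)$ and then use the unitary invariance of the Schatten $\infty$-norm to collapse $\|H_I\|_{\infty,1}$ to $t\,\|\sum_{i\neq j}H_i\|_\infty$. The paper's proof is terser (it omits your remarks about verifying the hypotheses and about the oracle), but the substance is identical.
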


\begin{proof}
We can substitute~\eqref{eq:intham} directly into equation~\eqref{eq:L1normlong} and the expression for $r$. Note however the spectral norm of an operator (and the Schatten norms more generally) is invariant under unitary transformations of that operator. We then obtain the simplification $$||H_I||_{\infty, 1} = \int_0^t d\tau \|H_I(\tau) \|_{\infty} = \bigg \|\sum_{i \neq j} H_i \bigg \|_{\infty} t\;,$$ so that 
\begin{equation}
\bigg \|\mathcal{E}_I(t,0) - \mathcal{U}_I(t,0)\bigg \|_{\diamond} \leq 4\frac{ (\|\sum_{i \neq j} H_i \|_{\infty})^2 t^2}{r}\;,
\end{equation}
and $$r \geq 4 \ceil*{\frac{ (\|\sum_{i \neq j} H_i \|_{\infty})^2 t^2}{\epsilon}}\;,$$ to ensure our simulation error is less than some desired $\epsilon$. \end{proof}

As before, $r$ can also be interpreted as the number of queries to the oracle $W$ defined in \Cref{section:contqD}. Each resulting time-independent piece will need to be simulated using techniques like Trotterization or Qubitization and the main goal of the paper is to quantify the overall query and gate complexity of ``hybrid" protocols combining these with the IP continuous qDRIFT technique outlined here. 

Comparing this result to \autoref{thm:longsimqD}, we see that moving into the interaction frame of a fixed term $H_j$ of the overall Hamiltonian effectively ``eliminates" its contribution to the error. Moreover, due to the properties of the spectral norm and the interaction Hamiltonian, $L^1$-norm dependence of the results in \autoref{thm:longsimqD} reduce to those reminiscent of the time-independent case. This behavior recurs in subsequent results and is particularly useful when dealing with terms with unbounded behavior or large $\infty$-norm, such as the electric term in the Schwinger Model considered later in the paper.


\section{Hybrid Trotterization and qDRIFT Protocol}
\label{section:hybridTrotqD}

We now present an analysis of our first hybrid simulation protocol where a generalization of the time-dependent Trotter-Suzuki formula given in~\eqref{eq:gentrotsuz} proved below is combined with continuous qDRIFT. Let $H(t) = \sum_{k=1}^L H_k(t)$. The procedure is as follows: 

\begin{enumerate}

\item Use Trotterization technique below to approximate the time-ordered exponential of $H(t)$ as a product of $L$ time-ordered exponentials.

\item Use continuous qDRIFT to approximate each time-ordered exponential by the channel~\eqref{eq:qdChannel}. Implementing this channel involves sampling from a probability distribution and yields a product of $r$ time-\textit{independent} terms of the form $\exp{(-iH_I(\tau_k)/p(\tau_k)})$, where $r$ is the number of sub-intervals of the whole simulation interval. 

\end{enumerate}

Before proving the error bounds for these processes, we first show the following simple lemma with time arguments suppressed for notational convenience:

\begin{lemma}
\label{lemma:supernorms}
Let $\mathcal{E}^{\text{TS}}$ denote the superoperator representing the Trotter-Suzuki decomposition of the time-ordered exponential in~\eqref{eq:gentrotsuz} and let $\mathcal{E}$ be as in~\eqref{eq:idealChannel}. If $D_{2^n}$ is the set of density operators in the domain of $\mathcal{E}$, then
\begin{equation}
     \|\mathcal{E} - \mathcal{E}^{\text{TS}}\|_{\infty} := \sup_{\rho\in D_{2^n}} \|\mathcal{E}(\rho) - \mathcal{E}^{\text{TS}}(\rho)\|_{\infty} \leq 2\|E - E^{\text{TS}}\|_{\infty}\;.
\end{equation}
\end{lemma}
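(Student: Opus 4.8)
The plan is to unfold the definitions of the two channels acting on an arbitrary density operator $\rho$ and bound the difference by inserting and subtracting a hybrid term, then applying submultiplicativity of the operator norm together with the fact that the ideal and Trotter-Suzuki evolutions are both (close to) unitary. Write $\Delta := E - E^{\text{TS}}$, so that $\mathcal{E}(\rho) - \mathcal{E}^{\text{TS}}(\rho) = E\rho E^{\dag} - E^{\text{TS}}\rho (E^{\text{TS}})^{\dag}$. Then I would add and subtract the cross term $E\rho (E^{\text{TS}})^{\dag}$ (or equivalently $E^{\text{TS}}\rho E^{\dag}$) to get
\begin{equation}
\mathcal{E}(\rho) - \mathcal{E}^{\text{TS}}(\rho) = E\rho(E - E^{\text{TS}})^{\dag} + (E - E^{\text{TS}})\rho (E^{\text{TS}})^{\dag}\;.
\end{equation}

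Next I would apply the triangle inequality and submultiplicativity of the Schatten $\infty$-norm: $\|E\rho \Delta^{\dag}\|_{\infty} \le \|E\|_{\infty}\,\|\rho\|_{\infty}\,\|\Delta\|_{\infty}$ and similarly for the other term with $\|(E^{\text{TS}})^{\dag}\|_{\infty} = \|E^{\text{TS}}\|_{\infty}$. Since $E = E(t,0)$ is a (time-ordered) unitary we have $\|E\|_{\infty} = 1$, and $\|\rho\|_{\infty} \le 1$ for any density operator. The Trotter-Suzuki operator $E^{\text{TS}}$ is a product of time-ordered exponentials of Hermitian operators, hence itself unitary, so $\|E^{\text{TS}}\|_{\infty} = 1$ as well. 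Combining these gives $\|\mathcal{E}(\rho) - \mathcal{E}^{\text{TS}}(\rho)\|_{\infty} \le \|\Delta\|_{\infty} + \|\Delta\|_{\infty} = 2\|E - E^{\text{TS}}\|_{\infty}$, and taking the supremum over $\rho \in D_{2^n}$ yields the claim.

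The only subtlety worth flagging is the unitarity of $E^{\text{TS}}$: from the definition in~\eqref{eq:gentrotsuz} it is a product of two factors, each a time-ordered exponential $\exp_{\mathcal{T}}(-i\int H_i(s)\,ds)$ of a Hermitian generator, so each factor is unitary and the product is unitary; thus the bound $\|E^{\text{TS}}\|_{\infty} \le 1$ holds with equality. If one preferred not to invoke unitarity of $E^{\text{TS}}$ directly, one could instead bound $\|E^{\text{TS}}\|_{\infty} \le \|E\|_{\infty} + \|E - E^{\text{TS}}\|_{\infty} \le 1 + \|\Delta\|_{\infty}$, which would give the slightly weaker constant $2 + \|\Delta\|_{\infty}$; but since the factors are genuinely unitary the clean constant $2$ is available. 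I do not anticipate any real obstacle here — the argument is the standard ``difference of conjugations'' estimate — so the main task is simply to state the norm submultiplicativity and unitarity facts cleanly before taking the supremum.
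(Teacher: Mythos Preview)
Your proposal is correct and follows essentially the same approach as the paper: both add and subtract a hybrid cross term (the paper uses $E^{\text{TS}}\rho E^{\dag}$ where you use $E\rho(E^{\text{TS}})^{\dag}$), then apply the triangle inequality, submultiplicativity/unitary invariance of the Schatten $\infty$-norm, unitarity of $E$ and $E^{\text{TS}}$, and $\|\rho\|_{\infty}\le 1$. The only cosmetic difference is that the paper phrases the simplification via unitary invariance of $\|\cdot\|_{\infty}$ rather than submultiplicativity, but the argument is the same standard ``difference of conjugations'' estimate you identify.
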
 

\begin{proof}
    From the triangle inequality we have that
    \begin{align*}
         \|\mathcal{E} - \mathcal{E}^{\text{TS}}\|_{\infty} &\leq \sup_{\rho\in D_{2^n}} \|E \rho E^{\dag} - E^{\text{TS}} \rho E^{\dag}\|_{\infty} + \sup_{\rho\in D_{2^n}} \|E^{\text{TS}} \rho E^{\dag} - E^{\text{TS}} \rho (E^{\text{TS}})^{\dag} \|_{\infty} \\
        &= \sup_{\rho\in D_{2^n}} \|(E \rho - E^{\text{TS}} \rho) E^{\dag}\|_{\infty} + \sup_{\rho\in D_{2^n}} \|E^{\text{TS}}(\rho E^{\dag} - \rho (E^{\text{TS}})^{\dag})\|_{\infty} \\
        &= \sup_{\rho\in D_{2^n}} \|(E - E^{\text{TS}}) \rho \|_{\infty} + \sup_{\rho\in D_{2^n}}\|(E - E^{\text{TS}})\rho\|_{\infty} \\
        & \leq 2 \| E - E^{\text{TS}}\|_{\infty}\;.
    \end{align*}
    
    In the third line, we used the unitary invariance of the infinity norm and that $\|A\|_{\infty} = \|A^{\dag}\|_{\infty}$ for any bounded square operator $A$. The latter follows from the fact that the Schatten infinity norm is the spectral norm, which is the largest eigenvalue of $\sqrt{A A^\dagger}$ and coincides with the largest eigenvalue of $A^\dagger A$. In the fourth line, we used the sub-multiplicativity of the infinity norm and the fact that $\|\rho\|_{\infty} \leq 1$ for all density operators.\end{proof}

We now have the following results for quantum simulation with this hybrid protocol: 

\begin{theorem}[Hybrid Trotterization and qDRIFT Simulation]
\label{thm:hybridTrotter}
    Let $\{H_k(t): k=1,\ldots, L\}$ be a set of time-dependent Hermitian operators satisfying conditions 1 and 2 in \Cref{section:contqD}. Let $\mathcal{U}_k$ denote the superoperator representing the continuous qDRIFT channel for the time-dependent summand $H_k(t)$ as in~\eqref{eq:qdChannel}. Then given a decomposition of $[0,t]$ into $r$ sub-intervals of length $\Delta t=t/r$, 
    \begin{equation}
        \bigg \|\mathcal{E}(t,0) - \prod_{j=1}^r \prod_{k=1}^L \mathcal{U}_k(t_j + \Delta t,t_j) \bigg \|_{\infty} \leq \frac{L^2 c_{\max}}{r}t^2 + 4r \sum_{k=1}^L \|H_k\|^2_{\infty, 1} \;.\label{eq:trotqDresult} 
    \end{equation}
    Here $c_{\max}$ is defined as $c_{\max} = \frac{1}{L^2}\max_{u,v} \sum_p^L \|[H_p(u), \sum_{q > p}^L H_q(v)]\|_\infty$ and the $1$-norm in $\|H_k\|^2_{\infty,1}$ denotes an integral over an interval of size $\Delta t$.
\end{theorem}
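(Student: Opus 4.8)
The plan is to separate the total error into a Trotter part and a qDRIFT part and to estimate each one sub-interval at a time. For the $j$-th sub-interval $[t_j,t_{j+1}]$ with $t_{j+1}=t_j+\Delta t$, let $E_k(t_{j+1},t_j)=\exp_{\mathcal{T}}(-i\int_{t_j}^{t_{j+1}}H_k(s)\,ds)$ be the exact evolution generated by $H_k$, let $\mathcal{E}_k(t_{j+1},t_j)$ be the corresponding unitary-conjugation channel, and set $\mathcal{E}^{\mathrm{TS}}_j\coloneqq\mathcal{E}_1(t_{j+1},t_j)\cdots\mathcal{E}_L(t_{j+1},t_j)$, the channel of the first-order time-dependent Trotter--Suzuki splitting of $\mathcal{E}(t_{j+1},t_j)$ into $L$ factors. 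Inserting $\prod_j\mathcal{E}^{\mathrm{TS}}_j$ and using the triangle inequality bounds the left-hand side of~\eqref{eq:trotqDresult} by $\|\mathcal{E}(t,0)-\prod_j\mathcal{E}^{\mathrm{TS}}_j\|_{\infty}+\|\prod_j\mathcal{E}^{\mathrm{TS}}_j-\prod_j\prod_k\mathcal{U}_k\|_{\infty}$. Using the semigroup identity $\mathcal{E}(t,0)=\prod_j\mathcal{E}(t_{j+1},t_j)$ together with the fact that every map appearing here — the $\mathcal{E}$'s and $\mathcal{E}^{\mathrm{TS}}_j$'s (unitary conjugations) and the $\mathcal{U}_k$'s (convex mixtures of unitary conjugations) — is non-increasing in the Schatten-$\infty$ norm on Hermitian operators, a standard telescoping argument reduces both terms to sums over $j$ of single-sub-interval errors, so it suffices to bound $\|\mathcal{E}(t_{j+1},t_j)-\mathcal{E}^{\mathrm{TS}}_j\|_{\infty}$ and $\|\mathcal{E}^{\mathrm{TS}}_j-\prod_k\mathcal{U}_k(t_{j+1},t_j)\|_{\infty}$.

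For the Trotter part I would first establish the $L$-summand generalization of~\eqref{eq:gentroterr}--\eqref{eq:c12def} by induction on $L$: grouping $H(s)=H_1(s)+\sum_{q\ge2}H_q(s)$, applying the two-term bound, and recursing on $\sum_{q\ge2}H_q$ (peeling off one unitary factor $E_p$ at a time). Keeping all the commutator norms under a single double integral over the simplex $t_j\le u\le v\le t_{j+1}$, the induction yields $\|E(t_{j+1},t_j)-E_1(t_{j+1},t_j)\cdots E_L(t_{j+1},t_j)\|_{\infty}\le\int_{t_j}^{t_{j+1}}dv\int_{t_j}^{v}du\,\sum_{p=1}^{L}\|[H_p(u),\sum_{q>p}^{L}H_q(v)]\|_{\infty}$, and bounding the integrand by its maximum over the simplex (whose measure is $(\Delta t)^2/2$) gives exactly $\tfrac12 L^2 c_{\max}(\Delta t)^2$ by the definition of $c_{\max}$. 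Applying \autoref{lemma:supernorms} upgrades this operator estimate to $\|\mathcal{E}(t_{j+1},t_j)-\mathcal{E}^{\mathrm{TS}}_j\|_{\infty}\le L^2c_{\max}(\Delta t)^2$; summing over the $r$ sub-intervals and substituting $\Delta t=t/r$ produces $L^2c_{\max}t^2/r$.

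For the qDRIFT part, within each sub-interval I telescope over $k$ — once more using non-increase of the intervening channels — to obtain $\|\mathcal{E}^{\mathrm{TS}}_j-\prod_k\mathcal{U}_k(t_{j+1},t_j)\|_{\infty}\le\sum_{k=1}^{L}\|\mathcal{E}_k(t_{j+1},t_j)-\mathcal{U}_k(t_{j+1},t_j)\|_{\infty}$. Each summand is controlled by the short-time continuous qDRIFT estimate \autoref{thm:shortsimqD}, applied to $H_k$ on $[t_j,t_{j+1}]$: $\|\mathcal{E}_k(t_{j+1},t_j)-\mathcal{U}_k(t_{j+1},t_j)\|_{\diamond}\le 4\|H_k\|^2_{\infty,1}$, with the $L^1$-norm taken over that length-$\Delta t$ interval. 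Passing from the diamond norm to the superoperator $\infty$-norm via $\|\Phi\|_{\infty}\le\|\Phi\|_{\diamond}$ (immediate from $\|X\|_{\infty}\le\|X\|_1$), then summing over $k$ and over the $r$ sub-intervals, gives $4r\sum_{k=1}^{L}\|H_k\|^2_{\infty,1}$. Adding the two contributions yields~\eqref{eq:trotqDresult}.

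The main obstacle is bookkeeping rather than any single hard inequality. The two delicate points are: (i) carrying out the $L$-term Trotter induction so that all commutator contributions are collected under a single double integral over the time simplex \emph{before} bounding by the maximum — this is what produces the precise constant $\tfrac12 L^2 c_{\max}$ with the ``small'' quantity $\max_{u,v}\sum_p\|[H_p(u),\sum_{q>p}H_q(v)]\|_{\infty}$ rather than the larger $\sum_p\max_{u,v}\|[H_p(u),\sum_{q>p}H_q(v)]\|_{\infty}$; and (ii) justifying the telescoping in the superoperator $\infty$-norm, which relies on every intermediate map being a mixed-unitary channel and hence a contraction on Hermitian operators in that norm — generic CPTP contractivity is not enough — together with keeping straight the passage between the diamond norm, in which \autoref{thm:shortsimqD} is stated, and the superoperator $\infty$-norm appearing in the theorem.
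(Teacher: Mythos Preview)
Your proposal is correct and follows essentially the same route as the paper's proof: split into a Trotter part and a qDRIFT part, establish the $L$-term time-dependent Trotter bound by induction (peeling off one factor at a time), upgrade to channels via \autoref{lemma:supernorms}, and telescope the qDRIFT errors using \autoref{thm:shortsimqD} together with $\|\cdot\|_\infty\le\|\cdot\|_\diamond$. Your care in keeping all commutator terms under a single double integral before maximizing --- producing $\max_{u,v}\sum_p\|\cdot\|_\infty$ rather than the paper's $\sum_p\max_{u,v}\|\cdot\|_\infty$ --- is a small refinement that actually matches the stated $c_{\max}$ exactly, whereas the paper's induction yields the (larger) quantity with the maximum inside the sum.
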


\begin{proof}

We first generalize~\eqref{eq:gentrotsuz} to the case where $H(t)$ is the sum of $L$ time-dependent terms. Suppose we break up $H(t)$ as $H(t) = H_1(t) + \sum_{k>1}^L H_k(t)$. Treating the sum as our ``second" term and considering a specific time-step $[t_l, t_l + \Delta t]$, we can substitute these into the expression for $c_{12}$ above. 
Our proof of the error bound from recursively applying the bound in~\eqref{eq:gentrotsuz} is inductive.  Let us consider the base case.  Using~\eqref{eq:c12def} we have that
\begin{align}
    &\Biggr\|\exp_\mathcal{T}\left({-i \int_{t_j}^{t_j+\Delta t} H_1(t) + \sum_{k=2}^LH_k(t) \mathrm{d}t}\right) \nonumber\\
    &\qquad- \exp_\mathcal{T}\left({-i \int_{t_j}^{t_j+\Delta t} H_1(t)\mathrm{d}t}\right) \exp_\mathcal{T}\left({-i \int_{t_j}^{t_j+\Delta t} \sum_{k=2}^LH_k(t) \mathrm{d}t}\right) \Biggr\|_\infty\nonumber \\
    &\qquad\le \frac{1}{2}\max_{u,v} \left\|\left[H_1(u), \sum_{q>1}^L H_q(v)\right]\right\|_\infty \Delta t^2 
    \label{eq:base}
\end{align}
Next assume that for some $p\ge 1$ we have that
\begin{align}
    &\Biggr\|\exp_\mathcal{T}\left({-i \int_{t_j}^{t_j+\Delta t} H_1(t) + \sum_{k=2}^LH_k(t) \mathrm{d}t}\right)\nonumber\\
    &\qquad- \prod_{q=1}^p\exp_\mathcal{T}\left({-i \int_{t_j}^{t_j+\Delta t} H_q(t)\mathrm{d}t}\right) \exp_\mathcal{T}\left({-i \int_{t_j}^{t_j+\Delta t} \sum_{k=p+1}^LH_k(t) \mathrm{d}t}\right) \Biggr\|_\infty\nonumber\\
    &\qquad\le \frac{1}{2}\sum_{\ell=1}^p \max_{u,v} \left\|\left[H_\ell(u), \sum_{q>\ell}^L H_q(v)\right]\right\|_\infty \Delta t^2 \;.
    \label{eq:pformula}
\end{align}
We then have from the triangle inequality and the unitary invariance of Schatten norms that for $p+1$ 
\begin{align}
    &\Biggr\|\exp_\mathcal{T}\left({-i \int_{t_j}^{t_j+\Delta t} H_1(t) + \sum_{k=2}^LH_k(t) \mathrm{d}t}\right) \nonumber\\
    &\qquad- \prod_{q=1}^{p+1}\exp_\mathcal{T}\left({-i \int_{t_j}^{t_j+\Delta t} H_q(t)\mathrm{d}t}\right) \exp_\mathcal{T}\left({-i \int_{t_j}^{t_j+\Delta t} \sum_{{ k=p+2}}^LH_k(t) \mathrm{d}t}\right) \Biggr\|_\infty\nonumber\\
    &\le \frac{1}{2}\sum_{\ell=1}^p \max_{u,v} \left\|\left[H_\ell(u), \sum_{q>\ell}^L H_q(v)\right]\right\| \Delta t^2 + \Biggr\|\exp_\mathcal{T}\left({-i \int_{t_j}^{t_j+\Delta t} \sum_{{k=p+1}}^LH_k(t) \mathrm{d}t}\right)  \nonumber\\
    &\qquad-\exp_\mathcal{T}\left({-i \int_{t_j}^{t_j+\Delta t} H_{{ p+1}}(t)\mathrm{d}t}\right) \exp_\mathcal{T}\left({-i \int_{t_j}^{t_j+\Delta t} \sum_{{ k=p+2}}^LH_k(t) \mathrm{d}t}\right)  \Biggr\|_\infty\nonumber
    \end{align}
    \begin{align}
    &\qquad\le \frac{1}{2}\sum_{\ell=1}^p \max_{u,v} \left\|\left[H_\ell(u), \sum_{q>\ell}^L H_q(v)\right]\right\|_\infty \Delta t^2 + \frac{1}{2} \max_{u,v} \left\|\left[H_{p+1}(u), \sum_{{ q>p+1}}^L H_q(v)\right]\right\|_\infty \Delta t^2\nonumber\\
    &\qquad = \frac{1}{2}\sum_{\ell=1}^{p+1} \max_{u,v} \left\|\left[H_\ell(u), \sum_{q>\ell}^L H_q(v)\right]\right\|_\infty \Delta t^2 \;.
\end{align}
This demonstrates the induction step and combined with the base case in~\eqref{eq:base} shows the error bound we need inductively.

Since this analysis was for the time interval $[t_j + \Delta t, t_j]$ and since there are $r$ such intervals sub-dividing our simulation interval, we can multiply our previous result by $r$ using Box 4.1 in \cite{MikeIke}. Since $\Delta t = t/r$, we then have 
\begin{equation}
    \|E(t,0) - E^{\text{TS}}(t,0)\|_{\infty} \leq \frac{L^2 c_{\max}}{2r}t^2\;.
\end{equation} 
Now note that from~\eqref{eq:L1normshort}  that if we denote the time evolution under $H_k$ to be given by the unitary superoperator $\mathcal{E}_k(t_j+\Delta t,t_j)$, then
\begin{equation}
    \|\mathcal{E}_k(t_j + \Delta t,t_j) - \mathcal{U}_k(t_j + \Delta t,t_j)\|_\infty \le \|\mathcal{E}_k(t_j + \Delta t,t_j) - \mathcal{U}_k(t_j + \Delta t,t_j)\|_{\diamond} \leq 4\|H_k\|^2_{\infty, 1} \;.
\end{equation}
Using the sub-multiplicativity and triangle inequality for the induced infinity norm for superoperators, we get 
\begin{equation}
    \bigg \|\prod_{k=1}^L\mathcal{E}_k(t_j + \Delta t,t_j) - \prod_{k=1}^L \mathcal{U}_k(t_j + \Delta t,t_j) \bigg \|_{\infty} \leq 4 \sum_{k=1}^L \|H_k\|^2_{\infty, 1}\;,
\end{equation}
where the $1$-norm in the subscript on the RHS denotes an integral over an interval of size $\Delta t$ from $t_j$ to $t_j + \Delta t$. Note that this notation causes the duration of the integral over time to be implicitly rather than explicitly defined.  Despite this drawback, we use this notation in places throughout the manuscript for brevity.

A straightforward generalization of the argument in Box 4.1 in \cite{MikeIke} using the sub-multiplicativity and triangle inequality for the diamond norm, and the fact quantum channels have diamond norm at most $1$ yields 
\begin{equation}
    \bigg \| \prod_{j=1}^r \prod_{k=1}^L\mathcal{E}_k(t_j + \Delta t,t_j) - \prod_{j=1}^r \prod_{k=1}^L \mathcal{U}_k(t_j + \Delta t,t_j) \bigg \|_{\infty} \leq 4r \sum_{k=1}^L \|H_k\|^2_{\infty,1}\;.
\end{equation}
From the above inequality, \Cref{lemma:IPlongcontqD}, and \Cref{lemma:supernorms} we obtain that the bound of the induced $\infty$-norm of the difference between the super-operator and the hybridized channel is
\begin{align}
    & \bigg \|\mathcal{E}(t,0) - \prod_{j=1}^r \prod_{k=1}^L \mathcal{U}_k(t_j + \Delta t,t_j) \bigg \|_{\infty} \!\! \leq \nonumber \\
    &\bigg \|\mathcal{E}(t,0) - \prod_{j=1}^r\prod_{k=1}^L \mathcal{E}_k(t_j + \Delta t,t_j) \bigg \|_{\infty} \!\!+ \bigg \| \prod_{j=1}^r\prod_{k=1}^L \mathcal{E}_k(t_j + \Delta t,t_j) - \prod_{j=1}^r \prod_{k=1}^L \mathcal{U}_k(t_j + \Delta t,t_j) \bigg \|_{\infty}\nonumber\\ 
    &\leq\frac{L^2c_{\max}}{r}t^2 + 4r \sum_{k=1}^L \|H_k\|^2_{\infty, 1}\;.
\end{align}
\end{proof}

Note that since the 1-norm in $\|H_k\|^2_{\infty, 1}$ denotes an integral over a time-interval of size $\Delta t$, this term scales with $t^2/r$. If we implement each qDRIFT channel $\mathcal{U}_k$ with some error $\epsilon$, an easy application of the triangle inequality will add an additional subdominant term of $rL \epsilon$ to~\eqref{eq:trotqDresult}. 

It should also be noted that the result of~\Cref{thm:hybridTrotter} applies for both the case of time-dependent as well as time-independent Hamiltonian evolution.  This is relevant because it shows that the lowest-order Trotter-Suzuki formula can be combined with qDRIFT profitably wherein small terms in the Hamiltonian can be reallocated between the Trotter and the qDRIFT portions of the Hamiltonian to reduce the simulation cost.  This can be seen as an extension of the coalescing strategy of~\cite{wecker2015solving}.

If we compare this result with that given in Theorem 7' of \cite{berry2020time}, we find that the error in the latter approach using solely continuous qDRIFT scales with $\|H_k\|^2_{\infty, 1,1}$, where the last 1 in the subscript denotes a sum over $k$, and we square \textit{after} performing the integral and sum. While the result in~\eqref{eq:trotqDresult} adds a term which scales at worst quadratically in the number of terms $L$ in the Hamiltonian, we will find that for systems like those considered later in this paper, we can exploit the commutation relations between the terms in the Hamiltonian to give bounds that scale linearly with $L$. 

\begin{corollary}[Hybrid Trotterization and qDRIFT Simulation in Interaction Picture]\label{cor:hybridTrotter} Let $H = \sum_{k=1}^L H_k$ be a time-independent Hamiltonian where each summand satisfies conditions 1 and 2 in \Cref{section:contqD}. Then given a decomposition of $[0,t]$ into $r$ sub-intervals of size $\Delta t$, we can perform the Hamiltonian simulation of $H$ in the interaction frame of $H_l$ as in~\eqref{eq:intham} such that
\begin{equation}
        \bigg \|\mathcal{E}(t,0) - \prod_{j=1}^r \prod_{k \neq l}^L \mathcal{U}_k(t_j + \Delta t, t_j) \bigg \|_{\infty} \leq \frac{t^2}{r} \bigg(c_I + 4 \sum_{k \neq l}^L \|H_k\|^2_{\infty} \bigg)\;, \label{eq:trotqDresultIP}
\end{equation}
where $c_I = \sum_{p \neq l} ^L \|[H_p,\sum_{q>p}^L H_q]\|_{\infty}$. 
To ensure the simulation error in the infinity-norm is less than $\epsilon$, it therefore suffices to choose 
\begin{equation}
    r \geq \frac{t^2}{\epsilon} \bigg(c_I + 4 \sum_{k \neq l}^L \|H_k\|^2_{\infty} \bigg)\;. \label{eq:trotqDtstepIP}
\end{equation}

\end{corollary}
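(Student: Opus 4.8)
The plan is to deduce this from \autoref{thm:hybridTrotter} by passing to the interaction frame and then exploiting unitary invariance of the Schatten norms, exactly as in the passage from \autoref{thm:longsimqD} to \autoref{lemma:IPlongcontqD}. Concretely, set $H_I(\tau)=e^{iH_l\tau}\big(\sum_{k\neq l}H_k\big)e^{-iH_l\tau}=\sum_{k\neq l}H_k^I(\tau)$ as in \eqref{eq:intham}; this is a time-dependent Hamiltonian with $L-1$ summands, each continuously differentiable (the conjugation is smooth) and finite-dimensional, so conditions 1--2 of \Cref{section:contqD} hold and \autoref{thm:hybridTrotter} applies. Running that theorem on $\{H_k^I(\tau):k\neq l\}$ with $r$ sub-intervals of length $\Delta t=t/r$ produces a bound of the form $\tfrac{(L-1)^2 c_{\max}}{r}t^2+4r\sum_{k\neq l}\|H_k^I\|_{\infty,1}^2$, and it remains to simplify the two pieces and solve for $r$.

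The qDRIFT term is where the interaction picture earns its keep. By unitary invariance of the Schatten $\infty$-norm, $\|H_k^I(\tau)\|_\infty=\|H_k\|_\infty$ for every $\tau$, so $\|H_k^I\|_{\infty,1}=\|H_k\|_\infty\,\Delta t$ over a window of length $\Delta t$, whence $4r\sum_{k\neq l}\|H_k^I\|_{\infty,1}^2=4r\,\Delta t^2\sum_{k\neq l}\|H_k\|_\infty^2=\tfrac{4t^2}{r}\sum_{k\neq l}\|H_k\|_\infty^2$; note that $H_l$ has disappeared entirely, which is the whole point. For the Trotter term I would not black-box the theorem's $c_{\max}$ (which carries a $\max_{u,v}$ over distinct times) but re-run its inductive argument with the $H_k$ replaced by $H_k^I(\tau)$, keeping the per-step commutator in the integral form of \eqref{eq:c12def}. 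Using unitary invariance again, $\|[H_p^I(u),H_q^I(v)]\|_\infty=\|[H_p,H_q^I(v-u)]\|_\infty$, and the Lipschitz-in-time estimate $\|H_q^I(w)-H_q\|_\infty\le|w|\,\|[H_l,H_q]\|_\infty$ lets one replace this by the frame-independent value $\|[H_p,H_q]\|_\infty$ plus a correction of order $\Delta t$; the double integral then contributes $\tfrac12\big\|[H_p,\sum_{q>p,q\neq l}H_q]\big\|_\infty\Delta t^2$ per step at leading order. Summing the induction over $p\neq l$, multiplying by the $r$ time steps, converting operator norm to superoperator norm via the factor $2$ from \autoref{lemma:supernorms}, and combining with the qDRIFT bound by the triangle inequality (Box~4.1 of \cite{MikeIke}) yields \eqref{eq:trotqDresultIP} with $c_I=\sum_{p\neq l}\|[H_p,\sum_{q>p}H_q]\|_\infty$; setting the right-hand side $\le\epsilon$ and isolating $r$ gives \eqref{eq:trotqDtstepIP}.

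The main obstacle is precisely that last reconciliation: the frame transformation makes $H_I$ genuinely time-dependent, so the Trotter error is governed by commutators of the $H_k^I$ evaluated at \emph{two different} times, whereas the corollary is phrased in terms of the static commutators $[H_p,\sum_{q>p}H_q]$. Closing this gap cleanly requires the unitary-invariance reduction together with the Lipschitz bound on $w\mapsto H_q^I(w)$, and a careful accounting of which $O(\Delta t/t)$-suppressed terms are being absorbed into the stated bound — they are harmless in the regime where $r$ is chosen as in \eqref{eq:trotqDtstepIP}. Everything else — the interaction-picture substitution, the collapse of the $L^1$-in-time norm $\|H_k^I\|_{\infty,1}$ to a static spectral norm, and the superoperator/triangle-inequality bookkeeping — is a routine specialization of \autoref{thm:hybridTrotter} and \autoref{lemma:supernorms}.
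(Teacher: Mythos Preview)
Your approach matches the paper's: apply \autoref{thm:hybridTrotter} to the interaction-picture Hamiltonian $\sum_{k\neq l}H_k^I(\tau)$, collapse $\|H_k^I\|_{\infty,1}$ to $\|H_k\|_\infty\,\Delta t$ by unitary invariance, and reduce the Trotter commutator to the frame-independent $\|[H_p,H_q]\|_\infty$. The paper is in fact terser than you at that last step: it computes $[H_p^I,H_q^I]=[H_p,H_q]^I$ only at \emph{equal} times, invokes unitary invariance of $\|\cdot\|_\infty$, and then simply asserts that ``the time-dependence came only from the $e^{iH_lt}$ terms'' so the $\max_{u,v}$ in $c_{\max}$ can be dropped --- without ever addressing the two-time commutator $[H_p^I(u),H_q^I(v)]$ that you correctly flag as the genuine obstacle. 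Your Lipschitz-in-$w$ argument is a legitimate way to close that gap, at the price of the $O(\Delta t/t)$-suppressed correction you mention; the paper just elides the issue. For the purpose of reproducing the paper's proof, the equal-time computation plus the stated assertion is all that appears there.
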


\begin{proof}
When moving into the interaction frame of a particular term $H_l$ in $H$ as in~\eqref{eq:intham}, we have $$[H^I_p, H^I_q] = H^I_p H^I_q - H^I_q H^I_p = e^{i H_l t} H_p H_q e^{-iH_l t} - e^{i H_l t} H_q H_p e^{-i H_l t} = [H_p, H_q]^I\;.$$ Since the infinity norm is unitarily invariant, we then have that $$\|[H_p, H_q]^I \|_\infty = \| [H_p, H_q] \|_\infty\;.$$ As the time-dependence came only from the $e^{i H_l t}$ terms, we can drop the maximization over times in $c_{\text{max}}$. The sums in $c_{\text{max}}$ will be over those indices $p,q \neq l$ and we define this simplified quantity as $c_I$ as above.  
    
The $1$-norm in $\|H_k\|^2_{\infty, 1}$ denotes an integral over an interval of measure $\Delta t$, so it again follows from the unitary invariance of the infinity-norm that $\|H_k\|^2_{\infty,1} = (\Delta t)^2 \|H_k\|_{\infty}^2$. Substituting $\Delta t = t/r$ into~\eqref{eq:trotqDresult} then yields the desired expression.
\end{proof}

We can frame the complexity of the preceding process in terms of oracles defined as follows:

\begin{definition}
    Let $H = \sum_k H_k$ be a \textit{time-independent} Hamiltonian in $\bbC^{M \times M}$. We define oracles $\{W_k\}_{k=1}^L$ such that for each $k$, $W_k: \bbR \mapsto \bbC^{M \times M}$ with the action $W_k(\Delta) = e^{-i H_k \Delta}$. 
\end{definition}

These oracles can be used to implement the interaction frame transformation and the time evolution under specific summands of $H$ at various fixed times. Equation~\eqref{eq:trotqDtstepIP} then gives an upper bound on the number of queries to the oracles $W_k$ needed to ensure the simulation protocol is within error $\epsilon$. 


\section{Hybrid Continuous qDRIFT and Qubitization Protocol}
\label{section:qDRIFTqubithyb} 

We would also like to consider the scenario where we simulate a time-\textit{independent} Hamiltonian $H$ with the following procedure:

\begin{enumerate}

\item Move into the interaction frame of a term $H_j$ in $H$ to turn the simulation problem into one involving a time-\textit{dependent} interaction Hamiltonian $H_I(\tau)$ as in~\eqref{eq:intham}.

\item Use continuous qDRIFT to approximate the ideal time-evolution by the channel~\eqref{eq:qdChannel}. Implementing this channel involves sampling from a probability distribution and yields a product of $r$ time-\textit{independent} terms of the form $\exp{(-iH_I(\tau_k)/p(\tau_k)})$, where $r$ is the number of sub-intervals of the whole simulation interval. 

\item Use qubitization to simulate each time-independent term above and perform a singular value transformation to transform the spectrum in~\eqref{eq:walkOPmatrix} and recover the original spectrum of $H$.

\end{enumerate}

We first make the following definition:

\begin{definition}
\label{def:Woracles}
    Let $H = \sum_k w_k H_k$ be a time-independent Hamiltonian in $\bbC^{M \times M}$. We define an oracle $W_j$ such that $W_j: \bbR \mapsto \bbC^{M \times M}$ with the action $W_j(\Delta) = e^{-i H_j \Delta}$ 
\end{definition}

We use this oracle to transform to the interaction frame of a particular summand $H_j$ in the Hamiltonian $H$ in the following theorem:

\begin{theorem}[Hybrid qDRIFT and Qubitization I.P. Simulation]
\label{thm:IPqDqubitsim}
Let $H=H_j + H_{\alpha} \in \mathbb{C}^{2^n\times 2^n}$ be a time-independent Hamiltonian such that $H_{\alpha}$ has an LCU decomposition $H_{\alpha} = \sum_{l \neq j}^L w_l H_l$, where $w_l \in \mathbb{R}^+$, and each $w_l$ and $H_l$ are obtained by oracles $\mathrm{PREPARE}$ and $\mathrm{SELECT}$ in~\eqref{eq:prepDef} and~\eqref{eq:selDef} respectively.

There exists a quantum algorithm such that for any $\epsilon, t > 0$, it implements a quantum channel $\Lambda$ that is a $(1,O(\log L), \epsilon)$ block-encoding  of $e^{-iHt}$ using a number of queries to $\mathrm{PREPARE}$, $\mathrm{SELECT}$, and $W_j(t)$ in
\begin{equation}
    O \bigg(\lambda_{\alpha} t + \left(\frac{\|H_{\alpha}\|_\infty^2 t^2}{\epsilon}\right)\frac{\log(\|H_{\alpha}\|_\infty t/\epsilon)}{\log \log(\|H_{\alpha}\|_\infty t/\epsilon)}\bigg) \;,\label{eq:mainthm}
\end{equation}
where $\lambda_{\alpha} = \sum_{l \neq j} |w_l|$. 


\end{theorem}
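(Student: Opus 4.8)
The plan is to execute the three-stage recipe stated just before the theorem and to propagate both the error budget and the oracle count through each stage. \textbf{Stage 1 (reduction to the interaction frame).} Moving to the interaction frame of $H_j$ replaces the target $e^{-iHt}$ by the interaction-picture propagator $E_I(t,0)=\exp_{\mathcal{T}}(-i\int_0^t H_I(\tau)\,d\tau)$ with $H_I(\tau)=e^{iH_j\tau}H_\alpha e^{-iH_j\tau}$; from \eqref{eq:intSchro} and the definition of $W_j$ in \Cref{def:Woracles} one gets the clean identity $e^{-iHt}=e^{-iH_jt}E_I(t,0)=W_j(t)\,E_I(t,0)$. So it suffices to build a (channel) block-encoding of $E_I(t,0)$ and then pre-compose it with one \emph{exact} application of $W_j(t)$. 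The structural fact used throughout is that $\|H_I(\tau)\|_\infty=\|H_\alpha\|_\infty$ for all $\tau$, by unitary invariance of the spectral norm. \textbf{Stage 2 (qDRIFT discretization).} Apply \Cref{lemma:IPlongcontqD} with $H_\alpha$ playing the role of $\sum_{i\neq j}H_i$: for any positive integer $r$ there is a division of $[0,t]$ into $r$ sub-intervals with $\|\mathcal{E}_I(t,0)-\prod_{k=0}^{r-1}\mathcal{U}_I(t_{k+1},t_k)\|_\diamond\le 4\|H_\alpha\|_\infty^2 t^2/r$, and taking $r=\lceil 8\|H_\alpha\|_\infty^2 t^2/\epsilon\rceil$ makes this $\le\epsilon/2$. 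Since $\|H_I(\tau)\|_\infty$ is constant the sampling density $p(\tau)$ is uniform on each sub-interval (forcing the division to be into equal pieces of length $\Delta t=t/r$ and giving $p(\tau)=1/\Delta t$), so implementing $\mathcal{U}_I(t_{k+1},t_k)$ reduces to drawing $\tau_k$ uniformly from $[t_k,t_{k+1}]$ and applying the single time-independent unitary $e^{-iH_I(\tau_k)\Delta t}$.

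\textbf{Stage 3 (qubitization of each step).} Factor $e^{-iH_I(\tau_k)\Delta t}=e^{iH_j\tau_k}\,e^{-iH_\alpha\Delta t}\,e^{-iH_j\tau_k}$. By \eqref{eq:selElem}, $\mathrm{PREPARE}^\dagger\cdot\mathrm{SELECT}\cdot\mathrm{PREPARE}$ is a $(\lambda_\alpha,\log L,0)$-block-encoding of $H_\alpha$, so by \Cref{thm:qubitizquerycost} (with QSVT realizing a Jacobi--Anger/Chebyshev approximant of the real exponential, as in \cite{Low_2016,gilyen2019quantum}) one obtains a $(1,\log L+2,\epsilon')$-block-encoding $\tilde V_k$ of $e^{-iH_\alpha\Delta t}$ using $\Theta\!\big(\lambda_\alpha\Delta t+\log(1/\epsilon')/\log\log(1/\epsilon')\big)$ queries to $\mathrm{PREPARE}$ and $\mathrm{SELECT}$. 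Because $e^{\pm iH_j\tau_k}$ acts on the system register only and commutes through the ancilla register carrying the block-encoding signal flag, conjugating $\tilde V_k$ by $W_j(\tau_k)$ yields a $(1,\log L+2,\epsilon')$-block-encoding of $e^{-iH_I(\tau_k)\Delta t}$ at the cost of just two extra $W_j$ queries per step, and without enlarging the ancilla count.

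\textbf{Stage 4 (assembly, error budget, counting).} The algorithm is: apply $W_j(t)$ exactly; then for $k=0,\dots,r-1$ sample $\tau_k$ classically and apply the conjugated block-encoding of Stage 3, reusing the same ancilla registers. Two elementary facts close the error analysis: (i) a classical mixture of $(1,m,\epsilon')$ channel block-encodings of conjugation channels is again a $(1,m,\epsilon')$ channel block-encoding, since the defining inequality in \Cref{def:blockencode} is linear in the channel and $\|\cdot\|_\infty$ is subadditive — this absorbs the $\tau_k$-averaging inside each qDRIFT step; and (ii) composing $r$ such channel block-encodings accumulates error additively, a direct analogue of Box 4.1 of \cite{MikeIke} (as already used in \Cref{thm:hybridTrotter}) combined with the estimate $\|E\rho E^\dagger-\tilde E\rho\tilde E^\dagger\|_\infty\le 2\|E-\tilde E\|_\infty$ from \Cref{lemma:supernorms}. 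Hence the implemented channel $\Lambda$, restricted to its signal subspace, is within $2r\epsilon'$ of $\prod_k\mathcal{U}_I(t_{k+1},t_k)$, which by Stage 2 is within $\epsilon/2$ of $\mathcal{E}_I(t,0)$; prepending the exact $W_j(t)$ and invoking unitary invariance converts this into the claim that $\Lambda$ is a $(1,O(\log L),2r\epsilon'+\epsilon/2)$-block-encoding of $e^{-iHt}$, and choosing $\epsilon'=\epsilon/(4r)$ gives total error $\le\epsilon$. Summing Stage 3 over $k$, the $\mathrm{PREPARE}/\mathrm{SELECT}$ count is $\sum_{k}\Theta(\lambda_\alpha\Delta t+\log(1/\epsilon')/\log\log(1/\epsilon'))=\Theta(\lambda_\alpha t+r\log(1/\epsilon')/\log\log(1/\epsilon'))$ and the $W_j$ count is $2r+1$; with $r=\Theta(\|H_\alpha\|_\infty^2 t^2/\epsilon)$ and $\epsilon'=\Theta(\epsilon/r)$ one has $\log(1/\epsilon')=\Theta(\log(\|H_\alpha\|_\infty t/\epsilon))$, so the dominant $\mathrm{PREPARE}/\mathrm{SELECT}$ term is exactly \eqref{eq:mainthm} and the $W_j$ term $\Theta(\|H_\alpha\|_\infty^2 t^2/\epsilon)$ is absorbed into it.

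The genuinely delicate part is Stage 4: setting up the ``channel block-encoding'' calculus so that the QSVT \emph{operator}-level error $\epsilon'$ of each step translates to a channel-level error of the same order, the classical $\tau_k$-mixture inside each qDRIFT step is absorbed without loss, and the $r$ steps compose with only additive error while the signal/ancilla registers are correctly threaded (re-used, not accumulated) between steps. Everything else is routine: unitary invariance makes the interaction-frame Hamiltonian have constant spectral norm, which collapses the qDRIFT density to uniform, fixes the per-step qubitization time at $t/r$, and makes the $\lambda$-dependence of the qubitization cost collapse to $\lambda_\alpha t$ after summation.
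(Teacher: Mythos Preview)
Your proposal is correct and follows essentially the same approach as the paper: interaction-frame reduction, continuous qDRIFT with $r=\Theta(\|H_\alpha\|_\infty^2 t^2/\epsilon)$ and uniform sampling density (via unitary invariance of $\|\cdot\|_\infty$), the factorization $e^{-iH_I(\tau_k)\Delta t}=e^{iH_j\tau_k}e^{-iH_\alpha\Delta t}e^{-iH_j\tau_k}$, qubitization of each step with per-step error $\Theta(\epsilon/r)$, and telescoping the per-step costs. If anything you are slightly more careful than the paper in explicitly tracking the final $W_j(t)$ needed to convert the interaction-picture propagator $E_I(t,0)$ back into $e^{-iHt}$ and in separately counting the $2r+1$ queries to $W_j$ before absorbing them into the dominant term of \eqref{eq:mainthm}.
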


\begin{proof}
From Theorem 2.2, we have that for any positive integer $r$, there exists a division of $[0,t]$ where $0 = t_0 < t_1 < \cdots < t_k < \cdots < t_r = t$ such that~\eqref{eq:L1normlong} holds, where $\mathcal{E}(t,0)$ and each $\mathcal{U}(t_k,t_{k+1})$ are understood as involving the interaction Hamiltonian $H_I(\tau)$ of~\eqref{eq:intham}. 

By equation~\eqref{eq:L1normlong} $$\left \| \mathcal{E}(t,0) - \prod_{j=0}^{r-1} \mathcal{U}(t_{j+1},t_j) \right \|_{\diamond} \leq 4\frac{\|H_I\|^2_{\infty,1}}{r}\;.$$ 

From the relationship of the trace norm to the diamond norm in~\eqref{eq:tracediamond} and the monotonicity of the Schatten $p$-norm, we get after choosing $r \geq 8 \frac{\|H_I\|^2_{\infty,1}}{\epsilon}$ and defining $D_{2^n}$ to be the set of all density operators in $\mathbb{C}^{2^n\times 2^n}$
\begin{align}
&\left \| \mathcal{E}(t,0) - \prod_{j=0}^{r-1} \mathcal{U}(t_{j+1},t_j) \right \|_{\infty}:= \max_{\rho \in D_{2^n}}\left \| \mathcal{E}(t,0)\circ\rho - \left(\prod_{j=0}^{r-1} \mathcal{U}(t_{j+1},t_j)\right)\circ\rho \right \|_{\infty} \nonumber\\
&\leq \max_{\rho \in D_{2^n}}\left \| \mathcal{E}(t,0)\circ\rho - \left(\prod_{j=0}^{r-1} \mathcal{U}(t_{j+1},t_j)\right)\circ\rho \right \|_{1} =\left \| \mathcal{E}(t,0) - \prod_{j=0}^{r-1} \mathcal{U}(t_{j+1},t_j) \right \|_{\diamond} \nonumber\\
&\leq 4\frac{\|H_I\|^2_{\infty,1}}{r} \leq \frac{\epsilon}{2}\;.
\end{align}
Next, let $Q(t_{k+1},t_k)$ denote a channel which implements the three-step procedure outlined in the beginning of the section and let $$\Lambda = \prod_{j=0}^{r-1} Q(t_{j+1},t_j)\;,$$ We claim $\Lambda$ is the desired channel. To show this, note that from \Cref{def:blockencode}, we have upon fixing a signal state $\ket{T} = \ket{0}^m$ and setting $\alpha = 1$ (which can be done since we're implementing qubitization) that 
\begin{align}
    & \max_{\rho \in D_{2^n}} \left \|\mathcal{E}(t,0)(\rho) - (\bra{0}^m \otimes I_n)(\Lambda(\ket{0}\bra{0}^m \otimes \rho))(\ket{0}^m \otimes I_n) \right \|_{\infty} \leq \max_{\rho \in D_{2^n}}\left \| \mathcal{E}(t,0)(\rho) - \left(\prod_{j=0}^{r-1} \mathcal{U}(t_{j+1},t_j)\right)(\rho) \right \|_{\infty} \nonumber \\
    & + \max_{\rho \in D_{2^n}} \left \|\left(\prod_{j=0}^{r-1} \mathcal{U}(t_{j+1},t_j)\right)(\rho) - (\bra{0}^m \otimes I_n)(\Lambda(\ket{0}\bra{0}^m \otimes \rho))(\ket{0}^m \otimes I_n) \right \|_{\infty} \nonumber \\
    & \leq \frac{\epsilon}{2} + r \max_j \max_{\rho \in D_{2^n}} \left \|\mathcal{U}(t_{j+1},t_j)(\rho) - (\bra{0}^m \otimes I_n)(Q(t_{j+1},t_j)(\ket{0}\bra{0}^m \otimes \rho))(\ket{0}^m \otimes I_n) \right \|_{\infty}\;.
\end{align}

Recall that sampling from $\mathcal{U}(t_k, t_{k+1})$ yields a time-\textit{independent} term $\exp{(-iH_I(\tau_k)/p(\tau_k)})$ where $\tau_k \in [t_k,t_{k+1}] \subset [0,t]$ is a specific time in some sub-interval $[t_k,t_{k+1}]$ at which $H_I(\tau)$ is being evaluated. The latter term above can thus be interpreted as the maximum spectral norm of the difference between an ideal implementation of the time-evolution operator for $t \in [t_k, t_{k+1}]$ and an implementation involving qubitization, maximized over all sub-intervals. This can be made as small as desired via singular value transformation techniques discussed previously. Choosing $$\max_j \max_{\rho \in D_{2^n}} \left \|\mathcal{U}(t_{j+1},t_j)(\rho) - (\bra{0}^m \otimes I_n)(Q(t_{j+1},t_j)(\ket{0}\bra{0}^m \otimes \rho))(\ket{0}^m \otimes I_n) \right \|_{\infty} \leq \frac{\epsilon}{2r} \;,$$ we then have $$\max_{\rho \in D_{2^n}} \left \|\mathcal{E}(t,0)(\rho) - (\bra{0}^m \otimes I_n)(\Lambda(\ket{0}\bra{0}^m \otimes \rho))(\ket{0}^m \otimes I_n) \right \|_{\infty} \leq \frac{\epsilon}{2} + r\frac{\epsilon}{2r} = \epsilon \;.$$ 

We now define $$\tilde{H}_i(\tau) = H_i(\tau)/p(\tau)\;,$$ 
for $i \neq j$. Using the following identity which holds for all invertible matrices $U$ 
\begin{equation}
    Ue^AU^{\dag} = \exp(UAU^{\dag}) \;,
\end{equation} 
we have 
\begin{equation}
\begin{split}
\exp(-iH_I(\tau)/p(\tau)) &= \exp \bigg(e^{iH_j \tau}\bigg(-i\sum_{i\neq j}\tilde{H_i}\bigg)e^{-iH_j \tau}\bigg)\\
&= e^{iH_j \tau}\bigg(\exp\bigg(\sum_{i\neq j}{-i\tilde{H_i}}\bigg)\bigg)e^{-iH_j \tau} \;.\label{eq:expintHam}
\end{split}
\end{equation}

Each $\exp{(-iH_I(\tau_k)/p(\tau_k))}$ term obtained from sampling $\mathcal{U}(t_{k+1},t_k)$ can be expanded as in~\eqref{eq:expintHam}. Using the unitary invariance of the spectral norm, we have the simplification 
\begin{equation*}
\begin{split}
p(\tau_k) &= \frac{\|H_I(\tau_k)\|_{\infty}}{\|H_I(\tau)\|_{\infty,1}} = \frac{\|e^{iH_j \tau_k} (\sum_{i \neq j} H_i) e^{-iH_j \tau_k}\|_{\infty}  }{\int_{t_k}^{t_{k+1}} dt \|e^{iH_j \tau_k} (\sum_{i \neq j}H_i) e^{-iH_j \tau_k}\|_{\infty}} \\
&= \frac{\|\sum_{i \neq j} H_i\|_{\infty}}{\|\sum_{i \neq j}H_i\|_{\infty}\int_{t_k}^{t_{k+1}} dt}  = \frac{1}{t_{k+1}-t_k}    \;.
\end{split}
\end{equation*}

Thus, we obtain a product of terms of the form $$\exp(-iH_I(\tau_k)(t_{k+1}-t_k)) = e^{iH_j \tau_k} \exp{\bigg(-i(t_{k+1}-t_k)\sum_{i \neq j}H_i}\bigg)e^{-iH_j \tau_k}\;.$$ 


We then obtain the overall query complexity by summing~\eqref{eq:qubitcost} as applied to each sub-interval $[t_k, t_{k+1}]$ from $0$ to $r-1$ with error in the QSP transformation at most $\delta$: 
\begin{equation}
    O \bigg( \sum_{k=0}^{r-1} \bigg(\lambda_{\alpha} (t_{k+1}-t_k) + \frac{\log(1/\delta)}{\log \log(1/\delta)}\bigg) \bigg) = O \bigg(\lambda_{\alpha} t + r\frac{\log(1/\delta)}{\log \log(1/\delta)}\bigg)\;.
\end{equation}

Letting $\delta = O(\epsilon/r)$ for our choice of $r$ in the above completes the proof. \end{proof}

Lastly, we consider a hybrid Trotter, qDRIFT, and qubitization I.P. protocol which extends the results of \autoref{cor:hybridTrotter} to include a qubitization step at the end to simulate all the resulting time-independent exponentials. This procedure is largely similar to that outlined in the beginning of the section but with an additional Trotter step:

\begin{enumerate}

\item Move into the interaction frame of a term $H_j$ in $H$ to turn the simulation problem into one involving a time-\textit{dependent} interaction Hamiltonian $H_I(\tau)$ as in~\eqref{eq:intham}.

\item Use the Trotterization technique outlined in \Cref{section:trott} to split the resulting time-ordered exponential into a product of $L$ time-ordered exponentials, one for each summand in the Hamiltonian.  

\item Use continuous qDRIFT to approximate each of the $L$ time-ordered exponentials by the channel~\eqref{eq:qdChannel}. Implementing this channel involves sampling from a probability distribution that yields a product of $r$ time-\textit{independent} terms of the form $\exp{(-iH_I(\tau_k)/p(\tau_k)})$ for each of the $L$ time-ordered exponentials.

\item Use qubitization to simulate the $rL$ time-independent pieces and perform a singular value transformation to transform the spectrum in~\eqref{eq:walkOPmatrix} and recover the original spectrum of $H$.

\end{enumerate}

This yields the following theorem:

\begin{theorem}[Hybrid Trotter, qDRIFT, and Qubitization I.P. Simulation]
\label{thm:trotqDqub}
Let the assumptions of the previous theorem hold. There exists a quantum algorithm such that for any $\epsilon, t > 0$, it implements a quantum channel $\Gamma$ that is a $(1,O(\log L), \epsilon)$ block-encoding  of $e^{-iHt}$ using a number of queries to $\mathrm{PREPARE}$, $\mathrm{SELECT}$, and $W_j(t)$ in
\begin{equation}
    O \bigg(\lambda_{\alpha} t + rL\frac{\log(rL/\epsilon)}{\log \log(rL/\epsilon)}\bigg) \;,\label{eq:trotqDqub}
\end{equation}
where $\lambda_{\alpha} = \sum_{l \neq j} |w_l|$ and $r$ is as in~\eqref{eq:trotqDtstepIP}.

\end{theorem}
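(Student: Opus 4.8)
The plan is to combine the error analysis of \autoref{cor:hybridTrotter} (in its interaction-picture form) with the qubitization cost \eqref{eq:qubitcost}, essentially by inserting a qubitization step at the end of the hybrid Trotter+qDRIFT protocol and tracking how the errors compound. The skeleton is identical to the proof of \autoref{thm:IPqDqubitsim}: first establish that the Trotter+qDRIFT part alone produces a channel that is $(\epsilon/2)$-close to the ideal interaction-picture evolution in the induced $\infty$-norm, then show the extra qubitization step on each of the resulting time-independent exponentials contributes at most another $\epsilon/2$, and finally add up the query counts.

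First I would invoke \autoref{cor:hybridTrotter}: choosing $r$ as in \eqref{eq:trotqDtstepIP} (with a factor of $2$ so the bound is $\epsilon/2$), the product channel $\prod_{j=1}^r\prod_{k\neq l}^L \mathcal{U}_k(t_j+\Delta t,t_j)$ approximates $\mathcal{E}(t,0)$ in the induced $\infty$-norm to within $\epsilon/2$; by the relation between the induced $\infty$-norm, the trace norm, and the diamond norm (as in the proof of \autoref{thm:IPqDqubitsim}), this also bounds the relevant block-encoding error. Sampling from each $\mathcal{U}_k$ yields a time-independent exponential of the form $\exp(-iH_I(\tau_k)/p(\tau_k))$; using the identity $Ue^AU^\dagger=\exp(UAU^\dagger)$ and the unitary invariance of the spectral norm exactly as in the previous proof, each such exponential simplifies to $e^{iH_j\tau_k}\exp(-i(t_{k+1}-t_k)H_k)e^{-iH_j\tau_k}$ (here with a single summand $H_k$ rather than the full $H_\alpha$, since the Trotter step has already split the interaction Hamiltonian term-by-term). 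There are $rL$ such pieces in total.

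Next I would bound the cost of the qubitization step. Each of the $rL$ time-independent exponentials $\exp(-i(t_{k+1}-t_k)H_k)$ is implemented via qubitization of $H_k$ followed by a singular value transformation, at a cost given by \eqref{eq:qubitcost} with error $\delta$ per piece; the conjugating factors $e^{\pm iH_j\tau_k}$ cost one query each to $W_j(t)$. By the triangle inequality for the induced $\infty$-norm (or diamond norm) over the $rL$ pieces, taking $\delta=O(\epsilon/(rL))$ makes the total qubitization-induced error at most $\epsilon/2$, so that $\Gamma$ is a $(1,O(\log L),\epsilon)$ block-encoding of $e^{-iHt}$. Summing the per-piece costs, the linear qubitization terms telescope to $\lambda_\alpha t$ (the $t_{k+1}-t_k$ sum to $t$, and the normalization of $H_k$ within the LCU is bounded by $\lambda_\alpha=\sum_{l\neq j}|w_l|$), while the logarithmic terms contribute $rL\,\frac{\log(rL/\epsilon)}{\log\log(rL/\epsilon)}$, giving \eqref{eq:trotqDqub}.

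The main obstacle I expect is bookkeeping rather than a deep difficulty: one must be careful that the Trotterization step splits the interaction Hamiltonian into single terms before qubitization, so the normalization constant entering each qubitization call is the $L^1$-weight of an individual summand and the sum of the linear terms is genuinely $\lambda_\alpha t$ and not something larger; and one must verify that the error-allocation $\delta=O(\epsilon/(rL))$ is consistent with the choice of $r$ from \eqref{eq:trotqDtstepIP} so that the logarithm $\log(1/\delta)=\Theta(\log(rL/\epsilon))$ as claimed. A secondary subtlety is ensuring the diamond-norm/trace-norm/$\infty$-norm conversions used to pass between \autoref{cor:hybridTrotter} (stated in induced $\infty$-norm) and the block-encoding definition \autoref{def:blockencode} go through exactly as in the proof of \autoref{thm:IPqDqubitsim}; since that argument is already in hand, reusing it verbatim should suffice.
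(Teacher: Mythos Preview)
Your proposal is correct and follows essentially the same approach as the paper's proof: define $\Gamma$ as the product of $rL$ qubitization channels, allocate $\epsilon/2$ to the Trotter+qDRIFT error via \autoref{cor:hybridTrotter} (with $r$ doubled) and $\epsilon/2$ to the qubitization error by taking $\delta=\epsilon/(2rL)$ per piece, then sum \eqref{eq:qubitcost} over all $rL$ pieces so that the linear terms $\sum_k \lambda_k \Delta t$ collapse to $\lambda_\alpha t$ and the logarithmic terms give $rL\,\log(rL/\epsilon)/\log\log(rL/\epsilon)$. The only cosmetic difference is that the paper writes out the block-encoding inequality with the explicit signal state $\ket{0}^m$ as in the proof of \autoref{thm:IPqDqubitsim}, which you correctly note can be reused verbatim.
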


\begin{proof}
    Let $\Gamma$ denote a channel which implements the four-step procedure outlined above. Using the notation from the proof of the preceding theorem, we can express $\Gamma$ as $$\Gamma = \prod^r_{j=1} \prod^L_{k=1} Q_k(t_{j+1},t_j)$$ where the subscript $k$ denotes the quantum channel performing steps 3-4 above for a specific Hamiltonian term $H_k$. 
    
    Replicating the arguments of the preceding theorem, we can pick \begin{equation}
        \max_k \max_j \max_{\rho \in D_{2^n}} \left \|\mathcal{U}_k(t_{j+1},t_j)(\rho) - (\bra{0}^m \otimes I_n)(Q_k(t_{j+1},t_j)(\ket{0}\bra{0}^m \otimes \rho))(\ket{0}^m \otimes I_n) \right \|_{\infty} \leq \frac{\epsilon}{2rL} \;.
    \end{equation} 
    
    From \Cref{cor:hybridTrotter}, we can pick $r \geq \frac{2t^2}{\epsilon} \bigg(c_I + 4 \sum_{k \neq l}^L \|H_k\|^2_{\infty} \bigg)$. Then from the triangle inequality, we have 
    \begin{align}
    & \max_{\rho \in D_{2^n}} \left \|\mathcal{E}(t,0)(\rho) - (\bra{0}^m \otimes I_n)(\Gamma(\ket{0}\bra{0}^m \otimes \rho))(\ket{0}^m \otimes I_n) \right \|_{\infty} \nonumber \\
    & \leq \max_{\rho \in D_{2^n}}\left \| \mathcal{E}(t,0)(\rho) - \left(\prod_{j=1}^r \prod_{k=1}^{L} \mathcal{U}(t_{j+1},t_j)\right)(\rho) \right \|_{\infty} \nonumber \\
    & + \max_{\rho \in D_{2^n}} \left \|\left(\prod_{j=1}^r \prod_{k=1}^{L} \mathcal{U}(t_{j+1},t_j)\right)(\rho) - (\bra{0}^m \otimes I_n)(\Gamma(\ket{0}\bra{0}^m \otimes \rho))(\ket{0}^m \otimes I_n) \right \|_{\infty} \nonumber \\
    & \leq \frac{\epsilon}{2} + rL \max_k \max_j \max_{\rho \in D_{2^n}} \left \|\mathcal{U}(t_{j+1},t_j)(\rho) - (\bra{0}^m \otimes I_n)(Q(t_{j+1},t_j)(\ket{0}\bra{0}^m \otimes \rho))(\ket{0}^m \otimes I_n) \right \|_{\infty} \nonumber \\
    & \leq \frac{\epsilon}{2} + rL \frac{\epsilon}{2rL} = \epsilon \;.
\end{align}

The overall query complexity is obtained by summing~\eqref{eq:qubitcost} as applied to each of the $\delta = t/r$ sized sub-intervals and summing over the magnitude of the coefficients in the interaction Hamiltonian. We then have, after choosing $\delta = \frac{\epsilon}{2rL}$ that 
\begin{equation}
    O \bigg( r \sum_{k=1}^L \bigg(\lambda_{i} \Delta t + \frac{\log(1/\delta)}{\log \log(1/\delta)}\bigg) \bigg) = O \bigg(\lambda_{\alpha} t + rL\frac{\log(rL/\epsilon)}{\log \log(rL/\epsilon)}\bigg)\;.
\end{equation}
\end{proof}

Note that the above methods can also be used to hybridize these simulation methods in the time-independent case.  Unlike the Trotter-methods, the scaling of the query complexity is not substantially improved. Instead, any potential cost improvements to the simulation come from simplifications to PREPARE and SELECT.

Finally, we note that one can choose other combinations than an outer qDRIFT or Trotter loop and an inner qubitization loop. The first step in each of these hybrid procedures is to exploit the $L^1$-norm invariance of continuous qDRIFT by begining with a time-independent Hamiltonian and transforming into the interaction frame of a particular summand. This results in a time-dependent Hamiltonian, which cannot be simulated via qubitization and constrains us to use either Trotter or qDRIFT first. This still leaves open the possibility of whether trading an inner qDRIFT loop for another Trotterization procedure that decomposes the time-ordered exponentials to ordinary exponentials or randomly interleaving qDRIFT or Trotter procedures can result in additional speedups, and we leave such investigations for future work. 


\section{Hamiltonian Simulation of Schwinger Model}
\label{section:SchwingerModel}

\subsection{Schwinger Model and Query Complexity Bounds}

We apply these ideas in simulating the Schwinger Model, quantum electrodynamics in 1+1 dimensions on a lattice~\cite{Schwinger1962,COLEMAN1975}. This model has been extensively used as an important stepping stone in simulations of lattice field theories using both tensor networks (see e.g.~\cite{Ba_uls_2013,Pichler2016}) and quantum devices (see e.g.~\cite{Hauke2013,Martinez_2016,Klco2018}).

Using the Hamiltonian formulation of lattice gauge theory in the U(1) compact case~\cite{Kogut1975,Banks1976}, the Hamiltonian of the model with $N-1$ links and $N/2$ spatial sites (half of which are electronic and half are positronic), is given by
\begin{equation}
H  = H_E + H_h + H_M
\end{equation}
with 
\begin{align}
    H_E &= \frac{g^2 a}{2} \sum_{r} E_r^2 \label{eq:SchMElectric} \\
    H_h &= \frac{1}{2a} \sum_{r} U_r \psi^{\dag}_r \psi_{r+1} - U_r^{\dag} \psi_r \psi^{\dag}_{r+1} \label{eq:SchInt} \\
    H_M &= m \sum_{r} (-1)^r \psi^{\dag}_r \psi_r \label{eq:SchMag},
\end{align}
where $a$ is the lattice spacing, $m$ the fermion mass, and $g$ is the coupling constant. $H_E$ can be interpreted as the electric energy given in terms of $E_r$, the integer-valued electric fields residing on the links. The remaining terms are expressed in terms of the fermionic operators $\psi_r$ and $\psi^{\dag}_r$ living on each site $r$, and the unitary link operators $U_r = e^{i a A_r}$ expressed in terms of the gauge-potential $A_{\mu} = (0,A_1)$ in the temporal-gauge. $H_h$ is a lattice analog of the minimal coupling of the Dirac fermionic field to the gauge field and $H_M$ is the mass energy of the Dirac fermions, which are staggered based on the $(-1)^r$ factor. 

We also have the following commutation relations between the link operators $E_r$ and $U_r$
\begin{align}
    &[E_r, U_s] = U_r \delta_{rs} \Rightarrow [E_r, U^{\dag}_s] = -U_r^{\dag} \delta_{rs}\label{eq:commute},
\end{align}
and between the fermionic creation and annihilation operators
\begin{align}
    \{\psi_r, \psi_s\} = \{\psi_r^{\dag}, \psi_s^{\dag}\} &= 0 \\
    \{\psi_r, \psi_s^{\dag}\} &= \delta_{rs}.
\end{align}

We can map the fermionic creation and annihilation operators in equations~\eqref{eq:SchInt} and~\eqref{eq:SchMag} onto a corresponding set of operators acting on spin degrees of freedom via the Jordan-Wigner transformation
\begin{equation}
\psi_r^{\dag} = \frac{(X_r - i Y_r)}{2} \prod_{j=1}^{r-1} Z_j.
\end{equation}

Substituting the above into~\eqref{eq:SchInt} and~\eqref{eq:SchMag} and simplifying yields $$H_h = \frac{1}{2a} \sum_{r=1}^{N-1} [U_r \sigma^{-}_{r} \sigma^{+}_{r+1} + U_r^{\dag} \sigma^{+}_{r}\sigma^{-}_{r+1}]$$
\begin{equation}
= \frac{1}{8a}  \sum_{r=1}^{N-1} [(U_r + U_r^{\dag})(X_r X_{r+1} + Y_r Y_{r+1}) + i(U_r - U_r^{\dag})(X_r X_{r+1} - Y_r Y_{r+1})] \label{eq:JWSchInt}
\end{equation}

and
\begin{equation}
H_M = \frac{m}{2} \sum_{r=1}^{N} (-1)^{r+1} Z_r. \label{eq:JWSchMag}
\end{equation}
Note that a factor of $I/2$ was dropped in the above equation since terms proportional to the identity in a Hamiltonian merely shift the spectrum by a constant. The derivation above also assumes open boundary conditions, but generalizations to periodic boundary conditions are straightforward. In that case the total number of links becomes $N$ instead of $N-1$ and the asymptotic results we derive below for simulating the Schwinger Model remain unchanged.

It is customary to use the electric eigenbasis $\ket{\epsilon}_r$ for the infinite-dimensional Hilbert space of each link. In this basis, the $E_r$ operator takes the diagonal form $$E_r = \sum_{\epsilon} \epsilon \ket{\epsilon}_r \bra{\epsilon}_r$$ and $U_r$ takes the form $$U_r = \sum_{\epsilon} \ket{\epsilon +1} \bra{\epsilon},$$ i.e. of a raising operator. Note that in order to map these degrees onto a quantum computer, it is customary to truncate the link Hilbert space by wrapping the electric field at a chosen cutoff $\Lambda$. This requires modifying the commutation relations in~\eqref{eq:commute} but this issue is not directly relevant for our present work. 

Since $H_h$ and $H_M$ are manifestly a sum of unitary operators, we can use the PREPARE and SELECT oracles from the qubitization simulation technique outlined previously. The overarching strategy is to move into the interaction frame of the $H_E$ term, employ our hybrid protocols as outlined in the previous sections, and determine the query complexity in terms of the qubitization query model. The physical reasons for selecting the $H_E$ term for the interaction picture is that the spectral norm of $E_r$ is either large for a large cutoff $\Lambda$ or unbounded in the strong coupling regime where $g\to\infty$. Choosing this term ``removes" it from consideration in the interaction Hamiltonian as per equation~\eqref{eq:intham}. Additionally, since the $E_r$ operators are diagonal in its eigenbasis and the matrix elements are computable in polynomial time, the cost of simulating $H_E$ in isolation is in $O({\rm poly}(n\log(1/\epsilon))$~\cite{berry2007efficient}. This efficiency justifies the choice to consider such simulations as oracles in the prior discussion.  As $H_M$ commutes with the $H_E$ term and is also 1-sparse, we may also opt to move into the combined interaction frame of the $H_E$ and $H_M$ terms. In this case, it will suffice to simulate only the $H_h$ term via qubitization, and the simulation of this term will be the biggest asymptotic driver of the query complexity.

Recall that the PREPARE oracle acts on an empty ancilla register of $O(\log L)$ qubits, if $L$ is the number of terms in the decomposition of the Hamiltonian into unitary operators, and prepares the superposition state $$\text{PREPARE} \equiv \sum_{l=1}^L \sqrt{\frac{w_l}{\lambda}} \ket{l}\bra{0},$$
where $w_l$ denotes the coefficients of the terms in the decomposition of $H_h$ and $\lambda = \sum_l |w_l|$ is the sum of the absolute value of the coefficients in the $H_h$ term. Note that this oracle does not get altered when moving into the interaction frame since the coefficients $w_l$ remain the same. Since there are $8(N-1)$ terms in the LCU decomposition of $H_h$, we may set $L = 8(N-1)$. There is only one type of coefficient in $H_h$ in terms of magnitude, so $w_l/\lambda = 1/L$ and we obtain for our situation
\begin{equation}
    \text{PREPARE} \equiv \frac{1}{\sqrt{L}} \sum_{l=1}^L \ket{l}\bra{0}. \label{eq:PREPAREInt}
\end{equation}

As a result, we can scale every term in our Hamiltonian by a factor of $8a$ and scale the simulation time by a factor of $1/(8a)$. 

On the other hand, a modification of the traditional select oracle is used to incorporate the interaction picture:
\begin{align}
\text{SELECT}' \equiv  &\sum_l \ket{l}\bra{l} \otimes e^{i(H_E + H_M) t}H'_l e^{-i(H_E + H_M) t} \nonumber \\
&= (I \otimes e^{i (H_E + H_M) t})(\sum_l \ket{l}\bra{l} \otimes H'_l)(I \otimes e^{-i(H_E + H_M) t}). \label{eq:SELECTInt}
\end{align}

This is merely the customary SELECT oracle but conjugated by the unitary operator $e^{i H_E t}e^{i H_M t}$ on the data qubits since $H_E$ and $H_M$ commute. It thus suffices to give circuit implementations of the usual SELECT oracle. 

To summarize, since the conditions of \autoref{thm:IPqDqubitsim} are satisfied, we have the following corollary:

\begin{corollary}[Hybrid qDRIFT and Qubitization I.P Simulation for the Schwinger Model]\label{corollary_SM1}
    Let $H = H_E + H_M + H_h$ be the Schwinger model Hamiltonian as given in~\eqref{eq:SchMElectric},~\eqref{eq:SchInt}, and~\eqref{eq:SchMag}. Then we can perform the Hamiltonian simulation of $H$ with the method of \Cref{thm:IPqDqubitsim} using a total number of queries to PREPARE, SELECT, and $W_{H_M + H_E}(t)$ in 
    \begin{equation}
        O \bigg(\frac{N^2 t^2}{a^2 \epsilon} \frac{\log(Nt/a\epsilon)}{\log\log(Nt/a\epsilon)} \bigg). \label{eq:querO}
    \end{equation}
    Here, $N$ is the number of sites in the system, $a$ is the lattice spacing, $t\ge 0$ is the simulation time, and $\epsilon$ is the error quantifying the distance in 1-norm from the ideal time-evolution channel.  
\end{corollary}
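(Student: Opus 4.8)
The plan is to verify that the Schwinger Hamiltonian, written as $H = H_j + H_\alpha$ with $H_j = H_E + H_M$ and $H_\alpha = H_h$, meets the hypotheses of \autoref{thm:IPqDqubitsim}, and then simply to evaluate the bound \eqref{eq:mainthm} with the relevant norms of $H_h$. First I would check the structural requirements. Since $H_E$ is diagonal in the electric eigenbasis and acts only on the link registers while $H_M$ acts only on the staggered-fermion site registers, $[H_E,H_M]=0$, so the interaction-frame transformation is generated by the single oracle $W_j(t)=e^{-iH_Et}e^{-iH_Mt}$; after truncating each link at the cutoff $\Lambda$ this is a finite-dimensional unitary, and it is efficiently implementable (indeed $H_E$ is diagonal with polynomially computable entries and $H_M$ is $1$-sparse), which justifies treating it as a black box. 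The interaction Hamiltonian $H_I(\tau)=e^{i(H_E+H_M)\tau}H_h\,e^{-i(H_E+H_M)\tau}$ is then smooth in $\tau$ and finite-dimensional, so conditions $1$ and $2$ of \Cref{section:contqD} hold.

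Next I would pin down the LCU data of $H_\alpha = H_h$. Applying the Jordan--Wigner map and expanding the products in \eqref{eq:JWSchInt}, each link $r\in\{1,\dots,N-1\}$ contributes eight terms of the form $(\text{phase})\times U_r^{\pm 1}\otimes(\text{Pauli string})$, each of which is unitary (the link operators $U_r$ remain unitary after wrapping the electric field at $\Lambda$), and all of them carry the same coefficient magnitude $1/(8a)$. This gives $L=8(N-1)$ unitaries, the uniform PREPARE of \eqref{eq:PREPAREInt}, and the conjugated $\text{SELECT}'$ of \eqref{eq:SELECTInt}. After rescaling $H\mapsto 8aH$ and $t\mapsto t/(8a)$ to make the coefficients unit (which alters no asymptotics), the two quantities entering \eqref{eq:mainthm} are $\lambda_\alpha=\sum_{l\neq j}|w_l|=8(N-1)=O(N)$ at rescaled time $t/a$, and $\|H_\alpha\|_\infty=\|H_h\|_\infty\le 8(N-1)=O(N)$ by the triangle inequality over the LCU sum together with $\|U_r\|_\infty=1$ and $\|\text{Pauli}\|_\infty=1$. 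The key observation is that neither of these depends on $\Lambda$: the large (and, in the $g\to\infty$ limit, unbounded) electric term has been absorbed into the interaction frame, which is precisely the point of the construction.

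Finally I would substitute into \eqref{eq:mainthm}. With $\|H_\alpha\|_\infty t = O(Nt/a)$ and $\lambda_\alpha t = O(Nt/a)$, the bound becomes
\[
O\!\left(\frac{Nt}{a}+\frac{N^2t^2}{a^2\epsilon}\,\frac{\log(Nt/a\epsilon)}{\log\log(Nt/a\epsilon)}\right),
\]
and, since the second summand dominates the linear term in the regime of interest, this collapses to \eqref{eq:querO}. The only real care is required in the middle step: counting the Jordan--Wigner terms correctly, confirming that the truncated link operators are genuinely unitary so that $H_h$ is a bona fide linear combination of unitaries (so that PREPARE/SELECT apply verbatim), and bounding $\|H_h\|_\infty$ in a way that is manifestly independent of $\Lambda$; everything else is a direct specialization of \autoref{thm:IPqDqubitsim}. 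I would also note in passing that one may instead move only $H_E$ into the interaction frame and leave $H_M$ in the qubitized part, at the cost of a slightly larger $\lambda_\alpha$ but the same asymptotic scaling.
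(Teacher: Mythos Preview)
Your proposal is correct and follows essentially the same approach as the paper: verify that $H_E$ and $H_M$ commute and are efficiently simulable so that one may enter their joint interaction frame, count the $8(N-1)$ unitary terms in the Jordan--Wigner expansion of $H_h$ to get $\lambda_\alpha=(N-1)/a$, bound $\|H_h\|_\infty\le\lambda_\alpha$ by the triangle inequality, and substitute into \eqref{eq:mainthm} while dropping the subdominant linear term. The paper's proof is terser but structurally identical.
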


\begin{proof}
    Note that $H_h$ is manifestly a linear combination of unitary operators and that $H_M$ and $H_E$ are diagonal, and therefore 1-sparse, and commute with each other. Thus, the conditions of \autoref{thm:IPqDqubitsim} are satisfied and we may move into the interaction frame of both the $H_M$ and $H_E$ terms. 
    
    To compute the explicit form of $\lambda'$, note that $H_h$ consists of a sum over $8(N-1)$ unitary terms each with a coefficient of absolute value $1/(8a)$. We therefore have $(N-1)/a$ for the overall sum. Thus 
    \begin{equation}
    \lambda' = \frac{N-1}{a}.
    \end{equation} 

    Now note that $\|U\|_{\infty} = 1$ for any unitary operator $U$. Then we have $\|H_h\|_{\infty} \leq \lambda'$ by using the triangle-inequality and the sub-multiplicativity of the Schatten infinity norm. Substituting these relationships into~\eqref{eq:mainthm} and retaining the dominant terms gives the claimed query complexity.\end{proof}
    
\begin{corollary}[Hybrid Trotter, qDRIFT, and qubitization I.P Simulation for Schwinger Model]\label{corollary_SM2} Let $H = H_E + H_M + H_h$ be the Schwinger model Hamiltonian as given in~\eqref{eq:SchMElectric},~\eqref{eq:SchInt}, and~\eqref{eq:SchMag}. Then we can perform the Hamiltonian simulation of the Schwinger model with the method of \Cref{thm:trotqDqub} using a number of queries to PREPARE, SELECT, and $W_{H_E + H_M}(t)$ in
    
\begin{equation}
       O \bigg(\frac{Nt^2}{a^2 \epsilon} \frac{\log(Nt^2/(a^2 \epsilon^2))}{\log \log(Nt^2/(a^2 \epsilon^2))} \bigg). \label{eq:querOTrot}
\end{equation}
\end{corollary}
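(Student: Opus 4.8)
The plan is to instantiate \Cref{thm:trotqDqub} with the interaction-frame term $H_j=H_E+H_M$ and the ``$\alpha$'' part $H_\alpha=H_h$. First I would verify the hypotheses: $H_E$ and $H_M$ are diagonal (hence $1$-sparse) and commute with each other, so $W_{H_E+H_M}(\Delta)=e^{-i(H_E+H_M)\Delta}$ is a legitimate oracle, and by~\eqref{eq:JWSchInt} the interaction term $H_h=\sum_{r=1}^{N-1}h_r$ is an explicit linear combination of $8(N-1)$ unitaries, each with coefficient of modulus $1/(8a)$, so $\mathrm{PREPARE}$ and $\mathrm{SELECT}$ are exactly the oracles of~\eqref{eq:PREPAREInt}--\eqref{eq:SELECTInt} and $\lambda_\alpha=\sum_{l\neq j}|w_l|=8(N-1)\cdot\tfrac{1}{8a}=(N-1)/a=O(N/a)$. (As in the proof of \Cref{corollary_SM1} one may rescale all terms by $8a$ and the time by $1/(8a)$ to make $\mathrm{PREPARE}$ uniform; this changes no asymptotics.) The resulting interaction Hamiltonian $H_I(\tau)=e^{i(H_E+H_M)\tau}H_h e^{-i(H_E+H_M)\tau}$ is time-dependent, so qubitization cannot be applied directly and the Trotter/qDRIFT front end of \Cref{thm:trotqDqub} is forced.

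The crucial choice is the Trotter splitting of $H_h$ fed into \Cref{cor:hybridTrotter}: take $H_h=H_h^{\mathrm{even}}+H_h^{\mathrm{odd}}$, collecting in $H_h^{\mathrm{even}}$ (resp.\ $H_h^{\mathrm{odd}}$) the $h_r$ with $r$ even (resp.\ odd), so that $L=2$ in the notation of \Cref{cor:hybridTrotter}. Keeping $L$ constant is precisely what stops the $rL$ factor in~\eqref{eq:trotqDqub} from producing a spurious second power of $N$. Since the Jordan--Wigner strings collapse for nearest-neighbour hopping, $h_r$ is supported only on link $r$ and sites $\{r,r+1\}$; hence the summands within each parity class have pairwise-disjoint supports, commute, and $\|H_h^{\mathrm{even}}\|_\infty=\max_{r\ \mathrm{even}}\|h_r\|_\infty=O(1/a)$ (similarly for odd), so that $\sum_k\|H_k\|_\infty^2=\|H_h^{\mathrm{even}}\|_\infty^2+\|H_h^{\mathrm{odd}}\|_\infty^2=O(1/a^2)$.

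Next I would bound the nested-commutator constant $c_I=\|[H_h^{\mathrm{even}},H_h^{\mathrm{odd}}]\|_\infty$ of \Cref{cor:hybridTrotter} (the interaction-frame conjugation drops out of $c_I$ by unitary invariance of the commutator norm). Expanding and using $[h_r,h_s]=0$ whenever $|r-s|\ge 2$ gives $[H_h^{\mathrm{even}},H_h^{\mathrm{odd}}]=\sum_{r\ \mathrm{even}}([h_r,h_{r-1}]+[h_r,h_{r+1}])$, a sum of $O(N)$ commutators each of norm $\le 2\|h_r\|_\infty\|h_s\|_\infty=O(1/a^2)$; the crude count gives $c_I=O(N/a^2)$ (a colouring/covering argument would give $O(1/a^2)$, which is not needed for the stated bound). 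Feeding $c_I$ and $\sum_k\|H_k\|_\infty^2$ into~\eqref{eq:trotqDtstepIP} yields $r=O\!\big(\tfrac{t^2}{\epsilon}\cdot\tfrac{N}{a^2}\big)=O\!\big(\tfrac{Nt^2}{a^2\epsilon}\big)$.

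Finally I would assemble everything through \Cref{thm:trotqDqub}: with $L=2$ one has $rL=O(Nt^2/(a^2\epsilon))$ and $rL/\epsilon=O(Nt^2/(a^2\epsilon^2))$, so~\eqref{eq:trotqDqub} becomes $O\!\big(\lambda_\alpha t+\tfrac{Nt^2}{a^2\epsilon}\cdot\tfrac{\log(Nt^2/(a^2\epsilon^2))}{\log\log(Nt^2/(a^2\epsilon^2))}\big)$; since $\lambda_\alpha t=O(Nt/a)$ is dominated by $O(Nt^2/(a^2\epsilon))$ in the relevant regime $t\gtrsim a\epsilon$, we recover exactly~\eqref{eq:querOTrot}, and the same bound governs the queries to $\mathrm{PREPARE}$, $\mathrm{SELECT}$, and $W_{H_E+H_M}$ (the latter because each of the $rL$ time-independent factors from the Trotter--qDRIFT stage is conjugated by $e^{\pm i(H_E+H_M)\tau_k}$). \textbf{The main obstacle} is the combination of bounding $c_I$ from the locality of $H_h$ with the realization that $H_h$ must be Trotterized into only $O(1)$ parity classes rather than into its $\Theta(N)$ individual terms --- choosing $L$ badly here is exactly what would inflate the cost back to the $O(N^2t^2/(a^2\epsilon))$ scaling of \Cref{corollary_SM1}.
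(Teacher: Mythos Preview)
Your overall strategy is exactly the paper's: enter the interaction frame of $H_E+H_M$, split $H_h$ into even and odd parity classes so that $L=2$ in \Cref{thm:trotqDqub}, bound $c_I$ from nearest-neighbour locality, and then substitute into~\eqref{eq:trotqDqub}. The gap is in your bound on the qDRIFT contribution $\sum_k\|H_k\|_\infty^2$. You assert that because the $h_r$ within a parity class have pairwise-disjoint supports, $\|H_h^{\mathrm{even}}\|_\infty=\max_{r\ \mathrm{even}}\|h_r\|_\infty=O(1/a)$. This is false: for Hermitian operators on disjoint tensor factors the spectral norm is \emph{additive}, not a maximum (e.g.\ $\|Z_1+Z_2\|_\infty=2$, not $1$). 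Each $h_r$ has eigenvalues $\{0,\pm 1/(2a)\}$, so $\|H_h^{\mathrm{even}}\|_\infty=\Theta(N/a)$ and hence $\sum_k\|H_k\|_\infty^2=\Theta(N^2/a^2)$. This term would then dominate $c_I=O(N/a^2)$ in~\eqref{eq:trotqDtstepIP}, giving $r=O(N^2t^2/(a^2\epsilon))$ and recovering only the weaker scaling of \Cref{corollary_SM1} rather than~\eqref{eq:querOTrot}.

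The paper does this step differently: it evaluates the qDRIFT sum term-by-term over the individual $h_r$ (obtaining $4\sum_{r=1}^{N-1}\|\,\cdot\,\|_\infty^2\le 32(N-1)/(64a^2)=O(N/a^2)$), which is the bound one gets from the finer $(N-1)$-term Trotter decomposition rather than the two-term even/odd one. It then still substitutes $L=2$ into~\eqref{eq:trotqDqub}. The implicit justification is that the summands within each parity class commute, so refining the even/odd split down to the individual $h_r$ costs no additional Trotter error, while the disjoint-support exponentials produced in each timestep can be recombined into a single qubitization call per parity class. Your argument does not supply this refinement, and without it, correcting your norm bound forces $r$ back up by a factor of $N$.
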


\begin{proof}
    We directly apply \Cref{cor:hybridTrotter} to the situation where we move into the interaction frame of $H_E$ and $H_M$, leaving only the $H_h$ term of the Schwinger model remaining. For the terms in $H_h$, we use the notation $U_r \sigma_r^i \sigma_{r+1}^j$, where $i = 1$ or $2$ so that $\sigma^1_r = X_r$ and $\sigma^2_r = Y_r$. Since $[U_r, U_s] = 0$ for all $r,s$ and $[U_r, U^{\dag}_r] = 0$ since $U_r$ is unitary (and therefore normal), we need only focus on the commutators between the Pauli matrices in computing $c_I$. But since Pauli operators acting on different sites commute, we can further specialize to considering those terms that yield $[X_r, Y_r] = iZ_r$. Therefore, given a term $U_r \sigma_r^i \sigma_{r+1}^j$, it fails to commute with only $U_{r-1} \sigma^k_{r-1} \sigma^l_{r}$ and $U_{r+1} \sigma_{r+1}^m \sigma_{r+2}^n$, with analogous statements holding for $U^{\dag}_r \sigma_r^i \sigma_{r+1}^j$. In other words, the terms involving a particular site $r$ fail to commute with only those involving adjacent sites. Terms such as $[U_r X_r X_{r+1}, U_r Y_r Y_{r+1}]$ do not contribute since $$[U_r X_r X_{r+1}, U_r Y_r Y_{r+1}] = U_r^2 X_r Y_r X_{r+1}Y_{r+1} - U_r^2 Y_r X_r Y_{r+1} X_{r+1} = 0,$$ where we've used in the anti-commutation relation twice $\{X_k, Y_k\} = 0$ to obtain the last equality. 
    
    Therefore, we can decompose $H_h$ into ``even" and ``odd" pieces as follows:
    \begin{align*} 
        H_h^{\text{even}} &= \frac{1}{8a}  \sum_{r=1}^{(N-1)/2} [(U_{2r} + U_{2r}^{\dag})(X_{2r} X_{2r+1} + Y_{2r} Y_{2r+1}) + i(U_{2r} - U_{2r}^{\dag})(X_{2r} X_{2r+1} - Y_{2r} Y_{2r+1})] \\
        H_h^{\text{odd}} &= \frac{1}{8a}  \sum_{r=1}^{(N-1)/2} [(U_{2r-1} + U_{2r-1}^{\dag})(X_{2r-1} X_{2r} + Y_{2r-1} Y_{2r}) + i(U_{2r-1} - U_{2r-1}^{\dag})(X_{2r-1} X_{2r} - Y_{2r-1} Y_{2r})]. 
    \end{align*} 
    
    From the preceding discussion, given a particular term in the ``odd" sum, there are exactly two terms in the even sum that fail to commute with it. In particular, there is exactly one term in $H^{\text{even}}_I$ with higher site index that fails to commute with it. Then by the definition of $c_I$, $$c_I = \|[H^{\text{even}}_p, H^{\text{odd}}_q]\| = \frac{(N-1)}{2} \frac{1}{64 a^2} = \frac{N-1}{128 a^2}.$$ 
    
    Similarly, since each term in $H_h$ has norm 1, we have $$4 \sum_{r=1}^{N-1} \|H_h\|^2_{\infty} \le \frac{32(N-1)}{64a^2.}$$ 
    
    Substituting these into~\eqref{eq:trotqDtstepIP} gives
    \begin{equation}
       r \geq \frac{65(N-1)t^2}{128 a^2 \epsilon}.
    \end{equation}

Substituting this and $L=2$ into~\eqref{eq:trotqDqub} and retaining the dominant terms gives
    \begin{equation}
        O \bigg(\frac{Nt^2}{a^2 \epsilon} \frac{\log(Nt^2/(a^2 \epsilon^2))}{\log \log(Nt^2/(a^2 \epsilon^2))} \bigg)
    \end{equation}
    as claimed. Note that these choices for the errors ensure that the total error for approximation of the ideal time-evolution channel via this entire procedure is $\epsilon/2 + (2r)(\epsilon/4r) = \epsilon/2 + \epsilon/2 = \epsilon$. \end{proof}

Comparing the results of the preceeding corollaries, we see that Trotterizing first before applying qDRIFT can result in improvements in the query complexity in situations where the Hamiltonian has additional commutator structure that can be exploited. For unstructured problems, the additional Trotterization step is not generally useful. 

\subsection{Construction of Prepare and Select Oracles}
\label{ssec:qubitiz_schwinger}

We now give high-level circuit implementations of the aforementioned SELECT and PREPARE oracles for the $H_h$ term of the Schwinger model.

We opt to employ a unary encoding of the control qubits $\ket{c_i}$ needed for the prepare and select oracles, i.e. $\ket{i} = \ket{0...1...0}$, where the $1$ occurs on the $i$-th spot in the ket. Though this encoding requires a number of qubits linear in the number of terms in the LCU decomposition rather than logarithmic for the implementation of the oracles, it greatly simplifies the control structures required within the circuits. 

To implement the select oracle for $H_h$, we exploit certain patterns within the coefficients of the terms in $H_h$. Note that there are terms with $U_r^{\dag}$ and $U_r$, phases of $\pm i$, and $X_r$ and $Y_r$. We can switch between $X$ and $Y$ via the identity $SXS^{\dag} = Y$. From techniques involving two's complement numbers, there exists a unitary operator $Q$ that can flip $U_r$ to $U^{\dag}_r$ \cite{Sanders_2020}. Lastly, the factors $\pm i$ can be inserted via suitable insertions of controlled-Z gate and controlled-S gate operations. The circuit that accomplishes this is given in \autoref{fig:SELECTcirc}. 
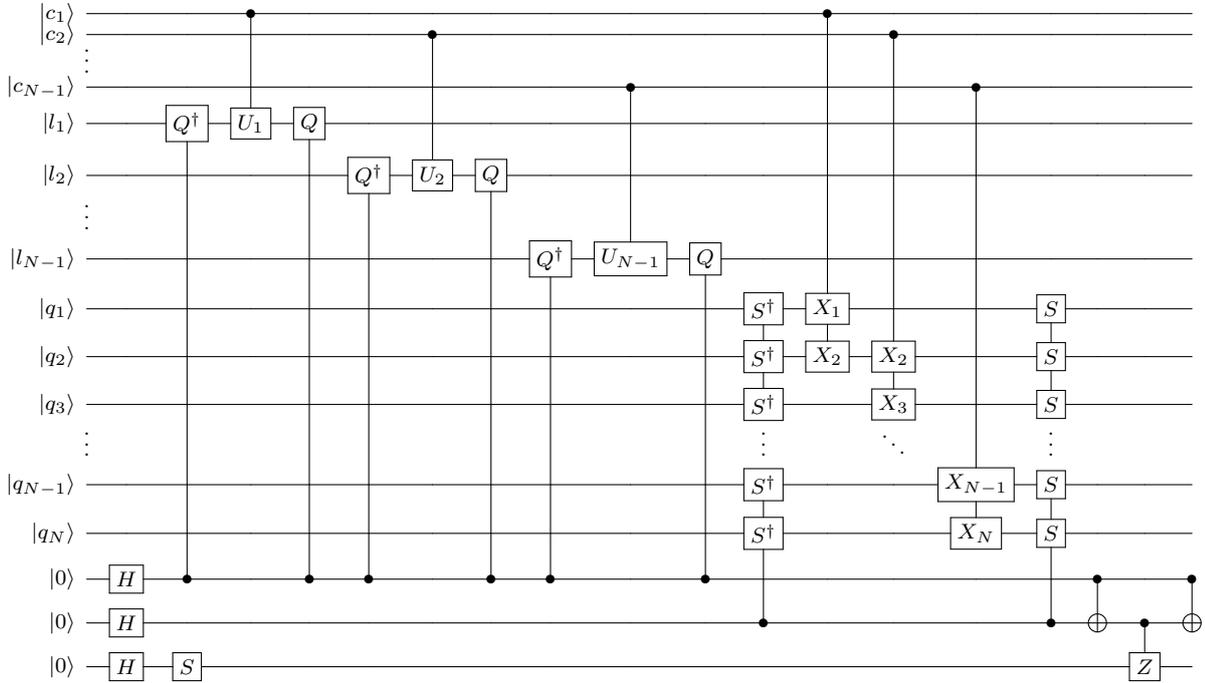
\begin{figure}[ht]
\footnotesize 
    \[
           \Qcircuit @C=1em @R=.7em  {
            \lstick{\ket{c_1}} & \qw & \qw &\ctrl{5} & \qw & \qw & \qw & \qw & \qw & \qw & \qw & \qw & \ctrl{10} & \qw & \qw & \qw & \qw & \qw & \qw \\
            \lstick{\ket{c_2}} & \qw & \qw & \qw & \qw & \qw & \ctrl{5} & \qw & \qw & \qw & \qw & \qw & \qw & \ctrl{10} & \qw & \qw & \qw & \qw & \qw \\
            \vdots & & & & & & & & & & & & & & \\
            & & \\
            \lstick{\ket{c_{N-1}}} & \qw & \qw & \qw & \qw & \qw & \qw & \qw & \qw & \ctrl{5} & \qw & \qw & \qw & \qw & \ctrl{11} & \qw & \qw & \qw & \qw \\
            \lstick{\ket{l_1}} & \qw & \gate{Q^{\dag}} & \gate{U_1} & \gate{Q} & \qw & \qw & \qw & \qw & \qw & \qw & \qw & \qw & \qw & \qw & \qw & \qw & \qw & \qw \\
            \lstick{\ket{l_2}} & \qw & \qw & \qw & \qw & \gate{Q^{\dag}} & \gate{U_2} & \gate{Q} & \qw & \qw & \qw & \qw & \qw & \qw & \qw & \qw & \qw & \qw & \qw \\
            \vdots & & \\
            & & \\
            \lstick{\ket{l_{N-1}}} & \qw & \qw & \qw & \qw & \qw & \qw & \qw & \gate{Q^{\dag}} & \gate{U_{N-1}} & \gate{Q} & \qw & \qw & \qw & \qw & \qw & \qw & \qw & \qw \\ 
            \lstick{\ket{q_1}} & \qw & \qw & \qw & \qw & \qw & \qw & \qw & \qw & \qw & \qw & \gate{S^{\dag}} & \gate{X_1} & \qw & \qw & \gate{S} & \qw & \qw & \qw \\
            \lstick{\ket{q_2}} & \qw & \qw & \qw & \qw & \qw & \qw & \qw & \qw & \qw & \qw & \gate{S^{\dag}} \qwx & \gate{X_2} \qwx & \gate{X_2} & \qw & \gate{S} \qwx & \qw & \qw & \qw \\
            \lstick{\ket{q_3}} & \qw & \qw & \qw & \qw & \qw & \qw & \qw & \qw & \qw & \qw & \gate{S^{\dag}} \qwx & \qw & \gate{X_3} \qwx & \qw & \gate{S} \qwx & \qw & \qw & \qw \\
            \vdots & & & & & & & & & & & \vdots & & \ddots & & \vdots \\
            & & & & & & & & & & & \\
            \lstick{\ket{q_{N-1}}} & \qw & \qw & \qw & \qw & \qw & \qw & \qw & \qw & \qw & \qw & \gate{S^{\dag}} & \qw & \qw & \gate{X_{N-1}} & \gate{S} & \qw & \qw & \qw \\
            \lstick{\ket{q_N}} & \qw & \qw & \qw & \qw & \qw & \qw & \qw & \qw & \qw & \qw & \gate{S^{\dag}} \qwx & \qw & \qw & \gate{X_N} \qwx & \gate{S} \qwx & \qw & \qw & \qw \\
            \lstick{\ket{0}} & \gate{H} & \ctrl{-12} & \qw & \ctrl{-12} & \ctrl{-11} & \qw & \ctrl{-11} & \ctrl{-8} & \qw & \ctrl{-8} & \qw & \qw & \qw & \qw & \qw & \ctrl{1} & \qw & \ctrl{1} \\
            \lstick{\ket{0}} & \gate{H} & \qw & \qw & \qw & \qw & \qw & \qw & \qw & \qw & \qw & \ctrl{-2} & \qw & \qw & \qw & \ctrl{-2} & \targ & \ctrl{1} & \targ \\
            \lstick{\ket{0}} & \gate{H} & \gate{S} & \qw & \qw & \qw & \qw & \qw & \qw & \qw & \qw & \qw & \qw & \qw & \qw & \qw & \qw & \gate{Z} & \qw 
            }
    \]
\caption{SELECT circuit for the $H_h$ term in Schwinger Model} 
\label{fig:SELECTcirc} 
\end{figure}

Here, $\ket{c_r}$ represent the control qubits used to implement the controlled operations, $\ket{l_r}$ the qubits corresponding to the links of the system, and $\ket{q_r}$ the qubits corresponding to the sites. Note how the control gate structure in this unary encoding is much more simple than what would have been required with a binary encoding. Though the latter encoding would have required only $\log(N-1)$ control qubits instead of $N-1$ as above, the advantage there is mitigated by the fact that numerous multi-controlled gates would have been required. 

Since the terms in $H_h$ for a given $r$ only differ by coefficients of $\pm 1$ or $\pm i$, these can be implemented via the insertion of a $Z$ or $S$ gate through the above constructions that exploit the aforementioned patterns in $H_h$. Note that the number of qubits needed to specify the state of link $\ket{l_r}$ will depend on the cut off for the electric energy term chosen. If our cutoff is $\Lambda$, then $\ket{l_r}$ will be a $\log \Lambda$-qubit state and $U_r$ a $\log \Lambda$-qubit operator. 

The circuit implementation of the PREPARE oracle reduces to preparing a uniform superposition state in binary, as per~\eqref{eq:PREPAREInt}, and then converting the encoding to a unary one. The overall circuit with the general pattern is depicted in \autoref{fig:PREPAREcirc} with $k = \log(8(N-1))$ ancilla control qubits. Note that since in the unary encoding an integer $k$ is expressed as a state with a $1$ in the $k$-th spot and 0's elsewhere, the $X$ gate on $\ket{0}_1$ and the subsequent swaps have the effect of permuting the $1$ to the appropriate position. 

\begin{figure}[ht]
    \[
           \Qcircuit @C=1em @R=.7em {
            \lstick{\ket{b_1}} & \gate{H} & \ctrlo{3} & \ctrlo{3} & \ctrlo{3} & \qw & \qw & \qw \\
            & \vdots & & & & \ddots & & \\
            & & \\
            \lstick{\ket{b_{k-1}}} & \gate{H} & \ctrlo{1} & \ctrl{1} & \ctrl{1} & \qw & \qw & \qw \\
            \lstick{\ket{b_k}} & \gate{H} & \ctrl{1} & \ctrlo{2} & \ctrl{2} & \qw & \qw & \qw \\
            \lstick{\ket{c}_1} & \gate{X} & \qswap & \qswap & \qw & \qw & \qw & \qw \\
            \lstick{\ket{c}_2} & \qw & \qswap \qwx & \qw \qwx & \qswap & \qw & \qw & \qw \\
            \lstick{\ket{c}_3} & \qw & \qw & \qswap \qwx & \qw \qwx & \qw & \qw & \qw \\
            \lstick{\ket{c}_4} & \qw & \qw & \qw & \qswap \qwx & \qw & \qw & \qw \\
            & \vdots & & \vdots & & \cdots \\
            & \\
            \lstick{\ket{c}_{N-1}} & \qw & \qw & \qw & \qw & \qw & \qw & \qw
            }
    \]
\caption{PREPARE circuit for the $H_h$ term in Schwinger Model} 
\label{fig:PREPAREcirc} 
\end{figure}
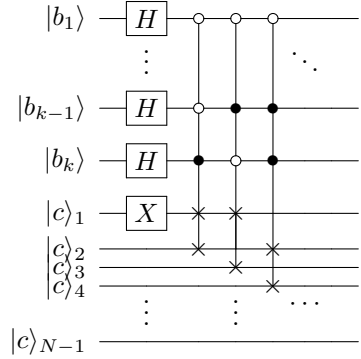

The control structure on the bits $b_j$ encoding the binary integers for the controlled-swap gates is given precisely by the binary representation of that index. For example, since the binary integer $\ket{00\ldots1}$ gets mapped to $\ket{01\ldots 0}$ in our unary encoding, we have to do a swap on the $\ket{0}_1$ and $\ket{0}_2$ qubits controlled on the first $k-1$ $b$ qubits being 0 and $b_k$ being 1. 

\subsection{Gate Complexity Analysis}

We now analyze the gate complexity per query to $\text{CTRL}(\mathcal{W})$ of our simulation protocol and do so by analyzing the circuits given in \autoref{fig:SELECTcirc} and \autoref{fig:PREPAREcirc}. It suffices to express the complexity in terms of Toffoli gates since they dominate the computational complexity compared to Clifford operations. 

Our analysis proceeds as follows: 

\begin{enumerate}
    \item First consider the PREPARE$^{\dag}$ and PREPARE operations in \autoref{fig:walkCirc}. We make the approximation that they have roughly the same gate complexity and that it therefore suffices to determine the gate complexity of just the PREPARE circuit. 
    
    From \autoref{fig:PREPAREcirc}, note that we have $N-1$ multiply-controlled swap gates since we needed to perform a binary-to-unary conversion to the $N-1$ qubits we have. Each can be converted to standard $C^k(\text{SWAP})$ by inserting $X$ gates on either side of the 0 controls. Since we are assuming Pauli operations are approximately cost-free, it suffices to determine the gate complexity of these $N-1$ $C^k(\text{SWAP})$ gates. Standard circuit arguments show that the following identities hold:
    $$
    \Qcircuit @C=1em @R=.5em {
    & \ctrl{1} & \qw &&& \ctrl{1} & \ctrl{1} & \ctrl{1} & \qw &&& \qw & \ctrl{1} & \qw & \qw \\
    & \ctrl{3} & \qw &&& \ctrl{3} & \ctrl{3} & \ctrl{3} & \qw &&& \qw & \ctrl{3} & \qw & \qw \\
    & & \vdots & & & & & & \vdots & & \vdots & & & \vdots \\
    & & & = & & & & & & = & & & & \\
    & \ctrl{1} & \qw &&& \ctrl{3} & \ctrl{1} & \ctrl{2} & \qw &&& \qw & \ctrl{1} & \qw & \qw \\
    & \ctrl{1} & \qw &&& \qw & \ctrl{1} & \qw & \qw &&& \qw & \ctrl{1} & \qw & \qw
    \inputgroupv{1}{5}{.8em}{.8em}{k} \\
    & \qswap & \qw &&& \targ & \ctrl{1} & \targ & \qw &&& \targ & \ctrl{1} & \targ & \qw \\
    & \qswap \qwx & \qw &&& \ctrl{-1} & \targ & \ctrl{-1} & \qw &&& \ctrl{-1} & \targ & \ctrl{-1} & \qw
    }
    $$

    Thus, each $C^k(\text{SWAP})$ gate can be decomposed into a $C^{k+1}(\text{NOT})$ gate and 2 CNOT gates. From Corollary 1 in \cite{He_Toffoli}, we get that $C^{k+1}(\text{NOT})$ gate can be decomposed into $8(k+2)-24 = 8k - 8$ Toffoli gates using only a single auxiliary qubit that can be reused. This gives roughly $2(N-1)(8k-8)$ Toffoli gates that are needed for both the PREPARE and PREPARE$^{\dag}$ parts of \autoref{fig:walkCirc}. 
    Overall we have a gate complexity of $$O(N \log N)$$ with 1 ancilla qubit needed. 
    
    The methods in \cite{bausch2021} can be used to perform PREPARE circuit with $N \log N$ controlled swap gates, each of which can be decomposed into at most four non-Clifford operations using~\cite{jones2013low}. This results in a smaller gate cost but ultimately does not affect the asymptotic gate complexity shown above. 
    
    \item The multiply-controlled $Z$ gate in \autoref{fig:walkCirc} is controlled only on the $N-1$ control qubits that make up the unary encoding portion of \autoref{fig:PREPAREcirc}. Applying the above corollary again, we get a gate complexity of $$O(N)$$ with 1 ancilla qubit needed. 
    
    \item To analyze the gate complexity of the controlled-SELECT operation in \autoref{fig:walkCirc}, we first examine the operations that don't involve $U_r$ and $Q$. The external control distributes among the operations involving the $2(N-1)$ controlled-$S$ and controlled-$S^{\dag}$ gates, $2(N-2) + 2$ controlled-$X_r$ gates, and the final 3 CNOT operations. Adding these together yields a total of $4N -1$ Toffoli gates, giving a gate complexity of $$O(N)$$ with no ancilla qubits needed. 
    
    \item For the $U_r$ gates, note that its action on states is that of an incrementer. This can be implemented by utilizing the quantum ripple-carry adder circuit given by Cuccaro~\cite{cuccaro2004new}. Since we increment $\log \Lambda$-qubit numbers, the gate complexity is given in the paper to be $2\log \Lambda -1$ Toffoli gates, $5\log \Lambda -3$ CNOTs, and $2\log \Lambda - 4$ negations. 
    
    Since we must perform a controlled-$U_r$ operation for the walk operator, we get $(5 \log \Lambda -3)$ Toffoli gates, and $(2 \log \Lambda - 1)$ $C^3$(NOT) gates. Again from Corollary 1 in \cite{He_Toffoli}, the latter is equivalent to $8(2\log \Lambda -1)$ Toffoli gates with $1$ extra qubit needed. Since we must perform $N-1$ of these $U_r$ operations, we have altogether $(N-1)(21 \log \Lambda -11)$ Toffoli gates, giving a gate complexity of $$O(N \log \Lambda)$$ with 1 ancilla qubit needed.

    \item The $Q$ and $Q^{\dag}$ operations can be implemented with $O(\log(\Lambda))$ Toffoli gates each (see \cite{Sanders_2020}), with the extra controls giving only constant pre-factors to the cost. Since there are $2(N-1)$ of these operations performed, we have a gate complexity of $$O(N \log \Lambda)$$
    
    \item Finally, the cost of performing the diagonal Hamiltonian simulation is $O(N \log^2(\Lambda))$ as the computation of the diagonal elements of the Hamiltonian involves squaring the input value, which can be performed in time $O(\log^2(\Lambda))$ (see Lemma 2 of~\cite{Shaw_2020}).
    
\end{enumerate}

These considerations give us the following corollaries:

\begin{corollary}[Gate Complexity for Hybrid qDRIFT and Qubitization I.P. Simulation of Schwinger Model]
    Let $H = H_h + H_M + H_E$ be the Schwinger model Hamiltonian as given in~\eqref{eq:SchMElectric},~\eqref{eq:SchInt}, and~\eqref{eq:SchMag}. Then the Hamiltonian simulation of $H$ can be performed using the method of \Cref{thm:IPqDqubitsim} with a gate complexity in 
    
   \begin{equation}
        O \bigg(\frac{N^3 t^2}{a^2 \epsilon} \frac{\log(Nt/a\epsilon)}{\log\log(Nt/a\epsilon)} \log^2(N\Lambda) \bigg), \label{eq:gatecomplexO}
    \end{equation} 
    with an ancilla qubit overhead of O(1). In $\tilde{O}$ notation, the gate complexity is  
    \begin{equation}
        \tilde{O}\bigg(\frac{N^3 t^2}{a^2 \epsilon} \log^2 \Lambda \bigg). \label{eq:gatecomplexOtilde}
    \end{equation}
    
\end{corollary}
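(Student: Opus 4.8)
The plan is to multiply the query complexity already established in \Cref{corollary_SM1} by the gate cost per query read off from the gate-count analysis immediately above. First I would recall that \eqref{eq:querO} bounds the combined number of calls to $\mathrm{PREPARE}$, $\mathrm{PREPARE}^{\dagger}$, $\mathrm{SELECT}$ (equivalently the conjugated $\mathrm{SELECT}'$ of \eqref{eq:SELECTInt}), the reflection $\mathcal{R}_L$, and the interaction-picture oracle $W_{H_E+H_M}(t)$ by $O\left(\frac{N^2 t^2}{a^2\epsilon}\frac{\log(Nt/a\epsilon)}{\log\log(Nt/a\epsilon)}\right)$; in particular the number of invocations of $W_{H_E+H_M}(t)$ is $O(r)=O\left(\frac{N^2 t^2}{a^2\epsilon}\right)$ once the diagonal evolutions are factored out of the walk operator (they telescope between successive walk steps within a sub-interval, so only the two at each sub-interval boundary survive), and in any event at most the bound in \eqref{eq:querO}.

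Next I would collect the per-query gate costs established above: $\mathrm{PREPARE}$ and $\mathrm{PREPARE}^{\dagger}$ cost $O(N\log N)$ Toffoli gates, the multiply-controlled $Z$ in $\mathcal{R}_L$ costs $O(N)$, the part of $\mathrm{SELECT}$ not involving $U_r$ or $Q$ costs $O(N)$, the $N-1$ controlled incrementers $U_r$ cost $O(N\log\Lambda)$ in total, the $Q,Q^{\dagger}$ flips cost $O(N\log\Lambda)$, and one call to the diagonal Hamiltonian simulation $W_{H_E+H_M}(t)$ costs $O(N\log^2\Lambda)$ since computing each $E_r^2$ requires squaring a $\log\Lambda$-bit register via Lemma 2 of~\cite{Shaw_2020}. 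Each gadget here — the multi-controlled-$\mathrm{NOT}$/$\mathrm{SWAP}$ decompositions of~\cite{He_Toffoli}, the flip operator of~\cite{Sanders_2020}, the ripple-carry adder of~\cite{cuccaro2004new}, and the squaring circuit of~\cite{Shaw_2020} — runs with a single reusable scratch qubit, so the workspace beyond the $O(\log N)$-qubit $\mathrm{PREPARE}$ register and the $O(\log\Lambda)$-qubit per-link registers is $O(1)$.

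I would then multiply the two. The $\mathrm{PREPARE}$/$\mathrm{SELECT}$/reflection calls contribute at most $O\left(\frac{N^3 t^2}{a^2\epsilon}\frac{\log(Nt/a\epsilon)}{\log\log(Nt/a\epsilon)}\log(N\Lambda)\right)$, and the $O(r)$ diagonal-evolution calls contribute $O\left(\frac{N^3 t^2}{a^2\epsilon}\log^2\Lambda\right)$; both are dominated by $O\left(\frac{N^3 t^2}{a^2\epsilon}\frac{\log(Nt/a\epsilon)}{\log\log(Nt/a\epsilon)}\log^2(N\Lambda)\right)$, which is~\eqref{eq:gatecomplexO}. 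Since the only workspace used is the single reusable ancilla noted above, the ancilla overhead is $O(1)$, and absorbing the factor $\frac{\log(Nt/a\epsilon)}{\log\log(Nt/a\epsilon)}$ together with the $N$-dependent logarithms into $\tilde{O}$ gives~\eqref{eq:gatecomplexOtilde}.

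The main obstacle is the bookkeeping in the middle step: one must check that forming $\mathrm{SELECT}'$ by conjugating $\mathrm{SELECT}$ with $e^{\pm i(H_E+H_M)t}$ does not inflate the count — the diagonal evolutions factor out of the walk operator so that at most $O(1)$ of them are needed per qDRIFT sub-interval rather than per walk step — that the dominant per-query cost really is the $O(N\log^2\Lambda)$ electric-field squaring and not one of the lighter gadgets, and that every multi-controlled gate, adder, and arithmetic sub-routine can indeed be executed with an $O(1)$ reusable ancilla so that the stated overhead holds.
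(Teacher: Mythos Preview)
Your proposal is correct and follows the same approach as the paper: sum the per-query gate costs from the six itemized steps to obtain $O(N\log^2(\Lambda)+N\log N)\subseteq O(N\log^2(N\Lambda))$, then multiply by the query bound~\eqref{eq:querO}. The paper's proof is in fact shorter than yours, as it simply treats the diagonal-evolution cost as part of the per-query cost without invoking the telescoping of $e^{\pm i(H_E+H_M)t}$ across successive walk steps; your telescoping observation is valid but unnecessary for the stated bound, since the $O(N\log^2\Lambda)$ term already dominates the per-query cost and the final expression absorbs it either way.
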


\begin{proof}
    Summing up the scaling results from steps 1-6 as outlined above, we get a \textit{per-query} gate complexity of 
    \begin{equation}
    O(N \log^2(\Lambda) + N\log(N)) \subseteq O(N \log^2(N\Lambda) ).
    \end{equation}
    Multiplying these by~\eqref{eq:querO} and retaining the dominant terms yields the stated results. 
\end{proof}

\begin{corollary}[Gate Complexity for Hybrid Trotter, qDRIFT, Qubitization I.P. Simulation of Schwinger Model]\label{cor:schw_best_hybrid}  Let $H = H_h + H_M + H_E$ be the Schwinger model Hamiltonian as given in~\eqref{eq:SchMElectric},~\eqref{eq:SchInt}, and~\eqref{eq:SchMag}. Then the Hamiltonian simulation of $H$ can be performed using the method of \Cref{thm:trotqDqub} with a gate complexity in 
    \begin{equation}
        O \bigg(\frac{N^2t^2}{a^2 \epsilon} \frac{\log(Nt^2/(a^2 \epsilon^2))}{\log \log(Nt^2/(a^2 \epsilon^2))} \log^2(\Lambda) \bigg), \label{eq:gatecomplexTrotO}
    \end{equation} 
    with an ancilla qubit overhead of O(1). In $\tilde{O}$ notation, the gate complexity is  
    
    \begin{equation}
        \tilde{O}\bigg(\frac{N^2t^2}{a^2 \epsilon} \log^2(\Lambda) \bigg). \label{eq:gatecomplexTrotOtilde}
    \end{equation}

\end{corollary}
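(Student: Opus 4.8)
The plan is to combine the query complexity already established in \Cref{corollary_SM2} with the per-query gate cost coming from the six-step analysis of the preceding gate-complexity discussion, exactly as in the proof of the companion corollary for the pure qDRIFT/qubitization I.P.\ scheme. \Cref{corollary_SM2} gives $O\!\left(\frac{Nt^2}{a^2\epsilon}\frac{\log(Nt^2/(a^2\epsilon^2))}{\log\log(Nt^2/(a^2\epsilon^2))}\right)$ queries to $\mathrm{PREPARE}$, $\mathrm{SELECT}$, and $W_{H_E+H_M}(t)$ for the Trotter/qDRIFT/qubitization I.P.\ scheme of \Cref{thm:trotqDqub}; what remains is to multiply this by the gate count spent per query.

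First I would note that the extra Trotter step relative to \Cref{thm:IPqDqubitsim} only splits $H_h$ into $H_h^{\mathrm{even}}$ and $H_h^{\mathrm{odd}}$, each still built from $\Theta(N)$ link operators $U_r$ and $\Theta(N)$ site Paulis, while the interaction-picture conjugation remains $e^{i(H_E+H_M)\tau}$ acting on the full $(N-1)$-link register. So the PREPARE and SELECT$'$ circuits of \autoref{fig:PREPAREcirc} and \autoref{fig:SELECTcirc}, now with the index register ranging only over the even (resp.\ odd) terms, have the same asymptotic shape as in the full-$H_h$ case, and the six per-query contributions are unchanged up to constants: $O(N\log N)$ for the binary-to-unary PREPARE/PREPARE$^\dagger$ conversion, $O(N)$ for the multiply-controlled $Z$ of \autoref{fig:walkCirc}, $O(N)$ for the non-$U_r$/$Q$ part of controlled-SELECT, $O(N\log\Lambda)$ for the $U_r$ incrementers, $O(N\log\Lambda)$ for the $Q,Q^\dagger$ operators, and $O(N\log^2\Lambda)$ for the diagonal simulation of $H_E+H_M$ realized by $W_{H_E+H_M}$. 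Summing, the per-query gate cost is $O(N\log^2\Lambda+N\log N)$, using only $O(1)$ reusable ancillas; its cutoff-dominant part is $O(N\log^2\Lambda)$, with the $N\log N$ piece absorbed into $\tilde O$.

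Multiplying the query bound of \Cref{corollary_SM2} by $O(N\log^2\Lambda)$ and keeping dominant terms gives $O\!\left(\frac{N^2t^2}{a^2\epsilon}\frac{\log(Nt^2/(a^2\epsilon^2))}{\log\log(Nt^2/(a^2\epsilon^2))}\log^2\Lambda\right)$, hence $\tilde O\!\left(\frac{N^2t^2}{a^2\epsilon}\log^2\Lambda\right)$ with $O(1)$ ancilla overhead. The per-qubitization-step error tolerance ($\delta=\epsilon/(2rL)$ in the proof of \Cref{thm:trotqDqub}) enters only through $\log(1/\delta)$, which is already folded into the query bound, so it does not change the gate scaling.

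The step I expect to be the main obstacle is the verification that the Trotter split does not secretly reduce the per-query cost: one must confirm that at each of the $rL$ qubitization steps the conjugation by $e^{i(H_E+H_M)\tau}$ is still applied to the entire link register (so the full $O(N\log^2\Lambda)$ diagonal-simulation cost is incurred and nothing cheaper), and that the unary control encoding and the circuit constructions of the previous subsection transfer cleanly when restricted to the even/odd sub-sums of $H_h$. Once that is checked, the result is just the product of the two established bounds.
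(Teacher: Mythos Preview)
Your proposal is correct and follows the same approach as the paper: multiply the query complexity from \Cref{corollary_SM2} by the $O(N\log^2\Lambda)$ per-query gate cost derived in the preceding subsection. The paper's own proof is a single line (``Multiplying~\eqref{eq:querOTrot} by $N \log(N\Lambda)$ gives the big-$O$ cost''), so your version is actually more detailed---and arguably more accurate, since the factor should indeed be $N\log^2\Lambda$ to match the stated $\log^2(\Lambda)$ in~\eqref{eq:gatecomplexTrotO}, suggesting a minor typo in the paper's one-line argument.
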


\begin{proof}
    Multiplying~\eqref{eq:querOTrot} by $N \log(N\Lambda)$ gives the big-$O$ cost. Dropping all sub-dominant logarithmic factors gives the $\tilde{O}$ scaling. 
\end{proof}


\subsection{Comparison with Trotterization} 


The results of~\eqref{eq:gatecomplexOtilde} and~\eqref{eq:gatecomplexTrotOtilde} can be directly compared with the result given in Corollary 11 of~\cite{Shaw_2020} for using a second-order Trotter-Suzuki formula to perform the quantum simulation of the Schwinger model. We will first consider the regime in which simulations are carried at constant $1/(ga)=O(1)$ and for fixed $m/g=O(1)$. For this condition, we can use the result in Corollary 9 of~\cite{Shaw_2020} which after rescaling the time variable $T \rightarrow ag^2 t/2$ to align with our normalization conventions for the Schwinger Model Hamiltonian, gives a total $T$-gate cost of $$\tilde{O}\bigg(\frac{N^{3/2} t^{3/2} \Lambda a^{1/2} g^2}{\epsilon^{1/2}}\bigg) = \tilde{O}\bigg(\frac{N^{3/2} t^{3/2} \Lambda g^{3/2}}{\epsilon^{1/2}}\bigg)\;.$$

In the same regime, the result of~\eqref{eq:gatecomplexTrotOtilde} gives a gate complexity in
$$\tilde{O}\bigg(\frac{N^2 t^2}{a^2\epsilon}\log^2(\Lambda)\bigg)=\tilde{O}\bigg(\frac{N^2 t^2g^2}{\epsilon}\log^2(\Lambda)\bigg)\;.$$
We then see that the hybrid I.P. scheme provides a quasi-exponential speedup with respect to the electric cutoff $\Lambda$ over the second-order Trotter-Suzuki approach, at the expense of a slightly worse scaling in all the other parameters $(N,t,g,\epsilon)$.

In order to extract physical observables however, it is important to consider that the number of sites $N$ and the the lattice spacing cannot be chosen independently as the product $L=Na$ gives the physical size of the system. In past numerical simulations it was found that choosing $Nga=O(10)$ is appropriate for a large number of configurations (see e.g.~\cite{Ba_uls_2013}). In order to keep our derivation general, we will then introduce the dimensionless parameter $l=Nga$. In addition to the thermodynamic limit $Na\to\infty$, one also has to work with $ga\to0$ in order to recover the continuum limit of the theory. The resulting gate complexity of the hybrid I.P. algorithm from Corollary~\ref{cor:schw_best_hybrid} is found to be $$\tilde{O}\bigg(\frac{l^2 t^2}{g^2a^4\epsilon}\log^2(\Lambda)\bigg)\;,$$
while for the second order Trotter-Suzuki scheme we have to consider two distinct regimes
\begin{itemize}
    \item large cutoff limit $\Lambda g a >1$, in which case the $T$-gate count is bounded by $$\tilde{O}\bigg(\frac{l^{3/2} t^{3/2} \Lambda}{a^{3/2}\epsilon^{1/2}}\bigg)\;,$$
    \item small lattice spacing limit $\Lambda g a < 1$, in which case the result of Corollary 9 of~\cite{Shaw_2020} does not hold anymore. The T-gate count can be found instead by using the more general result from Corollary 11 there, resulting in the $\Lambda$-independent scaling $$\tilde{O}\bigg(\frac{l^{3/2} t^{3/2}}{g^{3/2}a^3\epsilon^{1/2}}\bigg)\;.$$
\end{itemize}

These results show that for fixed error $\epsilon$ and lattice extent $l$, the hybrid I.P. approach can be especially beneficial in the first regime thanks to the poly-logarithmic dependence on the cutoff $\Lambda$. In the small lattice spacing regime relevant for the continuum limit, the second order Trotter-Suzuki scheme developed in~\cite{Shaw_2020} has instead a better scaling with respect to all the parameters.

Finally, for the continuum limit it might be possible to improve the the gate complexity in some regimes by choosing to perform the I.P. simulation in the rotating frame given by the the interaction Hamiltonian $H_h$ instead. Using the hybrid Trotter and qDRIFT scheme from Corollary~\ref{cor:hybridTrotter}, together with the implementation via qubitization of the time evolution operator for $H_h$ derived in Section~\ref{ssec:qubitiz_schwinger}, this scheme has gate cost in$$\tilde{O}\bigg(\frac{N^2 t}{a}+\frac{N^2t^2}{\epsilon}\left(m^2+g^4a^2\Lambda^4\right)\bigg)\;.$$
For $m/g=O(1)$ and introducing the dimensional lattice size $l=Nga$, this becomes$$\tilde{O}\bigg(\frac{l^2 t}{g^2a^3}+\frac{l^2t^2}{a^2\epsilon}+\frac{l^2t^2}{\epsilon}g^2\Lambda^4\bigg)\;.$$
This is clearly worse than either the Hybrid approaches discussed above or the second order Trotter-Suzuki scheme from~\cite{Shaw_2020} in the large cutoff limit $\Lambda ga>1$, but can become competitive in the small lattice spacing limit $\Lambda ga<1$ for some choices of $(\epsilon,l,\Lambda)$. 

A detailed comparison of these different schemes to extract continuum quantities of physical interest in the Schwinger model with some target precision $\delta$ would require a more careful analysis of the scaling of the lattice size $l$ and the electric field cutoff $\Lambda$, as well as a more careful consideration of the logarithmic factors hidden by the $\tilde{O}$ notation. We leave this interesting extension of the present work to future studies.

\section{Collective Neutrino Oscillations}
\label{section:neutrinoosc}

In extreme astrophysical environments, such as supernova explosions, neutrinos are present in such large densities that neutrino-neutrino interactions can become important to describe flavor evolution~\cite{PANTALEONE1992,Duan2006}. These interactions are responsible for the appearance of collective modes in flavor oscillations and have traditionally been studied with the help of a mean-field approximation (see e.g.~\cite{Duan2010,Chakraborty2016b} for reviews). Due to the presence of interactions, many-body effects and quantum correlations could be important in understanding these phenomena and a number of studies is underway with a variety of techniques: from exact diagonalization~\cite{Rrapaj2020} to Bethe-ansatz solutions~\cite{Cervia2019}, from tensor networks~\cite{Roggero2021A,Roggero2021B} to digital quantum simulations~\cite{Hall2021,yeteraydeniz2021collective}. Quantum computing might offer a promising route to study these phenomena in situations where the entanglement entropy grows too fast with system size for tensor network simulations to remain feasible.

An important obstacle towards describing collective oscillations in realistic regimes is the fact that besides interactions with other neutrinos, scattering with external leptons (especially the abundant electrons) is an important effect near the proto-neutron star. The matter interaction terms can become the dominant contributions in this regime, requiring very small time-steps for an accurate simulation of the flavor dynamics. On the quantum computing side, this requirement translates into a large number of gates required for the simulation and it is therefore important to design simulation algorithms with a gentle computational scaling with the external matter density.

The Hamiltonian we are interested in can be written as follows (see e.g.~\cite{Pehlivan2011} for a derivation)
\begin{equation}
H = \sum_{i=1}^{N} \frac{\omega_i}{2} \vec{B}\cdot\vec{\sigma}_i + \frac{\lambda}{2} \sum_{i=1}^N Z_i + \frac{\mu}{2N}\sum_{i<j}^N J_{ij} \vec{\sigma}_i\cdot\vec{\sigma}_j\;.
\end{equation}
Here $\vec{\sigma}_i$ is the vector of Pauli matrices acting on the $i$-th qubit and the single particle energies $\omega_i$ are positive for neutrinos and negative for anti-neutrinos. The coupling matrix $J_{ij}$ takes values in $[0,2]$ and the normalized vector $\vec{B}$ contains the vacuum mixing angle as $\vec{B}=(\sin(2\theta),0,-\cos(2\theta))$. The constants are given by $\lambda = \sqrt{2}G_Fn_e$ and $\mu=\sqrt{2}G_Fn_\nu$, with $G_F$ Fermi's constant and $n_e$ and $n_\nu$ the electron and neutrino densities respectively. In typical situations the electron contribution $\lambda$ is the dominant term. A standard approach to deal with this problem is to move to the rotating frame defined by the unitary $U_e(t)=\exp(-i\frac{t}{2}\lambda\sum_{i=1}^N Z_i)$ and define the Hamiltonian in the interaction picture as
\begin{equation}
\begin{split}
\label{eq:time-dep-ham}
H(t) &= U_e^\dagger(t) H U_e(t) - iU_e^\dagger(t)\frac{\partial}{\partial t} U_e(t)\\ 
&= \sin(2\theta)\sum_{i=1}^N \frac{\omega_i}{2}\left(\cos(\lambda t)X_i-\sin(\lambda t)Y_i\right) -\cos(2\theta)\sum_{i=1}^{N} \frac{\omega_i}{2} Z_i + \frac{\mu}{2N}\sum_{i<j}^N J_{ij} \vec{\sigma}_i\cdot\vec{\sigma}_j\;\\
&= e^{i \lambda \sum_i Z_i t} H_\nu e^{-i \lambda \sum_i Z_i t},
\end{split}
\end{equation}
where
\begin{equation}
H_{\nu} = \sum_{i=1}^{N} \frac{\omega_i}{2} \vec{B}\cdot\vec{\sigma}_i + \frac{\mu}{2N}\sum_{i<j}^N J_{ij} \vec{\sigma}_i\cdot\vec{\sigma}_j\;.
\end{equation}

Typically only the leading order contribution in the Magnus expansion is retained, giving the time-independent Hamiltonian
\begin{equation}
\begin{split}
H_0 &= -\cos(2\theta)\sum_{i=1}^{N} \frac{\omega_i}{2} Z_i + \frac{\mu}{2N}\sum_{i<j}^N J_{ij} \vec{\sigma}_i\cdot\vec{\sigma}_j\;.
\end{split}
\end{equation}
In this limit, flavor states will not experience oscillations and typically this is solved by defining the flavor axis to be rotated by a small phenomenological amount away from the Z axis. It would be desirable however to be able to exercise more control in this approximation. Expansions to high orders in the Magnus expansion quickly produce higher order interactions which will complicate the implementation of the corresponding time-independent evolution. Here we use the time-dependent algorithm described above to work directly in the interaction picture without introducing uncontrollable errors.


\subsection{Trotter Suzuki Approximations in Interaction Frame}
As a first step, let us consider simulating the Hamiltonian in the interaction frame using a $k^{\rm th}$-order Trotter-Suzuki formula such as those in~\cite{wiebe2010higher}.
To do this, we need to introduce a notion of the typical energy scale of the time-dependent Hamiltonian with respect to the Trotter decomposition of the interaction frame Hamiltonian. 
If we use a conventional Trotter decomposition, as opposed to~\eqref{eq:c12def}, we find that the error incurred from using a first-order Trotter formula for an ordered operator exponential $U_1(t)$ formed by evaluating the Hamiltonian at $t=0$ and then Trotterizing the resultant ordinary operator exponential is
\begin{equation}
    \left\|\exp_\tau\left(-i\int H(t) \mathrm{d}t\right) - U_1(t)\right\|_\infty \in O((\max_t \|H'(t)\|_\infty + \sum_{p,q} \max_{t}\|[H_p(t),H_q(t)] \|_\infty)t^2),
\end{equation}
where the specific constants can be found using the techniques in~\cite{wecker2015solving}.
The derivative of the Hamiltonian in the interaction frame is in
\begin{equation}
    \|H'(t)\|_\infty \in O(\theta N \lambda).
\end{equation}
The commutator sum similarly obeys
\begin{align}
    \|\sum_{p,q} \max_{t,t'}\|[H_p(t'),H_q(t)] \|_\infty &\in O\left({N\omega\theta + \mu N \theta + \omega N \mu + N\mu^2 } \right)\nonumber\\
    &= O\left(N(\theta(\omega+\mu) +\mu(\omega+\mu) ) \right) \nonumber\\
    &\subseteq O\left(N\mu^2 ) \right),\label{eq:sumHpHq}
\end{align}
where $\omega = \max_i |\omega_i|$ and in the last term we take $\mu \gg \omega$.

The overall error in the simulation is therefore
\begin{equation}
    O\left( ({N\mu^2 + \theta N \lambda})t^2 \right).
\end{equation}
If we break the overall evolution into $r$ time slices, then it follows that the error in the simulation can be made at most $\epsilon$ by choosing
\begin{equation}
    r \in O\left(\frac{N({\mu^2 + \theta  \lambda})t^2}{\epsilon} \right).
\end{equation}
As there are $O(N^2)$ operator exponentials per time step, the total number of operator exponentials needed to perform the simulation is
\begin{equation}
    N_{\exp} \in O\left(\frac{N^3({\mu^2 + \theta  \lambda})t^2}{\epsilon}  \right).
\end{equation}
Since each operator exponential requires $O(1)$ gates from the $H, R_z, {\rm CNOT}$ gate library, the gate complexity is also proportional to this~\cite{berry2007efficient}. Interestingly, using the swap-network protocol from Ref.~\cite{Hall2021} (and inspired from their fermionic variant~\cite{Kivlichan2018}), this cost is not affected by limited connectivity in the device despite the interaction being all-to-all. This cost also coincides with the optimal scaling with $\lambda$ permitted by the no-fast forwarding theorem~\cite{Low_2016}, despite being a low-order formula that has inferior scaling with respect to the other parameters relative to alternative simulation methods.

Higher-order \textit{time-dependent} Trotter formulas for the simulation can be used, but the advantage gleaned by using them with respect to the $\lambda$ scaling is less clear. Such algorithms scale with the parameter~\cite{wiebe2010higher}
\begin{equation}
    \Lambda_k^{k+1}/r^{k} = \max_{j\le k}(\|\partial_t^j H(t)\|_{\infty}^{1/j+1})^{k+1}/r^k \in O(\lambda^{k}/r^k),
\end{equation} where $k$ represents the order of the Trotter formula. It then follows that these formulas ultimately lead to the same linear scaling in the gate complexity with $\lambda$ (assuming that $\theta\in \Theta(1)$).  By contrast, if we did not use the interaction picture algorithm, the cost of simulation using the $k^{\rm th}$-order Trotter \textit{time-independent} formula would scale as $O(\lambda^{1+1/2k})$~\cite{berry2007efficient,wiebe2010higher}. This illustrates that for problems with an imbalance in the scales of the operators, switching to an interaction frame can be beneficial at virtually no cost overhead.



\subsection{Simulating Neutrino Oscillations using Hybrid Trotter-qDRIFT}
Now we will apply~\Cref{cor:hybridTrotter} to compare this cost to that required by the hybrid Trotter and continuous qDRIFT algorithm.   Specifically, the error in an $r$-segment simulation is of the form (under the assumption that $\mu \gg \omega$)
\begin{equation}
       \frac{t^2}{r} \bigg(c_I + 4 \sum_{k \neq l}^L \|H_k\|^2_{\infty} \bigg)\in O\left( \frac{N \mu^2t^2}{r} \right).
\end{equation}
As each segment of qubitization requires application of a first order Trotter formula, the cost per segment in terms of operator exponentials scales as $O(N^2)$.  Thus if we demand that the error is at most $\epsilon$, the cost is
\begin{equation}
    N_{\exp} \in O(N^2 r) \subseteq O\left( \frac{N^3\mu^2t^2}{\epsilon}\right),
\end{equation}
wherein each operator exponential requires $O(1)$ applications of $H, R_z$ and CNOT. This shows that the above asymptotic scaling applies in the gate complexity as well as the number of exponentials.

Interestingly, in the limit where $\lambda \gg 1$, this result provides better scaling than even the gate complexity of the truncated Dyson series~\cite{kieferova2019simulating,low2018hamiltonian} which scales in the interaction frame as $\log(\lambda)$.  In our case, the quantum computational complexity is completely independent of $\lambda$.  Of course, poly-logarithmic costs with these algorithms need to be incurred at the classical side to compute the rotation angles that go into the single qubit rotations but such costs are assumed to be negligible in our cost model. This implies that for such cases where the cost of the simulation is gated by the cost of preparing and controlling from the time-register, switching to a method that only requires classical controls can allow us to outperform such methods and make the gate count (rather than just the query complexity~\cite{low2018hamiltonian}) independent of the magnitude of the norm of the interaction Hamiltonian.

As a final note, similar scaling can also be attained by using the approach of~\cite{Poulin_2011} to time-order the operator exponentials that we use in the interaction frame.  The performance of this method is summarized in~\eqref{eq:c12def} and gives an alternative to the hybrid approach considered here and yields comparable scaling with $\lambda$. 

\section{Constrained Hamiltonian Dynamics}
\label{section:constraineddynamics}

As a final application of these techniques, let us consider the application of quantum simulation to dynamics subject to dynamical constraints.  Specifically, we will consider a Hamiltonian of the form
\begin{equation}
    H = H_{f} + \lambda P_c,
\end{equation}
where $H_f\in \mathbb{C}^{2^n\times 2^n}$ is the free Hamiltonian and $P_c \in \mathbb{C}^{2^n\times 2^n}$ is a projector onto an infeasible region.  The idea behind our approach to simulating constrained quantum dynamics is that if we choose $\lambda \gg \|H_f\|_{\infty}$ and an initial state $\ket{\psi} = (\openone -P_c) \ket{\psi}$, then the dynamics of the quantum system will, up to small errors, be confined to within the dynamically feasible region specified by the null-space of $P_c$.  Note that this result is reminiscent of others in the literature such as~\cite{oliveira2005complexity,cao2017efficient}; however, this result is specialized to time evolution and is simpler to employ in this context.

\begin{lemma}\label{lem:constraint}
Let $H_f \in \mathbb{C}^{2^n \times 2^n}$ be a free Hamiltonian for a system and let $\lambda$ be a variable describing the strength of the constraint such that $\|H_f\|_\infty \ll \lambda$.  We then have that for any $\ket{\psi}$ in the null-space of $P_c$, 
$$ \|e^{-i (H_f - \lambda P_c)t} \ket{\psi} - \lim_{\lambda \rightarrow \infty} e^{-i (H_f - \lambda P_c)t} \ket{\psi}  \|_2 \in O\left( \frac{\|H_f\|^2_{\infty} t}{\lambda}\right)$$ where $\|\cdot\|_2$ refers to the vector $2$-norm.
\end{lemma}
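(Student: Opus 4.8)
The plan is to pass to the interaction picture generated by the penalty term and to exploit the fact that it forces the leftover dynamics on the feasible subspace to oscillate at frequency $\lambda$; one integration by parts then produces the claimed $1/\lambda$ suppression. Write $P := P_c$ and $\Pi := I - P_c$, so the hypothesis reads $\ket{\psi} = \Pi\ket{\psi}$, and set $H_{\mathrm{ff}} := \Pi H_f \Pi$ — the natural ``constrained'' generator. First I would factor $e^{-i(H_f - \lambda P)t} = e^{i\lambda P t}\,U_I(t)$, where $U_I$ is the unitary generated by $\tilde{H}_f(s) := e^{-i\lambda P s}\,H_f\,e^{i\lambda P s}$. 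Since $P$ is a projector, $e^{\pm i\lambda P s} = \Pi + e^{\pm i\lambda s}P$, whence
\begin{equation*}
\tilde{H}_f(s) = \Pi H_f \Pi + P H_f P + e^{i\lambda s}\,\Pi H_f P + e^{-i\lambda s}\,P H_f \Pi .
\end{equation*}
The key point is that on any feasible vector $\ket{\phi} = \Pi\ket{\phi}$ the static $P H_f P$ block and the $e^{i\lambda s}$ block both annihilate $\ket{\phi}$, so $(\tilde{H}_f(s) - H_{\mathrm{ff}})\ket{\phi} = e^{-i\lambda s}\, P H_f \ket{\phi}$, a purely oscillatory term of $2$-norm at most $\|H_f\|_\infty$ for unit $\ket{\phi}$.

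Next I would compare $U_I(t)$ with $R(t) := e^{-iH_{\mathrm{ff}}t}$, which maps $\mathrm{ran}\,\Pi$ into itself (as $H_{\mathrm{ff}}$ leaves $\mathrm{ran}\,\Pi$ invariant). A standard variation-of-parameters argument with the unitary two-parameter propagator $U_I(t,s) := U_I(t)\,U_I(s)^\dagger$, together with the identity above applied to $\ket{\phi} = R(s)\ket{\psi}$, gives
\begin{equation*}
R(t)\ket{\psi} - U_I(t)\ket{\psi} = i\int_0^t e^{-i\lambda s}\, U_I(t,s)\, P H_f\, R(s)\ket{\psi}\,\mathrm{d}s .
\end{equation*}
Setting $G(s) := U_I(t,s)\, P H_f\, R(s)\ket{\psi}$, unitary invariance of the spectral norm gives $\|G(s)\|_2 \le \|P H_f\|_\infty \le \|H_f\|_\infty$, and differentiating (using $\|\tilde{H}_f(s)\|_\infty = \|H_f\|_\infty$ and $\|H_{\mathrm{ff}}\|_\infty \le \|H_f\|_\infty$) gives $\|G'(s)\|_2 \le \|\tilde{H}_f(s)\|_\infty \|P H_f\|_\infty + \|P H_f\|_\infty \|H_{\mathrm{ff}}\|_\infty \le 2\|H_f\|_\infty^2$. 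Integrating by parts in $s$ then yields $\|R(t)\ket{\psi} - U_I(t)\ket{\psi}\|_2 \le 2\|H_f\|_\infty/\lambda + (1/\lambda)\int_0^t \|G'(s)\|_2\,\mathrm{d}s \le (2\|H_f\|_\infty + 2\|H_f\|_\infty^2 t)/\lambda$, whose dominant term in the physically relevant regime $\|H_f\|_\infty t = \Omega(1)$ is $O(\|H_f\|_\infty^2 t/\lambda)$ (the $O(\|H_f\|_\infty/\lambda)$ contribution from the boundary terms is a subleading transient).

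Finally I would close the loop. The bound above already forces $U_I(t)\ket{\psi} \to R(t)\ket{\psi}$ as $\lambda \to \infty$; since $R(t)\ket{\psi} \in \mathrm{ran}\,\Pi$ we have $e^{i\lambda P t} R(t)\ket{\psi} = R(t)\ket{\psi}$, so $\lim_{\lambda\to\infty} e^{-i(H_f - \lambda P)t}\ket{\psi} = R(t)\ket{\psi} = e^{-i\Pi H_f \Pi t}\ket{\psi}$ — the limiting evolution is confined to the null space of $P_c$, as anticipated. Because $e^{i\lambda P t}$ is unitary, $\big\|e^{-i(H_f - \lambda P)t}\ket{\psi} - e^{-i\Pi H_f\Pi t}\ket{\psi}\big\|_2 = \big\|U_I(t)\ket{\psi} - R(t)\ket{\psi}\big\|_2$, and the estimate of the previous paragraph gives the claimed $O(\|H_f\|_\infty^2 t/\lambda)$ bound. \textbf{The main obstacle} is making the Duhamel step rigorous — justifying differentiation under the integral and the manipulation of the genuinely time-dependent two-parameter propagator $U_I(t,s)$ — and stating precisely in what sense the $\lambda\to\infty$ limit is taken; note that the ``oscillation kills the off-diagonal blocks on feasible vectors'' identity is the single ingredient responsible for the $1/\lambda$, so the hypothesis $\ket{\psi} = (I-P_c)\ket{\psi}$ is essential.
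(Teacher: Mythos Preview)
Your argument is correct and takes a genuinely different route from the paper's. The paper proceeds by eigenvector/eigenvalue perturbation theory: it interpolates $H_f(x) = xH_f + \lambda P_c$ from $x=0$ to $x=1$, chooses the degenerate basis at $x=0$ to diagonalize $\Pi H_f\Pi$ and $P_c H_f P_c$, and then tracks how eigenvectors drift by $O(\|H_f\|_\infty/\lambda)$ and eigenvalues in the null space by $O(\|H_f\|_\infty^2/\lambda)$, finally invoking $\|e^{-iHt}-e^{-iH't}\|_\infty\le\|H-H'\|_\infty t$. By contrast, you pass to the interaction frame of the penalty, observe that on feasible vectors the residual coupling is purely oscillatory ($e^{-i\lambda s}PH_f\ket{\phi}$), and extract the $1/\lambda$ via one integration by parts in the Duhamel integral. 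Your approach is more elementary---no spectral decomposition, no degeneracy-lifting basis choice, no implicit gap assumption---and it produces the limiting generator $\Pi H_f\Pi$ in one line together with the explicit non-asymptotic bound $(2\|H_f\|_\infty + 2\|H_f\|_\infty^2 t)/\lambda$. The only cost is the $t$-independent $O(\|H_f\|_\infty/\lambda)$ boundary contribution, which the paper's Hamiltonian-level estimate avoids; as you note, this is subdominant in the regime $\|H_f\|_\infty t = \Omega(1)$ that the lemma is used in.
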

\begin{proof}
In order to show the deviation in each eigenvector of the Hamiltonian that arises from adding the small Hamiltonian $H_f$ to the constraint term, we will introduce
\begin{equation}
    H_f(x) := H_f x + \lambda P_c,
\end{equation}
where $x \in [0,1]$.  For $x=0,$ $H_f(0) = \lambda P_c$ has a degenerate null-space denoted $\mathcal{P}^0$.  Let $\ket{v_j(x)}$ denote the eigenvectors of $H_f(x)$ with corresponding eigenvalues $E_j(x)$.  
 We then have from perturbation theory that the derivative of $\ket{v_j(x)}$ is
\begin{equation}
    \frac{\partial  \ket{v_j(x)}}{\partial x} = \sum_{k\ne j} \ket{v_k(x)} \frac{\bra{v_k(x)} H_f \ket{v_j(x)}}{E_j(x) - E_k(x)}
\end{equation}
Assuming that the eigenvalue gaps are non-zero, we further have that the second derivative is finite. From the definition of a Riemann integral, we get
\begin{equation}
    \ket{v_j(1)} = \int_0^1\frac{\partial  \ket{v_j(x)}}{\partial x} \mathrm{d}x = \ket{v_j(0)} + \lim_{r\rightarrow \infty} \sum_{p=2}^r \sum_{k\ne j} \ket{v_k((p-1)/r)} \frac{\bra{v_k((p-1)/r)} H_f \ket{v_j((p-1)/r)}}{E_j((p-1)/r) - E_k((p-1)/r)} \frac{1}{r}.
\end{equation}
This expression allows us to relate the shift in the eigenvectors recursively.  First let us consider the initial time step.  As $H_f(x)$ is degenerate at $x=0,$ we can choose the eigenvectors such that the matrix with components $\bra{v_j(0)} H_f \ket{v_k(0)}$ is a diagonal matrix for all $\ket{v_j(0)},\ket{v_k(0)} \in \mathcal{P}^0$ or $\ket{v_j(0)},\ket{v_k(0)} \in {\mathcal{P}^0}^\perp$.  In the former case we have that $(\openone-P_c)\ket{v_j(0)} = \ket{v_j(0)}$, so $\bra{v_j(0)} H_f \ket{v_k(0)} = \bra{v_j(0)} (\openone - P_c) H_f (\openone - P_c) \ket{v_k(0)}$.  Thus we can achieve the diagonal criteria by choosing each $\ket{v_j(0)}$ to be an eigenvector of $(\openone - P_c) H_f (\openone - P_c)$.  Similarly, we can achieve the diagonal criteria for each vector in ${\mathcal{P}^0}^\perp$ by choosing each $\ket{v_j}$ to be an eigenvector of $P_c H_f P_c$.  We then have that for any $\ket{v_j(0)}$ in $\mathcal{P}^0$,
\begin{align}
    \ket{v_j(1/r)} &= \ket{v_j(0)}+\frac{1}{r}\sum_{k\ne j} \ket{v_k(0)} \frac{\bra{v_k(0)} H_f \ket{v_j(0)}}{E_j(0) - E_k(0)} \nonumber\\
    &=  \ket{v_j(0)}-\sum_{k:\ket{v_k(0)}\ne \mathcal{P}^0} \ket{v_k(0)} \frac{\bra{v_k(0)} H_f \ket{v_j(0)}}{\lambda r} \;.
\end{align}
In turn 
\begin{equation}
    \| \ket{v_j(0)} - \ket{v_j(1/r)} \|_2 \in O(\|H_f\|_\infty/\lambda r ).
\end{equation}
Furthermore from~\cite{horn}, we have that for all $k$, $|E_k(x) - E_k(0)| \le x \|H_f\|_\infty$.
Now let us assume that for some integer $q\ge 0$
\begin{equation}
    \| \ket{v_j(0)} - \ket{v_j(q/r)} \|_2 \in O(q\|H_f\|_\infty/\lambda r)\;.
\end{equation}
We then have 
\begin{equation}
\ket{v_j((q+1)/r)} = \ket{v_j(q/r)}+\frac{1}{r}\sum_{k\ne j} \ket{v_k(q/r)} \frac{\bra{v_k(q/r)} H_f \ket{v_j(q/r)}}{E_j(q/r) - E_k(q/r)} 
\end{equation}
and therefore
\begin{equation}
\|\ket{v_j((q+1)/r)} - \ket{v_j(q/r)}\|_2 \in O\left(\frac{ \|H_f\|_\infty}{(\lambda -2\|H_f\|_\infty)r}\right)=O\left(\frac{ \|H_f\|_\infty}{\lambda r}\right).
\end{equation}
Thus we have
\begin{align}
    \|\ket{v_j((q+1)/r)} - \ket{v_j(0)}\|_2 &\le \|\ket{v_j((q+1)/r)} - \ket{v_j(q/r)}\|_2 + \|\ket{v_j(0)} - \ket{v_j(q/r)}\|_2\nonumber\\
    &\in O\left(\frac{ (q+1)\|H_f\|_\infty}{\lambda r}\right) \;.
\end{align}
This in turn shows us that
\begin{equation}
    \|\ket{v_j(1)} - \ket{v_j(0)}\|_2 \in  O\left(\frac{ \|H_f\|_\infty}{\lambda }\right).
\end{equation}
Next by examining the differential equation for the eigenvalues, we have that the corresponding eigenvalue $E_j(1)$ obeys 
\begin{align}
    E_j(1) &= E_j(0)+\int_0^1 \frac{\partial E_j(x)}{\partial x} \mathrm{d} x = \int_0^1 \frac{\partial E_j(x)}{\partial x} \mathrm{d} x \nonumber\\
    &= \int_0^1 \bra{v_j(x)} H_f \ket{v_j(x)} \mathrm{d}x = \bra{v_j(0)} H_f \ket{v_j(0)} + O\left(\frac{ \|H_f\|_\infty^2}{\lambda }\right)\;.
\end{align}
Similarly for any $\ket{v_j(0)}\in {\mathcal{P}^0}^\perp$, $E_j(1) = \lambda +\bra{v_j(0)} H_f \ket{v_j(0)} + O\left(\frac{ \|H_f\|_\infty^2}{\lambda }\right)$.  We therefore have from the triangle inequality that
\begin{align}
    &\|H(1) - \sum_k (\lambda \delta_{{\ket{v_k} \in {\mathcal{P}^0}^\perp}} + \bra{v_k(0)} H_f \ket{v_k(0)} \ket{v_k(0)}\!\bra{v_k(0)})\|_\infty \nonumber\\
    &\in O\left(\frac{ \|H_f\|_\infty^2}{\lambda }\right)\;.
\end{align}
We therefore have from the fact that $\|e^{-iHt} - e^{-iH't}\|_\infty\le \|H-H'\|_\infty t$ for all Hermitian matrices $H$ and $H'$ of equal dimension that
\begin{equation}
    \|e^{-iH(1) t} - e^{-i \sum_k \left(\lambda \delta_{{\ket{v_k} \in {\mathcal{P}^0}^\perp}} + \bra{v_k(0)} H_f \ket{v_k(0)} \ket{v_k(0)}\!\bra{v_k(0)}\right) t}\|_\infty \in O\left(\frac{ \|H_f\|_\infty^2 t}{\lambda }\right).
\end{equation}
Therefore for any $\ket{\psi} \in \mathcal{P}^0$ we have that
\begin{equation}
    \|e^{-i H(1) t}\ket{\psi} - \lim_{\lambda \rightarrow \infty}e^{-i H(1) t}\ket{\psi}\|_2 \in O\left(\frac{ \|H_f\|_\infty^2 t}{\lambda }\right).
\end{equation}
\end{proof}
This shows that we can simulate constrained dynamics for time $t$ within error $\epsilon$ by choosing $\lambda \ge \|H_f\|^2_{\infty} t /\epsilon$.  This in turn leads to a substantial degradation of the scaling of most simulation algorithms if $[H_f,P_c]\ne 0$, because the Hamiltonian's norm scales with both the evolution time and the uncertainty desired in the simulation. This makes such constrained dynamics impractical for many applications.

This drawback can, however, be mitigated through the use of an interaction frame transformation. By transforming to the interaction frame of the constraint, we can perform the simulation at cost that is (in some cases) independent of the choice of $\lambda$.  The cost of such simulations using a hybrid qubitization and qDRIFT algorithm is given below.  We cite the complexity of this algorithm rather than truncated Dyson methods because such methods explicitly have a cost that scales logarithmically with $\lambda$; whereas in some cases the quantum gate complexity will be independent of $\lambda$.
\begin{theorem}
    Let the assumptions of~\Cref{thm:IPqDqubitsim} hold. Then there exists a quantum algorithm that implements, for any $t>0$ and $\epsilon>0$, a quantum channel that is a $(1,O(\log(L)), \epsilon)$ block encoding of $e^{-i(H_f +\lambda P_c)t}$.  Further this implementation requires  a total number of queries to PREPARE, SELECT and $W_{P_c}$ in
    $$
    O \bigg(\beta t + \left(\frac{\|H_f\|_\infty^2 t^2}{\epsilon}\right)\frac{\log(\|H_f\|_\infty t/\epsilon)}{\log \log(\|H_f\|_\infty t/\epsilon)}\bigg).
    $$
\end{theorem}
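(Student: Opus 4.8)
The plan is to read this statement off directly from \Cref{thm:IPqDqubitsim} by taking the ``large'' summand whose interaction frame we enter to be the constraint term $\lambda P_c$, and the summand carrying the LCU structure, $H_\alpha$, to be the free Hamiltonian $H_f=\sum_l w_l H_l$. Under this identification $H_\alpha = H_f$, so that $\|H_\alpha\|_\infty = \|H_f\|_\infty$ and $\lambda_\alpha = \sum_l |w_l| =: \beta$, both of which are manifestly independent of the penalty strength $\lambda$. Substituting these into~\eqref{eq:mainthm} then yields exactly the advertised query count, and the conclusion that the output channel is a $(1,O(\log L),\epsilon)$ block-encoding of $e^{-i(H_f+\lambda P_c)t}=e^{-iHt}$ is precisely the conclusion of \Cref{thm:IPqDqubitsim}.

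Concretely I would proceed as follows. First, verify the hypotheses of \Cref{thm:IPqDqubitsim}: $H_f$ is a time-independent, finite-dimensional Hermitian operator equipped with PREPARE/SELECT oracles for its LCU decomposition (this is what ``the assumptions of \Cref{thm:IPqDqubitsim} hold'' imports), and $\lambda P_c$ is Hermitian. Second, exhibit the interaction-frame oracle $W_{P_c}$: because $P_c$ is an orthogonal projector one has $e^{-i\lambda P_c \Delta} = I + (e^{-i\lambda\Delta}-1)P_c$, so time evolution under the constraint term is realized by a reflection about the infeasible subspace together with a single-qubit phase rotation through an angle proportional to $\lambda\Delta$, whose gate count does not grow with $\lambda$. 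Third, invoke \Cref{thm:IPqDqubitsim} with $H = \lambda P_c + H_f$: continuous qDRIFT in the interaction frame needs $r = O(\|H_f\|_\infty^2 t^2/\epsilon)$ sub-intervals, each resulting time-independent core $e^{-i(t_{k+1}-t_k)H_f}$ is simulated by qubitization with QSP error $\delta = O(\epsilon/r)$, and each segment is conjugated by $W_{P_c}(\pm\tau_k)$. Fourth, sum the per-segment costs $O\!\left(\beta(t_{k+1}-t_k) + \log(1/\delta)/\log\log(1/\delta)\right)$ over the $r$ segments, telescoping the first contribution to $\beta t$ and bounding the second by $O\!\left((\|H_f\|_\infty^2 t^2/\epsilon)\log(\|H_f\|_\infty t/\epsilon)/\log\log(\|H_f\|_\infty t/\epsilon)\right)$; the $O(r)$ calls to $W_{P_c}$ are absorbed into this same term.

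There is essentially no obstacle here, since the entire content is already packaged in \Cref{thm:IPqDqubitsim}; the only points deserving explicit comment are (a) that $W_{P_c}$ can be implemented for a projector at a cost whose gate count is $\lambda$-independent, so that no hidden $\lambda$-dependence re-enters through the interaction-frame transformation, and (b) that $\beta$ and $\|H_f\|_\infty$ are themselves $\lambda$-free, which is exactly why the final query complexity carries no $\lambda$, in contrast to truncated-Dyson-series methods whose cost retains a $\log\lambda$ factor, as flagged just before the theorem. As a sanity check one may observe that this parallels \Cref{corollary_SM1}, where the combined diagonal term $H_E + H_M$ plays the role that $\lambda P_c$ plays here.
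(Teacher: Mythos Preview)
Your proposal is correct and follows essentially the same core approach as the paper: both proofs identify $H_\alpha=H_f$ and take the interaction-frame term to be $\lambda P_c$, then read off the block-encoding guarantee and the query bound directly from \Cref{thm:IPqDqubitsim} with $\lambda_\alpha=\beta$ and $\|H_\alpha\|_\infty=\|H_f\|_\infty$.

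The only substantive difference is that the paper's proof additionally invokes \Cref{lem:constraint} and a triangle-inequality split to argue that the output $V\ket{\psi}$ is close to the \emph{constrained} evolution $\lim_{\lambda\to\infty}e^{-i(H_f+\lambda P_c)t}\ket{\psi}$, splitting the error budget as $\epsilon/2+\epsilon/2$. Strictly speaking this step is not required by the theorem as stated (which asks only for a block encoding of $e^{-i(H_f+\lambda P_c)t}$ at a fixed $\lambda$), so your omission of it is harmless; it is there in the paper to tie the result back to the motivating constrained-dynamics problem. Your side remarks on the $\lambda$-independent implementability of $W_{P_c}$ and the analogy with \Cref{corollary_SM1} are apt contextual observations but, as you note, not needed for the query-complexity claim itself.
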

\begin{proof}
The proof follows directly from previous results.  
Specifically we have that
\begin{align}
    &\|V\ket{\psi} - \lim_{\lambda\rightarrow \infty}e^{-i(H_f + \lambda P_c)t}\|_2 \nonumber\\
    &\le \|V\ket{\psi} - e^{-i(H_f + \lambda P_c)t}\ket{\psi}\|_2 + \|e^{-i(H_f + \lambda P_c)t}\ket{\psi} - \lim_{\lambda\rightarrow \infty}e^{-i(H_f + \lambda P_c)t}\ket{\psi}\|_2 \;.
\end{align}
From~\autoref{thm:IPqDqubitsim}, we have that the number of queries to PREPARE, SELECT and $W_{P_c}$ needed to implement a $(1,O(\log(L)),\epsilon/2)$ block encoding is in
$$
O \bigg(\beta t + \left(\frac{\|H_f\|_\infty^2 t^2}{\epsilon}\right)\frac{\log(\|H_f\|_\infty t/\epsilon)}{\log \log(\|H_f\|_\infty t/\epsilon)}\bigg).
$$
Next, from~\autoref{lem:constraint} we have that there exists a value of $\lambda \in O(\|H_f\|_\infty^2 t/\epsilon)$ such that $\|e^{-i(H_f + \lambda P_c)t}\ket{\psi} - \lim_{\lambda\rightarrow \infty}e^{-i(H_f + \lambda P_c)t}\ket{\psi}\|_2\le \epsilon/2$.  The result then follows from the triangle inequality.
\end{proof}

These results show that query efficient methods exist for simulating Hamiltonian dynamics; however, the existence of a query efficient algorithm for simulating dynamics subject to a particular constraint does not imply the existence of a gate efficient algorithm.  For example, let us consider the case where $P_c \ket{x} = \ket{x}$ if and only if $E(x) \le \delta$ for some $\delta>0$ and $E(x)$ is the energy function for an arbitrary Ising model.  Since this problem is $\NP$-hard~\cite{barahona}, a gate  efficient version of this constraint is only possible if $\NP \subseteq \BQP$, which is strongly believed to be false.  For this reason, we provide below a sufficient, but not a necessary, condition for the $W_{P_c}$ to be simulatable in $O(\polylog(2^n\lambda t/\epsilon))$ gate operations.

\begin{proposition}
Let $P_c\in \mathbb{C}^{2^n \times 2^n}$ be a projector matrix. Suppose there exist functions such that for any $x,y \in \mathbb{Z}_{2^n}$, $g(x,y) = \bra{x} P_c \ket{y}$ and $f(x,i)$ yields the column index of the $i^{\rm th}$ non-zero matrix element of $P_c$ as represented in the computational basis. If
\begin{enumerate}
    \item $P_c$ has at most $1$ non-zero matrix elements in each row when expressed in the computational basis.
    \item $f$ is computable using a number of quantum gates that are in $O(\poly(n))$ and $g$ within error $2^{-m}$ using $O(\poly(nm))$ quantum operations
\end{enumerate}
then for any $\lambda \ge 0$ and $t\ge 0$, 
a unitary $\tilde{U}$ can be constructed such that $\|\tilde{U} - U_I(\lambda;t)\|_\infty \le \epsilon$ using $O(\polylog(2^n\lambda t/\epsilon))$ quantum operations.
\end{proposition}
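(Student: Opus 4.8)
The plan is to exploit that $P_c$ is a projector to collapse $U_I(\lambda;t)=e^{-i\lambda P_c t}$ (equivalently $W_{P_c}(\lambda t)$; the sign convention is immaterial) to a single controlled phase, and then to use conditions 1 and 2 to realise that phase coherently with only $O(\polylog)$ gates. Since $P_c^2=P_c$, for any function one has $f(P_c)=f(0)(I-P_c)+f(1)P_c$, so
\begin{equation}
e^{-i\lambda P_c t}=(I-P_c)+e^{-i\lambda t}P_c=I+(e^{-i\lambda t}-1)P_c\;,
\end{equation}
and it suffices to multiply states in the range of $P_c$ by the scalar $e^{-i\lambda t}$ and leave the kernel untouched. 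The first step is to observe that condition 1, together with Hermiticity and idempotency, forces $P_c$ to be \emph{diagonal} in the computational basis: were the unique non-zero entry of some row $x$ located at a column $y\neq x$ with value $c$, then $\bra{x}P_c\ket{x}=0$ while $\bra{x}P_c^2\ket{x}=\bra{x}P_c\ket{y}\bra{y}P_c\ket{x}=|c|^2\neq 0$, contradicting $P_c^2=P_c$; and a non-zero diagonal entry must equal $1$ by the same relation. Hence $P_c=\sum_{x\in S}\ketbra{x}{x}$ with $S=\{x\in\bbZ_{2^n}: f(x,1)=x \text{ and } g(x,x)=1\}$, and membership $x\in S$ is decidable from the data of $f$ and $g$.

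Given this, I would build $\tilde U$ in three reversible stages on the $n$-qubit system register $\ket{x}$ and an ancilla workspace, arranging that every ancilla is returned to $\ket{0}$ so that $\tilde U$ induces a genuine $2^n\times 2^n$ unitary on the system. \emph{(i) Flag computation.} Using the subroutine for $f$, write $f(x,1)$ into an $O(n)$-qubit register with $O(\poly(n))$ gates and compare it with $x$; conditioned on equality, evaluate $g(x,x)$ to accuracy $1/4$ — which costs $O(\poly(n))$ gates by condition 2, and suffices because $g(x,x)\in\{0,1\}$ exactly for a projector — and record the single flag bit $\ket{\chi_S(x)}$. \emph{(ii) Phase.} Apply to the flag qubit the diagonal single-qubit gate $\ketbra{0}{0}+e^{-i\lambda t}\ketbra{1}{1}$; after classically reducing $\lambda t$ modulo $2\pi$ to precision $\epsilon$ (an $O(\polylog(\lambda t/\epsilon))$ classical cost), this gate is implementable to spectral-norm error $\epsilon$ with $O(\log(1/\epsilon))$ elementary gates by standard single-qubit synthesis (or with $O(1)$ gates if arbitrary single-qubit rotations are primitive). \emph{(iii) Uncomputation.} Run stage (i) in reverse to restore the workspace. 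On $\ket{x}\ket{0}$ the composite produces $e^{-i\lambda t\,\chi_S(x)}\ket{x}\ket{0}$, which is exactly $(U_I(\lambda;t)\ket{x})\otimes\ket{0}$ up to the error introduced in (ii).

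Tallying costs: stages (i) and (iii) contribute $O(\poly(n))$ gates (more generally $O(\poly(nm))$ if one prefers to carry $g$ to precision $2^{-m}$), stage (ii) contributes $O(\log(1/\epsilon))$ gates, and the total error is the $\epsilon$ from (ii); thus $\|\tilde U-U_I(\lambda;t)\|_\infty\le\epsilon$ with a quantum gate count in $O(\polylog(2^n\lambda t/\epsilon))$, as claimed. The step needing the most care is (i): the arithmetic for $f$ and $g$ must be performed reversibly and \emph{all} intermediate garbage uncomputed, so that the flag qubit is the only non-trivially correlated ancilla when the phase is applied — the usual Bennett uncomputation, but the point at which a careless implementation would leak entanglement into the workspace and destroy the identity. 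It is worth stressing that condition 1 carries real weight here: it is exactly what forces $P_c$ to be diagonal and thereby lets one bypass full $d$-sparse Hamiltonian simulation; for a general sparse projector the gate count would pick up a dependence on the sparsity.
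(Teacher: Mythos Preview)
Your proof is correct, and it takes a genuinely different route from the paper's.

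The paper's proof does not exploit the projector property of $P_c$ at all: it simply invokes the black-box $1$-sparse Hamiltonian simulation result of Childs et al.\ to implement $e^{-i\lambda P_c t}$ exactly when $g$ is exact, and then bounds the propagation error $\|e^{-i\lambda P_c t}-e^{-i\lambda \tilde P_c t}\|_\infty\le \lambda t\,2^{-m}$ when $g$ is only known to $m$ bits, so that $m\in O(\log(\lambda t/\epsilon))$ suffices. In particular, the paper's error analysis explicitly carries the off-diagonal case $x\ne f(x,1)$, which --- as you correctly observe --- cannot actually occur for a Hermitian idempotent that is $1$-sparse. Your diagonality argument is a genuine strengthening that the paper misses.

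What each approach buys: the paper's argument is shorter because it outsources the circuit construction to a cited lemma, and it would apply verbatim to any $1$-sparse Hermitian $H$ in place of $P_c$. Your argument is self-contained and more informative about \emph{why} the result holds; it also yields a slightly sharper cost profile, since after rounding you recover $g(x,x)\in\{0,1\}$ exactly with $O(\poly(n))$ gates and the only $\epsilon$-dependence sits in the single-qubit phase synthesis, so the quantum gate count is $O(\poly(n)+\log(1/\epsilon))$ with the $\lambda t$ dependence pushed entirely into classical preprocessing. Both comfortably land inside the stated $O(\polylog(2^n\lambda t/\epsilon))$ bound.
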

\begin{proof}
The proof follows straight forwardly.  If we assume that $g(x,y)$ can be implemented within zero error with $m$ bits of precision, we have from~\cite{childs2003exponential} that $e^{-i\lambda  P_c t}$ can be implemented with zero error using $O(1)$ applications of $f$ and $g$ as well as $O({\rm poly}(nm))$ auxillary quantum operations.

Now let us assume that $g(x,y)$ cannot be computed within zero error using $m<\infty$ bits of precision.  If we denote $\tilde g(x,y)$ to be the approximate version of $g$, we have from the fact that $P_c$ is one-sparse that
\begin{align}
    & \|e^{-i\lambda t \sum_x g(x,f(x,1)) \ketbra{x}{f(x,1)}} - e^{-i\lambda t \sum_x \tilde g(x,f(x,1)) \ketbra{x}{f(x,1)}} \|_\infty \nonumber\\
    &= \max\Biggr(\max_{x: x\ne f(x,1)} \|e^{-i\lambda t g(x,f(x,1))( \ketbra{x}{f(x,1)}+\ketbra{f(x,1)}{x})} - e^{-i\lambda t \tilde g(x,f(x,1)) ( \ketbra{x}{f(x,1)}+\ketbra{f(x,1)}{x})} \|_\infty\nonumber\\
    &\qquad\qquad\qquad\qquad, \max_{x:x=f(x,1)} \|e^{-i\lambda t g(x,f(x,1))\ketbra{x}{x}} - e^{-i\lambda t \tilde g(x,f(x,1)) \ketbra{x}{x})} \|_\infty\Biggr)\nonumber\\
    &\le \lambda t \max_x |g(x,f(x,1)) - \tilde g(x,f(x,1))| \le \lambda t 2^{-m}\;.
\end{align}
Thus to achieve an error of $\epsilon$, we need to take $m \in O(\log(\lambda t /\epsilon))$.  The result immediately follows from the assumptions on the cost of $f$ and $g$.
\end{proof}

There are a number of applications of this approach to solve constrained versions of quantum dynamics.  One such application involves the simulation of quantum field theories within a particular gauge, which describes a choice of a dynamically unobservable feature of the system that is needed to unambiguously determine the dynamics.  For example, the Lorenz gauge involves choosing the vector potential such that $\partial_\mu A^\mu = 0$.  Rather than fixing the gauge by a clever choice of an equation of motion, this approach allows us to impose such gauges by penalizing all configurations that violate this.

Another application involves Gauss' law in quantum electrodynamics~\cite{Hauke2013,stryker2019oracles}. For $D$-dimensional quantum electrodynamics, Gauss' law reads $\nabla \cdot \hat{E}(s) - \hat{\rho}(s) =0$, where $\hat{E}(s)$ is the electric field operator at position $s$ and $\hat{\rho}(s)$ is the charge density there.  On a lattice, this can be further simplified to $G(s):=\sum_{i=1}^D (\hat{E}(s) - \hat{E}(s-e_i)) - \sum_{s,\sigma} e_\sigma n_\sigma(s) $, where $n_\sigma(s)$ is the number of electrons (or positrons) at a site and $e_\sigma$ is $\pm 1$ depending on the site.  From this, the constraint projector $P_c$ can be expressed using the properties of discrete Fourier transforms as $P_c = \openone -\frac{1}{N}\sum_{i=1}^N e^{-i 2\pi G(s)/N}$~\cite{stryker2019oracles}.  This is relevant because Gauss' law is only approximately held for methods such as Trotter-Suzuki simulations and so the application of this constraint oracle can be used to filter out the unphysical components of simulation error.  Specifically, consider a constraint on $\ket{\psi} \in \mathcal{P}^0$ given by the projector $P_c$ that commutes with the Hamiltonian. In other words, $[P_c, \sum_j H_j] =0$; however there may exist $k'$ such that $[H_{k'}, P_c]\ne 0$. This creates problems for the Trotter-Suzuki expansion, but we can address this by transforming into the interaction frame as discussed below
\begin{equation}
    e^{-i \sum_{j=1}^M H_j t} \ket{\psi} = e^{-i (\sum_{j=1}^M H_j + \lambda P_c) t} \ket{\psi} = e^{-i \lambda P_c t} \exp_{\tau}\left(\int_0^t e^{i\lambda P_c s} \sum_j H_j e^{-i\lambda P_c s} \mathrm{d}s \right)\ket{\psi}.~\label{eq:timeorder}
\end{equation}
We can then implement the time-ordered operator exponential in~\eqref{eq:timeorder} using one of our previous methods, such as a hybridized Trotter-qDRIFT method or that used in the previous section.  This allows us to impose a constraint, such as Gauss' law, on the integration formula at low cost. By contrast, if we were to try to do so using a high-order Trotter formula, we would have remainder terms in the Trotter-Suzuki expansion that are in $O(\poly(\lambda t))$. From~\autoref{lem:constraint}, this is in $O(\poly(\|H_f\|_\infty^2 t^2/\epsilon))$ and thus cannot be implemented at low cost in the limit where $\epsilon \ll 1$, unlike in the interaction picture approach.

A similar simple example of this is solving the Schrodinger equation for a particle constrained to a given surface.  As an example, consider solving the Schrodinger equation for a particle constrained to be on the surface of a Figure-8 immersion of a Klein bottle, which is a non-orientable surface with no boundary.  Such a surface is given, for some fixed value of $r>2$, by the following parameterized surface over the angles $\theta,v \in [0,2\pi)$
\begin{align}
    x&=(r+\cos(\theta/2) \sin(v) - \sin(\theta/2) \sin(2v))\cos(\theta)\nonumber\\
    y&=(r+\cos(\theta/2) \sin(v) - \sin(\theta/2) \sin(2v))\sin(\theta)\nonumber\\
    z&=\sin(\theta/2) \sin(v) + \cos(\theta/2) \sin(2v),
\end{align}
where in Cartesian coordinates $\theta = \arctan(y/x)$ and $v$ is found implicitly through the above expressions. As all these coordinate functions are Lipshitz continuous, a least square solution can be found using gradient descent after dividing up the surface in $(\theta,v)$ coordinates into a finite number of regions and then performing gradient descent of $\|\vec{x} - \vec{x}(r,\theta,v)\|$.  Thus by following this procedure, we can decide within $\epsilon$ error whether a given $(x,y,z)$ lies on the surface of Klein bottle.  In turn, $P_c$ can be constructed by using reversible logic to evaluate this in time $O(\poly(\log(1/\epsilon)))$. Thus, complicated quantum dynamics on unusual manifolds can be simulated through the use of our approach to constraints, even in cases like the figure-8 immersion of the Klein bottle where no simple coordinate system is available that makes the computation of the Laplacian operator in $(r,v,\theta)$ coordinates trivial. This is because the gradient fails to be defined there, as the normal vector cannot be defined at the intersection in the figure-8. Instead, we can rely on the constraint operator to force the dynamics to lie on the surface of the bottle and use the standard Laplacian in Cartesian coordinates.

As a final point of discussion, let us consider applying these ideas to simulate a universal Hamiltonian with a quantum circuit.  A universal Hamiltonian is a Hamiltonian such that the groundstate of the Hamiltonian encodes a quantum superposition of the history of a quantum computer via a clockstate of the form $\frac{1}{\sqrt{T}}\sum_t \ket{t} \ket{\psi(t)}$ where $\ket{\psi(t)}$ is the state of the quantum computer after $t$ gates have been applied to it~\cite{kempe20033,aharonov2008adiabatic,osborne2012hamiltonian,kohler2022translationally}.  In order to minimize the locality needed by these constructions, techniques such as ``perturbative gadgets'' are employed which allow restricted interactions such as $2$-local ones to simulate the action of a Hamiltonian of greater $k$-locality.  It is tempting therefore to ask whether our techniques could be used to accelerate the simulation of these constrained Hamiltonians.

As an example, the work of~\cite{kohler2022translationally} shows that a translationally invariant $1$D Hamiltonian of the following form for parameters $T$, $\Delta$ is universal
\begin{equation}
    H = \sum_{\langle i,j \rangle} \Delta h_{ij}^{(3)} + T\sum_i h_i^{(2)},
\end{equation}
where each $h_{ij}^{(3)}$ is a two-body translationally invariant Hamiltonian and each $h_i$ is a translationally invariant one-body Hamiltonian.  At first glance, the latter term appears to be fast forwardable.  This is significant, because the value $T$ corresponds to the number of gates employed in the circuit.  Thus we would be able to fast forward an arbitrary calculation if this were, by itself, true.

However, the value of $\Delta$ needed to provide a close approximation to the dynamics generically dominates the remaining term in~\cite{kohler2022translationally}. In particular if we demand a simulation error on the order of $\epsilon$, then it suffices to take $\Delta \in O(T^4/\epsilon)$ (for all other simulation parameters fixed).  Thus the translationally invariant $2$-body term dominates asymptotically and even if were possible to fast-forward the simulation of this Hamiltonian, the best case scenario would lead to a method that has scaling $O(T\log(1/\epsilon))$ from the one-body term. However, this construction is not self-evidently fast-forwardable and so a polynomial improvement is expected at best from transitioning to the interaction frame of the two-body operator.
\section{Conclusions}

We have developed novel simulation protocols that combine the standard simulation protocols of Trotterization, continuous qDRIFT, and qubitization in the interaction picture to simulate time-independent Hamiltonians. By exploiting the interaction picture, we can enter into the interaction frame of a fast-forwardable term with large or unbounded norm. Continuous qDRIFT is used to split the resulting time-ordered exponential into a product of time-independent exponentials with bounds proven for the number of time steps needed to achieve a desired error $\epsilon$. In the case of Hamiltonians with underlying commutator structure, Trotterizing first can reduce the query complexity further. Qubitization is then used for implementing the final time-independent exponentials, though other simulation techniques can be used. 

The hybrid protocol using Trotterization before continuous qDRIFT in the interaction frame of a fast-forwardable term in the Hamiltonian has a query complexity of $O(t^2(c_I + \sum_{k \neq l}^L \|H_k\|^2_{\infty})/\epsilon)$, where $c_I$ depends on the sum of norms of commutators. For Hamiltonian simulation problems with commutator structure, this is a drastic improvement over the complexity $O(\|H_k\|^2_{\infty,1,1}/\epsilon)$ obtained from directly employing conventional qDRIFT methods to a linear combination query model. The qubitization and continuous qDRIFT hybrid I.P. protocol has a query complexity bounded by $\tilde{O}(\lambda_{\alpha} t + \|H_{\alpha}\|^2_{\infty} t^2/\epsilon)$, where the quantities $\lambda_{\alpha}$ and $H_{\alpha}$ only involve the terms in the interaction Hamiltonian. If the term selected for the interaction frame is unbounded or of large operator norm, this again yields an improvement in the scaling with the $\ell^1$-norm of $H$ compared to qubitization.  Our approach does not require a complicated clock construction either, which makes it more practical than truncated Dyson series methods~\cite{low2018hamiltonian}.

Direct application of these techniques to the Schwinger Model yield a logarithmic scaling in the electric field cutoff $\Lambda$ for the query complexity. For the Hamiltonian model of collective neutrino oscillations, the query complexity is independent of the typically large constant $\lambda = \sqrt{2}G_f n_e$ representing the electron density, with the same scaling with respect to other parameters compared to conventional Trotter-Suzuki methods. The scaling with these parameters outperforms those achieved by current simulation methods. 

Further applications of these methods appear in simulating constrained dynamics. We show that the magnitude of the constraint term in the Hamiltonian needs to be prohibitively large to apply such a constraint using traditional simulation methods, such as qubitization.  However, using our approaches we can simulate the dynamics using a number of gate operations that (for certain constraints) is independent of the magnitude of the constraint.  This allows approximation methods similar to Trotter-Suzuki simulations to be employed while guaranteeing that the simulation does not break important symmetries present in the underlying dynamics (such as Gauss' law).

Another interesting fact to note is that even when Trotter formulas are used for the entire simulation, transforming to the interaction picture can have an advantage over performing the simulation in the laboratory frame.  This is because Trotter formulas have costs that scale with fractional powers of the derivatives and lead to costs that are linear in the strength of the interaction term, rather than a super-linear function as would be expected from a simulation in the laboratory frame~\cite{wiebe2010higher,childs2021theory}.  Although hybrid methods that provide $L^1$-norm scaling are shown to be advantageous in this regard, this advantage can be useful and may lead to improved methods to reduce the cost of simulation purely within the Trotter-Suzuki formalism wherein the structure of commutators can be more easily exploited.

These hybrid techniques are primarily useful in contexts where there are not only terms of large operator norm in a Hamiltonian but when those terms are diagonalizable, one-sparse, or more generally fast-forwardable. However, situations often arise in quantum simulation where it might be desirable to enter the interaction frame of terms that are not fast-forwardable, such as the hopping term $H_h$ of the Schwinger model in the continuum limit. As a na\"ive application of the present methods would involve doubling the number of times the non fast-forwardable term would be need to be simulated (see equation~\eqref{eq:expintHam}), additional work is needed to develop interaction picture algorithms, hybrid or otherwise, that are more optimized with respect to parameters that define certain physical regimes of interest. 

\begin{acknowledgements}
We thank Martin Savage for useful discussions. This work was supported in part by the U.S. Department of Energy, Office of Science, Office of Nuclear Physics, Inqubator for Quantum Simulation (IQuS) under Award Number DOE (NP) Award DE-SC0020970. It was further supported by a grant from Google research award, and NW's theoretical work on this project was  supported by the U.S. Department of Energy, Office of Science, National Quantum Information Science Research Centers, Co-Design Center for Quantum Advantage under contract number DE-SC0012704.
\end{acknowledgements}

\bibliographystyle{unsrtnat}
\bibliography{ipref}

\begin{thebibliography}{70}
\providecommand{\natexlab}[1]{#1}
\providecommand{\url}[1]{\texttt{#1}}
\expandafter\ifx\csname urlstyle\endcsname\relax
  \providecommand{\doi}[1]{doi: #1}\else
  \providecommand{\doi}{doi: \begingroup \urlstyle{rm}\Url}\fi

\bibitem[Feynman(1982)]{feynmansimulating}
Richard~P. Feynman.
\newblock Simulating physics with computers.
\newblock \emph{International Journal of Theoretical Physics}, 21\penalty0
  (6):\penalty0 467--488, 1982.
\newblock ISSN 1572-9575.
\newblock \doi{10.1007/BF02650179}.

\bibitem[Lloyd(1996)]{lloyd1996universal}
Seth Lloyd.
\newblock Universal quantum simulators.
\newblock \emph{Science}, 273\penalty0 (5278):\penalty0 1073--1078, 1996.
\newblock \doi{10.1126/science.273.5278.1073}.

\bibitem[Aspuru-Guzik et~al.(2005)Aspuru-Guzik, Dutoi, Love, and
  Head-Gordon]{aspuru2005simulated}
Al{\'a}n Aspuru-Guzik, Anthony~D Dutoi, Peter~J Love, and Martin Head-Gordon.
\newblock Simulated quantum computation of molecular energies.
\newblock \emph{Science}, 309\penalty0 (5741):\penalty0 1704--1707, 2005.
\newblock \doi{10.1126/science.1113479}.

\bibitem[Reiher et~al.(2017)Reiher, Wiebe, Svore, Wecker, and
  Troyer]{reiher2017elucidating}
Markus Reiher, Nathan Wiebe, Krysta~M Svore, Dave Wecker, and Matthias Troyer.
\newblock Elucidating reaction mechanisms on quantum computers.
\newblock \emph{Proceedings of the National Academy of Sciences}, 114\penalty0
  (29):\penalty0 7555--7560, 2017.
\newblock \doi{10.1073/pnas.1619152114}.

\bibitem[Jordan et~al.(2012)Jordan, Lee, and Preskill]{jordan2012quantum}
Stephen~P Jordan, Keith~SM Lee, and John Preskill.
\newblock Quantum algorithms for quantum field theories.
\newblock \emph{Science}, 336\penalty0 (6085):\penalty0 1130--1133, 2012.
\newblock \doi{10.1126/science.1217069}.

\bibitem[Roggero et~al.(2020)Roggero, Li, Carlson, Gupta, and
  Perdue]{PhysRevD.101.074038}
Alessandro Roggero, Andy C.~Y. Li, Joseph Carlson, Rajan Gupta, and Gabriel~N.
  Perdue.
\newblock Quantum computing for neutrino-nucleus scattering.
\newblock \emph{Phys. Rev. D}, 101:\penalty0 074038, Apr 2020.
\newblock \doi{10.1103/PhysRevD.101.074038}.

\bibitem[Berry et~al.(2007)Berry, Ahokas, Cleve, and
  Sanders]{berry2007efficient}
Dominic~W Berry, Graeme Ahokas, Richard Cleve, and Barry~C Sanders.
\newblock Efficient quantum algorithms for simulating sparse hamiltonians.
\newblock \emph{Communications in Mathematical Physics}, 270\penalty0
  (2):\penalty0 359--371, 2007.
\newblock \doi{https://doi.org/10.1007/s00220-006-0150-x}.

\bibitem[Wiebe et~al.(2010)Wiebe, Berry, H{\o}yer, and
  Sanders]{wiebe2010higher}
Nathan Wiebe, Dominic Berry, Peter H{\o}yer, and Barry~C Sanders.
\newblock Higher order decompositions of ordered operator exponentials.
\newblock \emph{Journal of Physics A: Mathematical and Theoretical},
  43\penalty0 (6):\penalty0 065203, 2010.
\newblock \doi{10.1088/1751-8113/43/6/065203}.

\bibitem[Poulin et~al.(2011)Poulin, Qarry, Somma, and Verstraete]{Poulin_2011}
David Poulin, Angie Qarry, Rolando Somma, and Frank Verstraete.
\newblock Quantum simulation of time-dependent hamiltonians and the convenient
  illusion of hilbert space.
\newblock \emph{Physical Review Letters}, 106\penalty0 (17), Apr 2011.
\newblock ISSN 1079-7114.
\newblock \doi{10.1103/physrevlett.106.170501}.

\bibitem[Childs et~al.(2021)Childs, Su, Tran, Wiebe, and Zhu]{childs2021theory}
Andrew~M Childs, Yuan Su, Minh~C Tran, Nathan Wiebe, and Shuchen Zhu.
\newblock Theory of trotter error with commutator scaling.
\newblock \emph{Physical Review X}, 11\penalty0 (1):\penalty0 011020, 2021.
\newblock \doi{10.1103/PhysRevX.11.011020}.

\bibitem[Low and Chuang(2017)]{Low_2017}
Guang~Hao Low and Isaac~L. Chuang.
\newblock Optimal hamiltonian simulation by quantum signal processing.
\newblock \emph{Phys. Rev. Lett.}, 118:\penalty0 010501, Jan 2017.
\newblock \doi{10.1103/PhysRevLett.118.010501}.

\bibitem[Low and Chuang(2019)]{Low_2019}
Guang~Hao Low and Isaac~L. Chuang.
\newblock Hamiltonian simulation by qubitization.
\newblock \emph{Quantum}, 3:\penalty0 163, Jul 2019.
\newblock ISSN 2521-327X.
\newblock \doi{10.22331/q-2019-07-12-163}.

\bibitem[Gily{\'e}n et~al.(2019)Gily{\'e}n, Su, Low, and
  Wiebe]{gilyen2019quantum}
Andr{\'a}s Gily{\'e}n, Yuan Su, Guang~Hao Low, and Nathan Wiebe.
\newblock Quantum singular value transformation and beyond: exponential
  improvements for quantum matrix arithmetics.
\newblock In \emph{Proceedings of the 51st Annual ACM SIGACT Symposium on
  Theory of Computing}, pages 193--204, 2019.
\newblock \doi{10.1145/3313276.3316366}.

\bibitem[Berry et~al.(2018)Berry, Kieferov{\'a}, Scherer, Sanders, Low, Wiebe,
  Gidney, and Babbush]{berry2018improved}
Dominic~W Berry, M{\'a}ria Kieferov{\'a}, Artur Scherer, Yuval~R Sanders,
  Guang~Hao Low, Nathan Wiebe, Craig Gidney, and Ryan Babbush.
\newblock Improved techniques for preparing eigenstates of fermionic
  hamiltonians.
\newblock \emph{npj Quantum Information}, 4\penalty0 (1):\penalty0 1--7, 2018.
\newblock \doi{10.1038/s41534-018-0071-5}.

\bibitem[Poulin et~al.(2018)Poulin, Kitaev, Steiger, Hastings, and
  Troyer]{poulin2018quantum}
David Poulin, Alexei Kitaev, Damian~S Steiger, Matthew~B Hastings, and Matthias
  Troyer.
\newblock Quantum algorithm for spectral measurement with a lower gate count.
\newblock \emph{Physical review letters}, 121\penalty0 (1):\penalty0 010501,
  2018.
\newblock \doi{10.1103/PhysRevLett.121.010501}.

\bibitem[Martyn et~al.(2021)Martyn, Rossi, Tan, and Chuang]{martyn2021grand}
John~M. Martyn, Zane~M. Rossi, Andrew~K. Tan, and Isaac~L. Chuang.
\newblock Grand unification of quantum algorithms.
\newblock \emph{{PRX} Quantum}, 2\penalty0 (4), dec 2021.
\newblock \doi{10.1103/prxquantum.2.040203}.

\bibitem[Dong et~al.(2021)Dong, Meng, Whaley, and Lin]{dong2021}
Yulong Dong, Xiang Meng, K~Birgitta Whaley, and Lin Lin.
\newblock Efficient phase-factor evaluation in quantum signal processing.
\newblock \emph{Physical Review A}, 103\penalty0 (4):\penalty0 042419, 2021.
\newblock \doi{10.1103/PhysRevA.103.042419}.

\bibitem[Childs and Wiebe(2012)]{childs2012hamiltonian}
Andrew~M Childs and Nathan Wiebe.
\newblock Hamiltonian simulation using linear combinations of unitary
  operations.
\newblock \emph{Quantum Information \& Computation}, 12\penalty0
  (11-12):\penalty0 901--924, 2012.
\newblock \doi{10.26421/qic12.11-12}.

\bibitem[Berry et~al.(2014)Berry, Childs, Cleve, Kothari, and
  Somma]{berry2014exponential}
Dominic~W Berry, Andrew~M Childs, Richard Cleve, Robin Kothari, and Rolando~D
  Somma.
\newblock Exponential improvement in precision for simulating sparse
  hamiltonians.
\newblock In \emph{Proceedings of the forty-sixth annual ACM symposium on
  Theory of computing}, pages 283--292, 2014.
\newblock \doi{10.1145/2591796.2591854}.

\bibitem[Kieferov{\'a} et~al.(2019)Kieferov{\'a}, Scherer, and
  Berry]{kieferova2019simulating}
M{\'a}ria Kieferov{\'a}, Artur Scherer, and Dominic~W Berry.
\newblock Simulating the dynamics of time-dependent hamiltonians with a
  truncated dyson series.
\newblock \emph{Physical Review A}, 99\penalty0 (4):\penalty0 042314, 2019.
\newblock \doi{10.1103/PhysRevA.99.042314}.

\bibitem[Low and Wiebe(2018)]{low2018hamiltonian}
Guang~Hao Low and Nathan Wiebe.
\newblock Hamiltonian simulation in the interaction picture.
\newblock \emph{arXiv preprint arXiv:1805.00675}, 2018.
\newblock \doi{10.48550/ARXIV.1805.00675}.

\bibitem[Su et~al.(2021)Su, Berry, Wiebe, Rubin, and Babbush]{su2021fault}
Yuan Su, Dominic~W Berry, Nathan Wiebe, Nicholas Rubin, and Ryan Babbush.
\newblock Fault-tolerant quantum simulations of chemistry in first
  quantization.
\newblock \emph{arXiv preprint arXiv:2105.12767}, 2021.
\newblock URL \url{https://doi.org/10.48550/arXiv.2105.12767}.

\bibitem[Campbell(2019)]{Campbell_2019}
Earl Campbell.
\newblock Random compiler for fast hamiltonian simulation.
\newblock \emph{Physical Review Letters}, 123\penalty0 (7), Aug 2019.
\newblock ISSN 1079-7114.
\newblock \doi{10.1103/physrevlett.123.070503}.
\newblock URL \url{http://dx.doi.org/10.1103/PhysRevLett.123.070503}.

\bibitem[Berry et~al.(2020)Berry, Childs, Su, Wang, and Wiebe]{berry2020time}
Dominic~W Berry, Andrew~M Childs, Yuan Su, Xin Wang, and Nathan Wiebe.
\newblock Time-dependent hamiltonian simulation with $l1$-norm scaling.
\newblock \emph{Quantum}, 4:\penalty0 254, 2020.
\newblock \doi{10.22331/q-2020-04-20-254}.

\bibitem[Babbush et~al.(2018)Babbush, Gidney, Berry, Wiebe, McClean, Paler,
  Fowler, and Neven]{Babbush_2018}
Ryan Babbush, Craig Gidney, Dominic~W. Berry, Nathan Wiebe, Jarrod McClean,
  Alexandru Paler, Austin Fowler, and Hartmut Neven.
\newblock Encoding electronic spectra in quantum circuits with linear t
  complexity.
\newblock \emph{Physical Review X}, 8\penalty0 (4), Oct 2018.
\newblock ISSN 2160-3308.
\newblock \doi{10.1103/physrevx.8.041015}.
\newblock URL \url{http://dx.doi.org/10.1103/PhysRevX.8.041015}.

\bibitem[Jordan(1875)]{Jordan1875}
Camille Jordan.
\newblock Essai sur la géométrie à $n$ dimensions.
\newblock \emph{Bulletin de la Société Mathématique de France}, 3:\penalty0
  103--174, 1875.
\newblock URL \url{http://eudml.org/doc/85325}.

\bibitem[Low et~al.(2016)Low, Yoder, and Chuang]{Low_2016}
Guang~Hao Low, Theodore~J. Yoder, and Isaac~L. Chuang.
\newblock Methodology of resonant equiangular composite quantum gates.
\newblock \emph{Physical Review X}, 6\penalty0 (4), Dec 2016.
\newblock ISSN 2160-3308.
\newblock \doi{10.1103/physrevx.6.041067}.
\newblock URL \url{http://dx.doi.org/10.1103/PhysRevX.6.041067}.

\bibitem[Chao et~al.(2020)Chao, Ding, Gilyen, Huang, and
  Szegedy]{chao2020finding}
Rui Chao, Dawei Ding, Andras Gilyen, Cupjin Huang, and Mario Szegedy.
\newblock Finding angles for quantum signal processing with machine precision.
\newblock \emph{arXiv preprint arXiv:2003.02831}, 2020.
\newblock URL \url{https://doi.org/10.48550/arXiv.2003.02831}.

\bibitem[Haah(2019)]{haah2019product}
Jeongwan Haah.
\newblock Product decomposition of periodic functions in quantum signal
  processing.
\newblock \emph{Quantum}, 3:\penalty0 190, 2019.
\newblock \doi{10.22331/q-2019-10-07-190}.

\bibitem[Sakurai and Napolitano(2017)]{sakurai_napolitano_2017}
J.~J. Sakurai and Jim Napolitano.
\newblock \emph{Modern Quantum Mechanics}.
\newblock Cambridge University Press, 2 edition, 2017.
\newblock \doi{10.1017/9781108499996}.

\bibitem[Weinberg(1995)]{weinberg_1995}
Steven Weinberg.
\newblock \emph{The Quantum Theory of Fields}, volume~1.
\newblock Cambridge University Press, 1995.
\newblock \doi{10.1017/CBO9781139644167}.

\bibitem[Nielsen and Chuang(2011)]{MikeIke}
Michael~A. Nielsen and Isaac~L. Chuang.
\newblock \emph{Quantum Computation and Quantum Information: 10th Anniversary
  Edition}.
\newblock Cambridge University Press, USA, 10th edition, 2011.
\newblock ISBN 1107002176.

\bibitem[Wecker et~al.(2015)Wecker, Hastings, Wiebe, Clark, Nayak, and
  Troyer]{wecker2015solving}
Dave Wecker, Matthew~B Hastings, Nathan Wiebe, Bryan~K Clark, Chetan Nayak, and
  Matthias Troyer.
\newblock Solving strongly correlated electron models on a quantum computer.
\newblock \emph{Physical Review A}, 92\penalty0 (6):\penalty0 062318, 2015.
\newblock \doi{10.1103/PhysRevA.92.062318}.

\bibitem[Schwinger(1962)]{Schwinger1962}
Julian Schwinger.
\newblock Gauge invariance and mass. ii.
\newblock \emph{Phys. Rev.}, 128:\penalty0 2425--2429, Dec 1962.
\newblock \doi{10.1103/PhysRev.128.2425}.

\bibitem[Coleman et~al.(1975)Coleman, Jackiw, and Susskind]{COLEMAN1975}
Sidney Coleman, R~Jackiw, and Leonard Susskind.
\newblock Charge shielding and quark confinement in the massive schwinger
  model.
\newblock \emph{Annals of Physics}, 93\penalty0 (1):\penalty0 267--275, 1975.
\newblock ISSN 0003-4916.
\newblock \doi{https://doi.org/10.1016/0003-4916(75)90212-2}.

\bibitem[Bañuls et~al.(2013)Bañuls, Cichy, Cirac, and Jansen]{Ba_uls_2013}
M.C. Bañuls, K.~Cichy, J.I. Cirac, and K.~Jansen.
\newblock The mass spectrum of the schwinger model with matrix product states.
\newblock \emph{Journal of High Energy Physics}, 2013\penalty0 (11), Nov 2013.
\newblock ISSN 1029-8479.
\newblock \doi{10.1007/jhep11(2013)158}.
\newblock URL \url{http://dx.doi.org/10.1007/JHEP11(2013)158}.

\bibitem[Pichler et~al.(2016)Pichler, Dalmonte, Rico, Zoller, and
  Montangero]{Pichler2016}
T.~Pichler, M.~Dalmonte, E.~Rico, P.~Zoller, and S.~Montangero.
\newblock Real-time dynamics in u(1) lattice gauge theories with tensor
  networks.
\newblock \emph{Phys. Rev. X}, 6:\penalty0 011023, Mar 2016.
\newblock \doi{10.1103/PhysRevX.6.011023}.

\bibitem[Hauke et~al.(2013)Hauke, Marcos, Dalmonte, and Zoller]{Hauke2013}
P.~Hauke, D.~Marcos, M.~Dalmonte, and P.~Zoller.
\newblock Quantum simulation of a lattice schwinger model in a chain of trapped
  ions.
\newblock \emph{Phys. Rev. X}, 3:\penalty0 041018, Nov 2013.
\newblock \doi{10.1103/PhysRevX.3.041018}.

\bibitem[Martinez et~al.(2016)Martinez, Muschik, Schindler, Nigg, Erhard, Heyl,
  Hauke, Dalmonte, Monz, Zoller, and et~al.]{Martinez_2016}
Esteban~A. Martinez, Christine~A. Muschik, Philipp Schindler, Daniel Nigg,
  Alexander Erhard, Markus Heyl, Philipp Hauke, Marcello Dalmonte, Thomas Monz,
  Peter Zoller, and et~al.
\newblock Real-time dynamics of lattice gauge theories with a few-qubit quantum
  computer.
\newblock \emph{Nature}, 534\penalty0 (7608):\penalty0 516–519, Jun 2016.
\newblock ISSN 1476-4687.
\newblock \doi{10.1038/nature18318}.
\newblock URL \url{http://dx.doi.org/10.1038/nature18318}.

\bibitem[Klco et~al.(2018)Klco, Dumitrescu, McCaskey, Morris, Pooser, Sanz,
  Solano, Lougovski, and Savage]{Klco2018}
N.~Klco, E.~F. Dumitrescu, A.~J. McCaskey, T.~D. Morris, R.~C. Pooser, M.~Sanz,
  E.~Solano, P.~Lougovski, and M.~J. Savage.
\newblock Quantum-classical computation of schwinger model dynamics using
  quantum computers.
\newblock \emph{Phys. Rev. A}, 98:\penalty0 032331, Sep 2018.
\newblock \doi{10.1103/PhysRevA.98.032331}.

\bibitem[Kogut and Susskind(1975)]{Kogut1975}
John Kogut and Leonard Susskind.
\newblock Hamiltonian formulation of wilson's lattice gauge theories.
\newblock \emph{Phys. Rev. D}, 11:\penalty0 395--408, Jan 1975.
\newblock \doi{10.1103/PhysRevD.11.395}.

\bibitem[Banks et~al.(1976)Banks, Susskind, and Kogut]{Banks1976}
T.~Banks, Leonard Susskind, and John Kogut.
\newblock Strong-coupling calculations of lattice gauge theories: (1 +
  1)-dimensional exercises.
\newblock \emph{Phys. Rev. D}, 13:\penalty0 1043--1053, Feb 1976.
\newblock \doi{10.1103/PhysRevD.13.1043}.

\bibitem[Sanders et~al.(2020)Sanders, Berry, Costa, Tessler, Wiebe, Gidney,
  Neven, and Babbush]{Sanders_2020}
Yuval~R. Sanders, Dominic~W. Berry, Pedro~C.S. Costa, Louis~W. Tessler, Nathan
  Wiebe, Craig Gidney, Hartmut Neven, and Ryan Babbush.
\newblock Compilation of fault-tolerant quantum heuristics for combinatorial
  optimization.
\newblock \emph{PRX Quantum}, 1\penalty0 (2), Nov 2020.
\newblock ISSN 2691-3399.
\newblock \doi{10.1103/prxquantum.1.020312}.
\newblock URL \url{http://dx.doi.org/10.1103/PRXQuantum.1.020312}.

\bibitem[He et~al.(2017)He, Luo, Zhang, Wang, and Wang]{He_Toffoli}
Yong He, Mingxing Luo, E.~Zhang, Hong-Ke Wang, and Xiao-Feng Wang.
\newblock Decompositions of n-qubit toffoli gates with linear circuit
  complexity.
\newblock \emph{International Journal of Theoretical Physics}, 56, 07 2017.
\newblock \doi{10.1007/s10773-017-3389-4}.

\bibitem[Bausch(2020)]{bausch2021}
Johannes Bausch.
\newblock Fast black-box quantum state preparation, 2020.
\newblock URL \url{https://arxiv.org/abs/2009.10709}.

\bibitem[Jones(2013)]{jones2013low}
Cody Jones.
\newblock Low-overhead constructions for the fault-tolerant toffoli gate.
\newblock \emph{Physical Review A}, 87\penalty0 (2):\penalty0 022328, 2013.
\newblock \doi{10.1103/PhysRevA.87.022328}.

\bibitem[Cuccaro et~al.(2004)Cuccaro, Draper, Kutin, and
  Moulton]{cuccaro2004new}
Steven~A Cuccaro, Thomas~G Draper, Samuel~A Kutin, and David~Petrie Moulton.
\newblock A new quantum ripple-carry addition circuit.
\newblock \emph{arXiv preprint quant-ph/0410184}, 2004.
\newblock URL \url{https://doi.org/10.48550/arXiv.quant-ph/0410184}.

\bibitem[Shaw et~al.(2020)Shaw, Lougovski, Stryker, and Wiebe]{Shaw_2020}
Alexander~F. Shaw, Pavel Lougovski, Jesse~R. Stryker, and Nathan Wiebe.
\newblock Quantum algorithms for simulating the lattice schwinger model.
\newblock \emph{Quantum}, 4:\penalty0 306, Aug 2020.
\newblock ISSN 2521-327X.
\newblock \doi{10.22331/q-2020-08-10-306}.
\newblock URL \url{http://dx.doi.org/10.22331/q-2020-08-10-306}.

\bibitem[Pantaleone(1992)]{PANTALEONE1992}
James Pantaleone.
\newblock Neutrino oscillations at high densities.
\newblock \emph{Physics Letters B}, 287\penalty0 (1):\penalty0 128 -- 132,
  1992.
\newblock ISSN 0370-2693.
\newblock \doi{https://doi.org/10.1016/0370-2693(92)91887-F}.

\bibitem[Duan et~al.(2006)Duan, Fuller, Carlson, and Qian]{Duan2006}
Huaiyu Duan, George~M. Fuller, J.~Carlson, and Yong-Zhong Qian.
\newblock Coherent development of neutrino flavor in the supernova environment.
\newblock \emph{Phys. Rev. Lett.}, 97:\penalty0 241101, Dec 2006.
\newblock \doi{10.1103/PhysRevLett.97.241101}.

\bibitem[Duan et~al.(2010)Duan, Fuller, and Qian]{Duan2010}
Huaiyu Duan, George~M. Fuller, and Yong-Zhong Qian.
\newblock Collective neutrino oscillations.
\newblock \emph{Annual Review of Nuclear and Particle Science}, 60\penalty0
  (1):\penalty0 569--594, 2010.
\newblock \doi{10.1146/annurev.nucl.012809.104524}.
\newblock URL \url{https://doi.org/10.1146/annurev.nucl.012809.104524}.

\bibitem[Chakraborty et~al.(2016)Chakraborty, Hansen, Izaguirre, and
  Raffelt]{Chakraborty2016b}
Sovan Chakraborty, Rasmus Hansen, Ignacio Izaguirre, and Georg Raffelt.
\newblock Collective neutrino flavor conversion: Recent developments.
\newblock \emph{Nuclear Physics B}, 908:\penalty0 366 -- 381, 2016.
\newblock ISSN 0550-3213.
\newblock \doi{https://doi.org/10.1016/j.nuclphysb.2016.02.012}.
\newblock Neutrino Oscillations: Celebrating the Nobel Prize in Physics 2015.

\bibitem[Rrapaj(2020)]{Rrapaj2020}
Ermal Rrapaj.
\newblock Exact solution of multiangle quantum many-body collective
  neutrino-flavor oscillations.
\newblock \emph{Phys. Rev. C}, 101:\penalty0 065805, Jun 2020.
\newblock \doi{10.1103/PhysRevC.101.065805}.

\bibitem[Cervia et~al.(2019)Cervia, Patwardhan, Balantekin, Coppersmith, and
  Johnson]{Cervia2019}
Michael~J. Cervia, Amol~V. Patwardhan, A.~B. Balantekin, S.~N. Coppersmith, and
  Calvin~W. Johnson.
\newblock Entanglement and collective flavor oscillations in a dense neutrino
  gas.
\newblock \emph{Phys. Rev. D}, 100:\penalty0 083001, Oct 2019.
\newblock \doi{10.1103/PhysRevD.100.083001}.

\bibitem[Roggero(2021{\natexlab{a}})]{Roggero2021A}
Alessandro Roggero.
\newblock Entanglement and many-body effects in collective neutrino
  oscillations.
\newblock \emph{Phys. Rev. D}, 104:\penalty0 103016, Nov 2021{\natexlab{a}}.
\newblock \doi{10.1103/PhysRevD.104.103016}.

\bibitem[Roggero(2021{\natexlab{b}})]{Roggero2021B}
Alessandro Roggero.
\newblock Dynamical phase transitions in models of collective neutrino
  oscillations.
\newblock \emph{Phys. Rev. D}, 104:\penalty0 123023, Dec 2021{\natexlab{b}}.
\newblock \doi{10.1103/PhysRevD.104.123023}.

\bibitem[Hall et~al.(2021)Hall, Roggero, Baroni, and Carlson]{Hall2021}
Benjamin Hall, Alessandro Roggero, Alessandro Baroni, and Joseph Carlson.
\newblock Simulation of collective neutrino oscillations on a quantum computer.
\newblock \emph{Phys. Rev. D}, 104:\penalty0 063009, Sep 2021.
\newblock \doi{10.1103/PhysRevD.104.063009}.

\bibitem[Yeter-Aydeniz et~al.(2021)Yeter-Aydeniz, Bangar, Siopsis, and
  Pooser]{yeteraydeniz2021collective}
Kübra Yeter-Aydeniz, Shikha Bangar, George Siopsis, and Raphael~C. Pooser.
\newblock Collective neutrino oscillations on a quantum computer.
\newblock \emph{Quantum Information Processing}, 2021.
\newblock URL \url{https://doi.org/10.1007/s11128-021-03348-x}.

\bibitem[Pehlivan et~al.(2011)Pehlivan, Balantekin, Kajino, and
  Yoshida]{Pehlivan2011}
Y.~Pehlivan, A.~B. Balantekin, Toshitaka Kajino, and Takashi Yoshida.
\newblock Invariants of collective neutrino oscillations.
\newblock \emph{Phys. Rev. D}, 84:\penalty0 065008, Sep 2011.
\newblock \doi{10.1103/PhysRevD.84.065008}.

\bibitem[Kivlichan et~al.(2018)Kivlichan, McClean, Wiebe, Gidney, Aspuru-Guzik,
  Chan, and Babbush]{Kivlichan2018}
Ian~D. Kivlichan, Jarrod McClean, Nathan Wiebe, Craig Gidney, Al\'an
  Aspuru-Guzik, Garnet Kin-Lic Chan, and Ryan Babbush.
\newblock Quantum simulation of electronic structure with linear depth and
  connectivity.
\newblock \emph{Phys. Rev. Lett.}, 120:\penalty0 110501, Mar 2018.
\newblock \doi{10.1103/PhysRevLett.120.110501}.

\bibitem[Oliveira and Terhal(2005)]{oliveira2005complexity}
Roberto Oliveira and Barbara~M. Terhal.
\newblock The complexity of quantum spin systems on a two-dimensional square
  lattice, 2005.
\newblock URL \url{https://doi.org/10.48550/ARXIV.QUANT-PH/0504050}.

\bibitem[Cao and Kais(2017)]{cao2017efficient}
Yudong Cao and Sabre Kais.
\newblock Efficient optimization of perturbative gadgets, 2017.
\newblock URL \url{https://doi.org/10.48550/arXiv.1709.02705}.

\bibitem[Horn and Johnson(2012)]{horn}
Roger~A Horn and Charles~R Johnson.
\newblock \emph{Matrix analysis}.
\newblock Cambridge university press, 2012.

\bibitem[Barahona(1982)]{barahona}
Francisco Barahona.
\newblock On the computational complexity of ising spin glass models.
\newblock \emph{Journal of Physics A: Mathematical and General}, 15\penalty0
  (10):\penalty0 3241, 1982.
\newblock \doi{10.1088/0305-4470/15/10/028}.

\bibitem[Childs et~al.(2003)Childs, Cleve, Deotto, Farhi, Gutmann, and
  Spielman]{childs2003exponential}
Andrew~M Childs, Richard Cleve, Enrico Deotto, Edward Farhi, Sam Gutmann, and
  Daniel~A Spielman.
\newblock Exponential algorithmic speedup by a quantum walk.
\newblock In \emph{Proceedings of the thirty-fifth annual ACM symposium on
  Theory of computing}, pages 59--68, 2003.
\newblock \doi{10.1145/780542.780552}.

\bibitem[Stryker(2019)]{stryker2019oracles}
Jesse~R Stryker.
\newblock Oracles for gauss's law on digital quantum computers.
\newblock \emph{Physical Review A}, 99\penalty0 (4):\penalty0 042301, 2019.
\newblock \doi{10.1103/PhysRevA.99.042301}.

\bibitem[Kempe and Regev(2003)]{kempe20033}
Julia Kempe and Oded Regev.
\newblock 3-local hamiltonian is qma-complete.
\newblock \emph{arXiv preprint quant-ph/0302079}, 2003.
\newblock URL \url{https://doi.org/10.48550/arXiv.quant-ph/0302079}.

\bibitem[Aharonov et~al.(2008)Aharonov, Van~Dam, Kempe, Landau, Lloyd, and
  Regev]{aharonov2008adiabatic}
Dorit Aharonov, Wim Van~Dam, Julia Kempe, Zeph Landau, Seth Lloyd, and Oded
  Regev.
\newblock Adiabatic quantum computation is equivalent to standard quantum
  computation.
\newblock \emph{SIAM review}, 50\penalty0 (4):\penalty0 755--787, 2008.
\newblock \doi{10.1137/080734479}.

\bibitem[Osborne(2012)]{osborne2012hamiltonian}
Tobias~J Osborne.
\newblock Hamiltonian complexity.
\newblock \emph{Reports on progress in physics}, 75\penalty0 (2):\penalty0
  022001, 2012.
\newblock \doi{10.1088/0034-4885/75/2/022001}.

\bibitem[Kohler et~al.(2022)Kohler, Piddock, Bausch, and
  Cubitt]{kohler2022translationally}
Tamara Kohler, Stephen Piddock, Johannes Bausch, and Toby Cubitt.
\newblock Translationally invariant universal quantum hamiltonians in 1d.
\newblock In \emph{Annales Henri Poincar{\'e}}, volume~23, pages 223--254.
  Springer, 2022.
\newblock \doi{10.1007/s00023-021-01111-7}.

\end{thebibliography}

\appendix

\section{Diamond Norm}
\label{section:diamond}
The diamond distance is often used as a measure of error between two quantum channels. It is defined as follows: 
\begin{equation}
d_{\diamond}(\mathcal{E},\mathcal{N}) = \frac{1}{2}||\mathcal{E} - \mathcal{N}||_{\diamond},
\end{equation} 
where $||\ldots||_{\diamond}$ is the diamond norm
\begin{equation}
||\mathcal{P}||_{\diamond} \coloneqq \mathrm{sup}_{\rho; ||\rho||_1=1} ||(\mathcal{P} \otimes I)(\rho)||_1
\end{equation} 
and $\mathcal{E}$ and $\mathcal{N}$ are two quantum channels or superoperators. Note that $I$ acts on the same size Hilbert space as $\mathcal{P}$ and $\rho$ is a density matrix. All operators here are expressed as square matrices and $\rho$ can represent states entangled with qubits that are not operated on. We then need an identity matrix to ``pad out" the missing dimensions so that $\mathcal{P} \otimes I$ can act sensibly upon $\rho$. 

The diamond norm is simply the trace distance but maximized over all possible input states and satisfies two key properties:
\begin{enumerate}[(1)]

\item Triangle inequality: $||\mathcal{A} + \mathcal{B}||_{\diamond} \leq ||A||_{\diamond} + ||\mathcal{B}||_{\diamond}$

\item Sub-multiplicativity: $||\mathcal{A} \mathcal{B}||_{\diamond} \leq ||\mathcal{A}||_{\diamond} ||\mathcal{B}||_{\diamond}$ 

\end{enumerate}

It follows from the definition of the diamond norm that if we apply the channel $\mathcal{E}$ and $\mathcal{N}$ to the quantum state $\sigma$, we have
\begin{equation}
d_{tr}(\mathcal{E}(\sigma), \mathcal{N}(\sigma)) = \frac{1}{2}||\mathcal{E}(\sigma) - \mathcal{N}(\sigma)||_1 \leq d_{\diamond}(\mathcal{E}, \mathcal{N}). \label{eq:tracediamond}
\end{equation}

The trace norm distance is important since it bounds the error in expectation values. To see this, consider the expression $|\text{Tr}(M \mathcal{E}(\sigma)) - \text{Tr}(M \mathcal{N}(\sigma))|$. The expectation value of an operator $M$ with respect to a state $\rho$ can be found by taking the trace of their product, i.e $\text{Tr}(M\rho)$. Thus, in the above expression, we first send a state $\sigma$ through our two channels. Then we find the expectation value of $M$ with respect to their outputs and take the absolute value of the difference to find the error in expectation values. 

We can bound this error in expectation values by the following inequalities 
\begin{align}
|\text{Tr}(M \mathcal{E}(\sigma)) - \text{Tr}(M \mathcal{N}(\sigma))| &= |\text{Tr}[M(\mathcal{E}(\sigma) - \mathcal{N}(\sigma))]| \leq 2 ||M|| d_{tr}(\mathcal{E}(\sigma), \mathcal{N}(\sigma)) \nonumber \\ 
&\leq 2 ||M|| d_{\diamond}(\mathcal{E},\mathcal{N}).
\end{align}

In the first inequality, the von-Neumann trace inequality $$|\text{Tr}(AB)| \leq \sum_{i=1}^n \alpha_i \beta_i$$ was used where $\alpha_i, \beta_i$ are the singular values of the operators $A$ and $B$ respectively. This inequality can be further bounded by recognizing that $\alpha_i \leq \alpha_{\text{max}}$ for all $i$, where $\alpha_{\text{max}}$ is the largest singular value of $A$. The largest singular value of $A$ is precisely $||A||_{\infty}$ so $$|\text{Tr}(AB)| \leq \sum_{i=1}^n \alpha_i \beta_i \leq \sum_{i=1}^n \alpha_{\text{max}} \beta_i = ||A||_{\infty} \ ||B||_1.$$ 

The second inequality in the above expression follows directly from the definition of the diamond norm. 

Now note that if we have a projection operator $P$, $P^{\dag} P = P$ since projection operators are Hermitian and square to themselves. Their eigenvalues are $1$ and $0$ so it immediately follows that $||P||_{\infty} = 1$. So if $M$ is a projection operator and we have $\varepsilon$ error in the diamond distance, then $$|\text{Tr}(M \mathcal{E}(\sigma)) - \text{Tr}(M \mathcal{N}(\sigma))| \leq 2\varepsilon.$$ 

This justifies the statement that measurement statistics are correct up to a factor of $2\varepsilon$ with an $\varepsilon$ error in diamond distance. 

\section{Notation for qDRIFT}
\label{section:notation}

We establish the following notational conventions from \cite{berry2020time} for describing the time-dependent qDRIFT scaling. Let $\alpha \in \bbC^L$ be a vector. The notation $||\alpha||_p$ represents the $l_p$ norm of $\alpha$ and we define a few cases as follows:

$$||\alpha||_1 \coloneqq \sum_{j=1}^L |\alpha_j|, \hspace{0.5cm} ||\alpha||_2 \coloneqq \sqrt{\sum_{j=1}^L |\alpha_j|^2}, \hspace{0.5cm} ||\alpha||_{\infty} \coloneqq \max\limits_{j \in \{1,2,\ldots,L\}} |\alpha_j|\;.$$ 

If $A$ is a matrix, $||A||_p$ denotes the Schatten-$p$ norm of $A$. A few important examples are: $$||A||_1 \coloneqq \text{Tr}(\sqrt{A^{\dag}A}), \hspace{0.5cm} ||A||_2 \coloneqq \sqrt{\text{Tr}(A^{\dag}A)}, \hspace{0.5cm} ||A||_{\infty} \coloneqq \max_{\ket{\psi}} ||A \ket{\psi}||_2\;.$$ 

If $f: [0, t] \rightarrow \bbC$ is a continuous function, $||f||_p$ denotes the $L^p$ norm of the function. Thus, $$||f||_1 \coloneqq \int_0^t d\tau |f(\tau)|, \hspace{0.5cm} ||f||_2 \coloneqq \sqrt{\int_0^t d\tau |f(\tau)|^2}, \hspace{0.5cm} ||f||_{\infty} \coloneqq \max\limits_{\tau \in [0,t]} |f(\tau)|\;.$$

These norms can be combined to obtain vector and operator-valued functions. Suppose $\alpha: [0,t] \rightarrow \bbC^L$ is a continuous vector-valued function with components at time $\tau$ denoted by $\alpha_j(\tau)$. $||\alpha||_{p,q}$ denotes taking the $l_p$ norm $||\alpha(\tau)||_p$ for all $\tau$ and computing the $L^q$ norm of the resulting scalar function, e.g. $$||\alpha||_{1,1} = \int_0^t d\tau \sum_{j=1}^L |\alpha_j|\;.$$

Similar reasoning applies when dealing with the Schatten $p$-norm of a time-dependent operator and then applying an $L^q$ norm to the resulting scalar function, i.e. $||A||_{p,q}$. For example,  $$||A||_{1,2} = \sqrt{\int_0^t d\tau \ (\text{Tr}\big(\sqrt{A^{\dag}A}\big))^2}\;.$$
Note that this notation, while compact, is not well suited for describing evolution within a sub-interval of the entire evolution.  In the event that a shorter evolution needs to be described, we explicitly use the integral expression over the domain in question.

For time-dependent linear combinations $A(\tau) = \sum_{l=1}^L A_l(\tau)$, the notation $||A||_{p,q,r}$ means taking the Schatten $p$-norm $||A_t(\tau)||_p$ of each term in the sum and applying the $l_q$ and $L^r$ norms to the resulting vector-valued functions, e.g. $$\|A\|_{1,1,\infty} \coloneqq \max_{\tau \in [0,t]} \sum_{l=1}^L \|A_l(\tau) \|_1\;.$$

\end{document}